\documentclass{comsoc2023}[a4paper]
\usepackage{amsthm}

\usepackage{graphicx}
%
%

\usepackage[ruled,linesnumbered]{algorithm2e} 
\usepackage{amssymb,amsmath}
\DeclareMathOperator*{\argmin}{arg\,min}

\title{Strategic Voting in the Context of Stable-Matching of Teams}
\author{Leora Schmerler and Noam Hazon and Sarit Kraus}
\begin{document}


\renewcommand{\algorithmcfname}{ALGORITHM}
\SetAlFnt{\small}
\SetAlCapFnt{\small}
\SetAlCapNameFnt{\small}
\SetAlCapHSkip{0pt}
\IncMargin{-\parindent}
\newcommand{\calF}{{\mathcal F}}
\newcommand{\calL}{{\mathcal L}}
\newcommand{\pos}[2]{{pos(#1,#2)}}
\newcommand{\todo}[1]{{\bf [{** Todo ** \small #1}]}}
\def\set#1{\{#1\}}
\newcommand{\shira}[1]{\textcolor{green}{[Shira: #1]}}

\newenvironment{Function}[1][htb]
  {\renewcommand{\algorithmcfname}{Procedure}
  \begin{algorithm}[#1]%
  
  }{\end{algorithm}}

\renewcommand{\algorithmcfname}{ALGORITHM}

\newtheorem{example}{Example}
\newtheorem{theorem}{Theorem}
\newtheorem{definition}{Definition}
\newtheorem{lemma}{Lemma}
\newtheorem{corollary}{Corollary}





\begin{abstract}
In the celebrated stable-matching problem, there are two sets of agents M and W, and the members of M only have preferences over the members of W and vice versa. It is usually assumed that each member of M and W is a single entity. However, there are many cases in which each member of M or W represents a team that consists of several individuals with common interests.
For example, students may need to be matched to professors for their final projects, but each project is carried out by a team of students. Thus, the students first form teams, and the matching is between teams of students and professors.  

When a team is considered as an agent from M or W, it needs to have a preference order that represents it. A voting rule is a natural mechanism for aggregating the preferences of the team members into a single preference order.
In this paper, we investigate the problem of strategic voting in the context of stable-matching of teams. Specifically, we assume that members of each team use the Borda rule for generating the preference order of the team. 
Then, the Gale-Shapley algorithm is used for finding a stable-matching, where the set M is the proposing side. 
We show that the single-voter manipulation problem can be solved in polynomial time, both when the team is from M and when it is from W. We show that the coalitional manipulation problem is computationally hard, but it can be solved approximately both when the team is from M and when it is from W. 
\end{abstract}



\section{Introduction}
%
Matching is the process in which agents from different sets are matched with each other. 
The theory of matching originated with the seminal work of Gale and Shapley \cite{gale1962college}, and since then intensive research has been conducted in this field. 
Notably, the theory of matching has also been successfully applied to many real-world applications including college admissions and school matching \cite{abdulkadirouglu2005new}, matching residents to hospitals \cite{roth1996nrmp},
and kidney exchange \cite{roth2004kidney}. 
A very common matching problem, which is also the problem that was studied by Gale and Shapley in their original paper, is the \textit{stable-matching} problem. In this problem there are two equally sized disjoint sets of agents, $M$ and $W$, and the members of $M$ have preferences over only the members of $W$, and vice versa. The goal is to find a stable bijection (i.e., matching) from the agents of $M$ to the agents of $W$, where the stability requirement is that no pair of agents prefers a match with each other over their matched
partners.
Many works have analyzed this setting, and they assume that each member of the sets $M$ and $W$ represents a single agent. However, there are many cases in which each member of $M$ or $W$ represents more than one individual \cite{perach2022stable}.

For example, suppose that teams of students need to be matched with professors who will serve as their advisors in their final projects. It is common that students form their teams based on friendship connections and common interests and then approach the professors. Therefore, each team is considered to be a single agent for the matching process: the professors may have different preferences regarding which team they would like to mentor, and the teams may have preferences regarding which professor they would like as their mentor. Clearly, even though the team is considered to be a single agent for the matching process, it is still composed of several students, and they may have different opinions regarding the appropriate mentor for their team. Thus, every team needs a mechanism that aggregates the students' opinions and outputs a single preference order that represents the team for the matching process, and a voting rule is a natural candidate.

Indeed, voters might benefit from reporting rankings different from their true ones, and this problem of manipulation also exists in the context of matching.
For example, suppose that there are $4$ possible professors, denoted by $p_1,p_2,p_3$ and $p_4$ and $4$ teams. Now, suppose that one of the students, denoted $r$, who is a member of one of the teams, prefers $p_1$ over $p_2$, $p_2$ over $p_3$, and $p_3$ over $p_4$. It is possible that $r$ will gain an (unauthorized) access to the preferences of the professors and to the preferences of the other teams. Since the matching algorithm is usually publicly known, $r$ might be able to reason that $p_3$ is matched with his team, but if $r$ votes strategically and misreports his preferences then $p_2$ will be matched with his team.

In this paper, we investigate the problem of strategic voting in the context of stable-matching of teams. We assume that the members of each team use the Borda rule as a \emph{social welfare function (SWF)}, which outputs a complete preference order. This preference order represents the team for the matching process. The agents then use the Gale-Shapley (GS) algorithm for finding a stable-matching.
In the GS algorithm, one set of agents makes proposals to the other set of agents, and it is assumed that $M$ is the proposing side and $W$ is the proposed-to side. The proposing side and proposed-to side are commonly
referred to as men and women, respectively.
Note that the GS algorithm treats the men and women differently.
Therefore, every manipulation problem in the context of stable-matching has two variants: one in which the teams are from the men's side, and another one in which the teams are from the women's side.
Moreover, we analyze both manipulation by a single voter and coalitional manipulation. In a single voter manipulation, the goal is to find a preference order for a single manipulator such that his team will be matched by the GS algorithm with a specific preferred agent. In the coalitional manipulation setting, there are several voters who collude and coordinate their votes so that an agreed upon agent will be matched with their team.

We begin by studying manipulation from the men's side, and show that the single voter manipulation problem can be solved in polynomial time. We then analyze the coalitional manipulation problem, and show that the problem is computationally hard. However, we provide a polynomial-time algorithm with the following guarantee: given a manipulable instance with $|R|$ manipulators, the algorithm finds a successful manipulation with at most one additional manipulator. 
We then study manipulation from the women's side. Manipulation here is more involved, and we propose different algorithms, but with almost the same computational complexity as in manipulation from the men's side. That is, the single voter manipulation problem can be solved in polynomial time, and the coalitional manipulation problem is computationally hard. Indeed, we provide a polynomial-time algorithm with the following guarantee: given a manipulable instance with $|R|$ manipulators, the algorithm finds a successful manipulation with at most two additional manipulators.

The contribution of this work is twofold. First, it provides an analysis of a voting manipulation in the context of stable-matching of teams, a problem that has not been investigated to date. Second, our work concerns the manipulation of Borda as an SWF, which has scarcely been investigated.
%
%
%
\section{Related Work}
The computational analysis of voting manipulation has been vastly studied in different settings. We refer the reader to the survey provided by Faliszewski and Procaccia \cite{faliszewski2010ai}, and the more recent survey by Conitzer and Walsh \cite{conitzer2016barriers}. However, most of the works on voting manipulation analyze the problem with no actual context, and where a voting rule
is used to output one winning candidate or a set of tied winning candidates (i.e., a social choice function).
In this work, we investigate manipulation of Borda as a SWF, which outputs a complete preference order of the candidates, and analyze it within the context of stable-matching.

Indeed, there are a few papers that investigate the manipulation of SWFs. The first work that directly deals with the manipulation of SWF was by Bossert and Storcken \cite{bossert1992strategy}, who assumed that a voter prefers one order over another if the former is closer to her own preferences than the latter according to the Kemeny distance. 
Bossert and Sprumont \cite{bossert2014strategy}
assumed that a voter prefers one order over another if the former is strictly between the latter and the voter's own preferences. Built on this definition, their work studies three classes of SWF that are not prone to manipulation (i.e., strategy-proof).
Dogan and Lainé \cite{dogan2016strategic} 
characterized the conditions to be imposed on SWFs so that if we extend the preferences of the voters to preferences over orders in specific ways, the SWFs will not be prone to manipulation. Our work also investigates the manipulation of SWF, but we analyze the SWF in the specific context of stable-matching. Therefore, unlike all of the above works, the preferences of the manipulators are well-defined and no additional assumptions are needed. The work that is closest to ours is that of Schmerler and Hazon \cite{schmerler2021strategic}. They assume that a positional scoring rule is used as a SWF, and study the manipulation of the SWF in the context of negotiation. 

The strategic aspects of the GS algorithm have previously been studied in the literature. 
It was first shown that reporting the true preferences is a weakly dominant strategy for men, but women may have an incentive to misreport their preferences \cite{dubins1981machiavelli,roth1982economics,gale1985ms}.
Teo et al. \cite{teo2001gale} provided a polynomial-time algorithm for computing the optimal manipulation by a woman.
Shen et al. \cite{shen2021coalitional} generalized this result to manipulation by a coalition of women.
For the proposing side, Dubins and Freedman \cite{dubins1981machiavelli} investigated the strategic actions of a coalition of men, and proved that there is no manipulation that is a strict improvement for every member of the coalition. Huang \cite{huang2006cheating} studied manipulation that is a weak improvement for every member of a coalition of men.
Hosseini et al. \cite{hosseiniaccomplice} introduced a new type of strategic action: manipulation through an accomplice. In this manipulation, a man misreports his preferences in behalf of a woman, and Hosseini et al. provided a polynomial time algorithm for computing an optimal accomplice manipulation, and they further generalized this model in \cite{hosseini2022two}.
All of these works consider the manipulation of the GS algorithm, while we study the manipulation of Borda as a SWF. Indeed, the output of the SWF is used (as part of the input) for the GS algorithm.
As an alternative to the GS algorithm, Pini et al. \cite{pini2011manipulation} show how voting rules which are NP-hard to manipulate can be used to build stable-matching procedures, which are themselves NP-hard to manipulate.


\section{Preliminaries} \label{sec:prem}
We assume that there are two equally sized disjoint sets of agents, $M$ and $W$. Let $k=|M|=|W|$. The members of $M$ have preferences over only the members of $W$, and vice versa. The preference of each $m \in M$, denoted by $\succ_m$, is a strict total order over the agents in $W$. The preference profile $\succ_M$ is a vector $(\succ_{m_1},\succ_{m_2},\ldots,\succ_{m_k})$. The preference order $\succ_w$ and the preference profile $\succ_W$ are defined analogously. We will refer to the agents of $M$ as men and to the agents of $W$ as women. 

A matching is a mapping $\mu:M \cup W \rightarrow M \cup W$, such that $\mu(m) \in W$ for all $m \in M$, $\mu(w) \in M$ for all $w \in W$, and $\mu(m) = w$ if and only if $\mu(w) = m$. A stable-matching is a matching in which there is no blocking pair. That is, there is no man $m$ and woman $w$ such that $w \succ_m \mu(m)$ and $m \succ_w \mu(w)$. 
The GS algorithm finds a stable-matching, and it works as follows. There are multiple rounds, and each round is composed of a proposal phase followed by a rejection phase.
In a proposal phase, each unmatched man proposes to his favorite woman from among those who have not yet rejected him (regardless of whether the woman is already matched). In the rejection phase, each woman tentatively accepts her favorite proposal and rejects all of the other proposals. The
algorithm terminates when no further proposals can be made.
Let $o(w)$ be the set of men that proposed to $w$ in one of the rounds of the GS algorithm.

In our setting, (at least) one of the agents of $M$ ($W$) is a team that runs an election for determining its preferences.
That is, there is a man $\hat{m}$ (woman $\hat{w}$), which is associated with a set of voters, $V$. The preference of each $v \in V$, denoted by $\ell_v$, is a strict total order over $W$ ($M$). The preference profile ${\calL}$ is a vector of the preference orders for each $v \in V$. 
%
The voters use the \textit{Borda} rule as a SWF, denoted by $\calF$, which is a mapping of the set of all preference profiles to a single strict preference order. Specifically, in the Borda rule, each voter $v$ awards the candidate that is placed in the top-most position in $\ell_v$ a score of $k-1$, the candidate in the second-highest position in $\ell_v$ a score of $k-2$, etc.
Then, for the output of $\calF$, the candidate with the highest aggregated score is placed in the top-most position, the candidate with the second-highest score is placed in the second-highest position, etc. 
Since ties are possible, we assume that a lexicographical tie-breaking rule is used. Note that the output of $\calF$ is the preference order of $\hat{m}$ ($\hat{w}$). That is, $\succ_{\hat{m}} = \calF(\calL)$, and $\succ_{\hat{w}}$ is defined analogously. 

Recall that the GS algorithm finds a stable matching, given $\succ_M$ and $\succ_W$. Given a man $m \in M$, let $\succ_{M-m}$ be the preference profile of all of the men besides $m$, and $\succ_{W-w}$ is defined analogously. We consider a setting in which the input for the GS algorithm is $\succ_{M-\hat{m}}, \succ_{\hat{m}}$, and $\succ_W$, and thus $\mu(\hat{m})$ is the spouse that is the match of $\hat{m}$ according to the output of the GS algorithm. We also consider a setting in which the input for the GS algorithm is $\succ_{W-\hat{w}},\succ_{\hat{w}}$ and $\succ_M$, and thus $\mu(\hat{w})$ is the spouse that is the match of $\hat{w}$ according to the output of the GS algorithm. 
In some circumstances, we would like to examine the output of the GS algorithm for different possible preference orders that represent a man $m \in M$.
We denote by $\mu_x(m,\succ)$ the spouse that is the match of $m$ when the input for the GS algorithm is $\succ_{M-m}$, $\succ$ (instead of $\succ_{m}$), and $\succ_W$. We define $\mu_x(w,\succ)$ and $o_x(w,\succ)$ similarly.

We study the setting in which there exists a manipulator $r$ among the voters associated with a man $\hat{m}$ (woman $\hat{w}$), and his (her) preference order is $\ell_r$. The preference order that represents $\hat{m}$ ($\hat{w}$) is thus $\calF(\calL \cup \set{\ell_r})$.
We also study the setting in which there is a set of $R$ manipulators, their preference profile is $\calL_R = \set{\ell_{1},\ell_{2},\ldots,\ell_{|R|}}$, and the preference order that represents $\hat{m}$ ($\hat{w}$) is thus $\calF(\calL \cup \calL_R)$.
For clarity purposes we slightly abuse notation, and write $\mu(\hat{m},\ell_r)$ for denoting the spouse that is the match of $\hat{m}$ according to the output of the GS algorithm, given that its input is $\succ_{M-\hat{m}}, \calF(\calL \cup \set{\ell_r})$, and $\succ_W$.
We define $\mu(\hat{w},\ell_r)$, $o(\hat{w},\ell_r)$, $\mu(\hat{m},\calL_R)$, $\mu(\hat{w},\calL_R)$ and $o(\hat{w},\calL_R)$ similarly.

Let $s(c,\ell_v)$ be the score of candidate $c$ from $\ell_v$. 
Similarly, let $s(c,\calL)$ be the total score of candidate $c$ from $\calL$, i.e., $s(c,\calL) = \sum_{v \in V} s(c,\ell_v)$. Similarly, $s(c,\calL,\ell_r) = \sum_{v \in V} s(c,\ell_v) + s(c,\ell_r)$, and \sloppy{$s(c,\calL,\calL_R) = \sum_{v \in V} s(c,\ell_v) + \sum_{r \in R}s(c,\ell_r)$}. 
Since we use a lexicographical tie-breaking rule, we write  that $(c, \ell) > (c', \ell')$ if $s(c, \ell) > s(c', \ell')$ or $s(c, \ell) = s(c', \ell')$ but $c$ is preferred over $c'$ according to the lexicographical tie-breaking rule. We define $(c, \calL, \ell) > (c', \calL, \ell')$ and $(c, \calL, \calL_R) > (c', \calL, \calL'_R)$ similarly.
Note that due to space constraints, almost all of the proofs are deferred to the full version of the paper~\cite{schmerler2022strategic}.


\section{Men's Side}
We begin by considering the variant in which a specific voter, or a coalition of voters, are associated with an agent $\hat{m}$, and they would like to manipulate the election so that a preferred spouse $w^*$ will be the match of $\hat{m}$.
\subsection{Single Manipulator} \label{sec:single_manipulator}
With a single manipulator, the Manipulation in the context of Matching from the Men's side (MnM-m) is defined as follows:

\begin{definition}[MnM-m]
    We are given a man $\hat{m}$, the preference profile $\calL$ of the honest voters that associate with $\hat{m}$, the preference profile $\succ_{M-\hat{m}}$, the preference profile $\succ_{W}$, a specific manipulator $r$, and a preferred woman $w^* \in W$. We are asked whether a preference order $\ell_r$ exists such that $\mu(\hat{m}, \ell_r) = w^*$.
\end{definition}

We show that MnM-m can be decided in polynomial time by  Algorithm~\ref{alg:man_single_manip}, which works as follows. %
The algorithm begins by verifying that a preference order exists for $\hat{m}$, which makes $w^*$ the match of $\hat{m}$. It thus iteratively builds a temporary preference order for $\hat{m}$, $\succ_x$ in lines~\ref{line:begin_single_find_B}-\ref{line:end_single_find_B}. Moreover, during the iterations in lines~\ref{line:begin_single_find_B}-\ref{line:end_single_find_B} the algorithm identifies a set $B$, which is the set of women that might prevent $w^*$ from being $\hat{m}$'s match. Specifically, $\succ_x$, is  initialized as the original preference order of $\hat{m}$, $\succ_{\hat{m}}$. In each iteration, the algorithm finds the woman $b$, which is the match of $\hat{m}$ given that $\succ_x$ is the preference order of $\hat{m}$. If $b$ is placed higher than $w^*$ in $\succ_x$, then $b$ is added to the set $B$, it is placed in $\succ_x$ immediately below $w^*$, and the algorithm proceeds to the next iteration (using the updated $\succ_x$).

Now, if $b=\mu_x(\hat{m},\succ_x)$ is positioned lower than $w^*$ in $\succ_x$, then no preference order exists that makes $w^*$ the match of $\hat{m}$, and the algorithm returns false. If $b=w^*$, then the algorithm proceeds to build the preference order for the manipulator, $\ell_r$. Clearly, $w^*$ is placed in the top-most position in $\ell_r$. Then, the algorithm places all the women that are not in $B$ in the highest available positions. Finally, the algorithm places all the women from $B$ in the lowest positions in $\ell_r$, and they are placed in a reverse order with regard to their order in $\calF(\calL)$.

\begin{algorithm}[hbpt]
\caption{Manipulation by a single voter from the men's side}
\label{alg:man_single_manip}
\SetAlgoLined
\SetAlgoNoEnd
    $B \leftarrow \emptyset$ \\  \label{line:begin_find_B}
    set $\succ_x$ to be $\succ_{\hat{m}}$ \\
    $b \leftarrow \mu_x(\hat{m}, \succ_x)$ \\
    \While{$b \succ_x w^*$}{ \label{line:begin_single_find_B}
        add $b$ to $B$ \\
        move $b$ in $\succ_x$ immediately below $w^*$ \\ \label{line:move_b} 
        $b \leftarrow \mu_x(\hat{m}, \succ_x)$\\
    } \label{line:end_single_find_B}
    \If{$b \neq w^*$}{ \label{line:begin_if_possible}
            \textbf{return} false \label{line:single_man_no_preference}
    } \label{line:end_if_possible}
    \tcp{phase 1:}
    $\ell_r \leftarrow$ empty preference order \\
    place $w^*$ in the highest position in $\ell_r$ \\ \label{line:placing_w*}
    \For{\textbf{each } $w \in W \setminus (B \cup \set{w^*})$}    { \label{line:begin_single_man_place_high}
        place $w$ in the next highest available position in $\ell_r$ \\
    } \label{line:end_single_man_place_high}
    \tcp{phase 2:}
    \While{$B \neq \emptyset$} { \label{line:begin_single_man_reverse}
        $b \leftarrow$ the least preferred woman from $B$ according to $\calF(\calL)$ \\
        place $b$ in the highest available position in $\ell_r$ \\
        remove $b$ from $B$ 
    } \label{line:end_single_man_reverse}
    \If{$\mu(\hat{m}, \ell_r) = w^*$}{ \label{line:single_man_if_match}
        \textbf{return} $\ell_r$ \label{line:single_man_success}
    }
    
    \textbf{return} false
\end{algorithm}

For proving the correctness of Algorithm~\ref{alg:man_single_manip} we use the following known results:

\begin{theorem}[due to~\cite{roth1982economics}]\label{thm:truthful}
In the Gale-Shapley matching procedure which always yields the optimal stable outcome for the set of the men agents, $M$, truthful revelation is a dominant strategy for all the agents in that set. 
\end{theorem}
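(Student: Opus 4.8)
Since this statement is cited from prior work, my task is to reconstruct how such a strategy-proofness result is proved. The backbone of any proof is the \emph{man-optimality} of the output: the $M$-proposing GS procedure returns the stable matching $\mu_M$ in which every man simultaneously obtains his most preferred partner among \emph{all} stable matchings of the reported profile. I would establish this first, by induction on the proposal rounds, maintaining the invariant that no man is ever rejected by a woman who is \emph{achievable} for him (i.e., matched to him in some stable matching). The base case is vacuous. For the step, suppose woman $w$ rejects man $m$ in favor of a man $m'$ she prefers; assuming toward a contradiction a stable matching $\mu$ with $\mu(m)=w$, I would show that $m'$ together with $w$ forms a blocking pair for $\mu$: by the induction hypothesis $m'$ has so far been rejected only by women he ranks below $w$, so $w \succ_{m'} \mu(m')$, while $w$ prefers $m'$ to $m=\mu(w)$. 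This contradiction shows $w$ is not achievable for $m$, so the invariant is preserved and each man ends matched to his best achievable woman.

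Granting man-optimality, I would turn to the incentive claim. Fix the reports of all agents other than a distinguished man $m$; let $w=\mu_M(m)$ be his outcome under truthful reporting, and suppose that reporting some $\succ'_m$ instead yields the $M$-optimal matching $\nu$ of the deviated profile with $\nu(m)=w'$. The goal is to show $w \succeq_m w'$ under $m$'s \emph{true} preference, i.e., that the deviation cannot strictly help. The first step is a reduction to a canonical ``simple'' deviation: since $\nu$ is stable for the deviated profile and matches $m$ to $w'$, replacing $m$'s reported list by the list declaring $w'$ as his \emph{only} acceptable woman leaves $\nu$ stable (deleting options ranked below $w'$ cannot create a blocking pair for $m$, as no other woman is then acceptable to him). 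Hence it suffices to rule out that $m$ gains from any single-woman report.

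The core of the argument is then to show that, under the single-woman report giving $m$ the partner $w'$, the woman $w'$ was \emph{already achievable} for $m$ in the true problem. The natural candidate witness is $\nu$ itself. Because every agent other than $m$ reports truthfully, any pair blocking $\nu$ under the true profile must involve $m$, say $(m,w'')$ with $w''\succ_m w'$ and $m\succ_{w''}\nu(w'')$; the task is to prove no such $w''$ exists, so that $\nu$ is stable for the true profile with $\nu(m)=w'$. This last step is the main obstacle, and it is exactly where a \emph{blocking-lemma}-type argument comparing the two deferred-acceptance executions is required, rather than a routine calculation. Once it is in place, $w'$ is achievable under truthful reporting, so man-optimality of $\mu_M$ forces $w=\mu_M(m)\succeq_m w'$, contradicting $w'\succ_m w$. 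Therefore no beneficial deviation exists and truthful revelation is a dominant strategy for every man, as claimed.
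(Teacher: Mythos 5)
The paper itself offers no proof of this theorem: it is imported as a known result of Roth (1982) and used as a black box, so there is no internal argument to compare yours against; you are reconstructing the literature proof. Your first half (man-optimality via the invariant that no man is ever rejected by an achievable woman) is the standard Gale--Shapley argument and is essentially right, modulo one slip: $m'$, when proposing to $w$, has been rejected exactly by the women he ranks \emph{above} $w$, and the induction hypothesis makes those women unachievable for him, which is what forces $w \succ_{m'} \mu(m')$; your phrasing ``rejected only by women he ranks below $w$'' inverts this. Your reduction to a single-woman (truncation) report is also sound: $\nu$ remains stable when $m$'s list is cut down to $\{w'\}$, since pairs not involving $m$ are unaffected and $m$ can no longer participate in any blocking pair.

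The genuine gap is the step you yourself flag as ``the main obstacle'': showing that no pair $(m,w'')$ with $w'' \succ_m w'$ blocks $\nu$ under the true profile. This is not a technical loose end that a routine comparison of the two deferred-acceptance executions fills in; it is the entire content of the theorem. Note also that your intermediate claim --- that the misreport outcome $w'$ is achievable for $m$ under the true profile --- is false in general (a man who truncates to a low-ranked woman can end up matched to someone achievable in no stable matching of the true profile); it can only hold under the standing hypothesis $w' \succ_m w$, so any proof of it must invoke that hypothesis somewhere, and you never indicate how. The standard way to close the gap is the Gale--Sotomayor Blocking Lemma: for any matching $\mu'$ that is individually rational under the true preferences, if the set $M'$ of men who strictly prefer $\mu'$ to $\mu_M$ is nonempty, then some pair $(m'',w)$ with $m'' \notin M'$ blocks $\mu'$ under the true preferences. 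Applied to $\mu' = \nu$ (where $m \in M'$ by the contradiction hypothesis), it yields a blocking pair both of whose members report truthfully, hence a blocking pair for $\nu$ under the reported profile, contradicting its stability --- and note that this argument makes your truncation reduction and the achievability detour unnecessary. As written, your proposal reduces the theorem to an unstated and unproven lemma of comparable depth, so it does not constitute a proof.
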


\begin{lemma}[due to~\cite{huang2006cheating}]\label{lemma:permutation}
\sloppy{For man $m$, his preference list is composed of $(P_L(m),\mu(m),P_R(m))$, where $P_L(m)$ and $P_R(m)$ are respectively those women ranking higher and lower than $\mu(m)$.} Let $A \subseteq W$ and let $\pi_r(A)$ be a random permutation from all $|A|!$ sets.
For a subset of men $S \subseteq M$, if every member $m \in S$ submits a falsified list of the form $(\pi_r(P_L(m)), \mu(m), \pi_r(P_R(m)))$, then $\mu(m)$ stays $m$'s match.
\end{lemma}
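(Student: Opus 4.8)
The plan is to show that the Gale--Shapley outcome on the falsified profile, which I will call $\mu'$, still satisfies $\mu'(m)=\mu(m)$ for every $m\in S$; I will in fact aim at the stronger claim that the whole matching is unchanged. The one structural feature I lean on throughout is that the falsification $(\pi_r(P_L(m)),\mu(m),\pi_r(P_R(m)))$ only reshuffles the order \emph{within} the block of women ranked above $\mu(m)$ and \emph{within} the block ranked below it: the three-way partition of $W$ into $(P_L(m),\{\mu(m)\},P_R(m))$ is preserved for each $m\in S$, and it is trivially preserved for $m\notin S$. Hence for every man $m$ and woman $w$ we have $w\succ_m\mu(m)$ in the true profile iff $w$ is ranked above $\mu(m)$ in the falsified profile.

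First I would prove that $\mu$ remains stable under the falsified profile. A pair $(m,w)$ blocks $\mu$ iff $w\succ_m\mu(m)$ and $m\succ_w\mu(w)$; the women's preferences are untouched, and by partition-preservation the condition $w\succ_m\mu(m)$ holds in the falsified profile exactly when it holds in the true one. Thus $(m,w)$ blocks after the falsification iff it blocks before, and since $\mu$ is stable for the true profile it stays stable. By Theorem~\ref{thm:truthful} the GS algorithm returns the man-optimal stable matching, i.e.\ it matches each man with his most preferred stable partner. As $\mu$ is a stable matching of the falsified instance, $\mu(m)$ is a stable partner of $m$ there, so $\mu'(m)$ is ranked (in the falsified profile) at least as high as $\mu(m)$; by partition-preservation this means $\mu'(m)\in P_L(m)\cup\{\mu(m)\}$, equivalently $\mu'(m)\succeq_m\mu(m)$ in the \emph{true} profile as well.

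It remains to exclude the possibility that some $m$ strictly improves, i.e.\ $\mu'(m)\in P_L(m)$. The clean way is through the proposal structure: in man-proposing GS a man proposes exactly to the women he ranks at least as high as his eventual partner, in decreasing order, so the set $o(w)$ of men who propose to $w$ equals $\{m: w\succeq_m\mu(m)\}$, and $\mu(w)$ is $w$'s most preferred proposer. This set depends on the men's preferences \emph{only through the preserved partition}, hence the proposer sets coincide for the true and falsified instances; I would then argue that identical proposer sets force identical outcomes, giving $\mu'=\mu$. An equivalent route is to show that $\mu'$ is also stable for the true profile, after which the man-optimality of $\mu$ for the true profile gives $\mu(m)\succeq_m\mu'(m)$, which together with the previous paragraph yields $\mu'(m)=\mu(m)$.

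The main obstacle is precisely this last step, because reshuffling the \emph{head} $P_L(m)$ can flip the relative order of two women both lying above $\mu(m)$, and such a flip can a priori turn a non-blocking pair of the falsified instance into a blocking pair of the true instance, so stability does not transfer for free and the naive argument is circular (ruling out improvers is exactly what we are trying to prove). I expect to resolve it by localizing the difficulty: permuting the \emph{tail} $P_R(m)$ is harmless since a man never proposes below his match, so one may first reduce to head-only reshuffling, and then close the argument with the order-independence (determinacy) of deferred acceptance, which upgrades the invariance of the proposer sets $o(w)=\{m:w\succeq_m\mu(m)\}$ into invariance of the produced matching. This pins $\mu'(w)=\max_{\succ_w}o(w)=\mu(w)$ for every $w$, giving $\mu'=\mu$ and in particular $\mu'(m)=\mu(m)$ for all $m\in S$.
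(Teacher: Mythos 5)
This lemma is imported by the paper from Huang~\cite{huang2006cheating} and stated without proof, so your attempt can only be measured against the standard argument in the literature, not against anything in the paper itself. Your first two steps are correct and are exactly how that argument begins: partition-preservation shows every blocking pair of $\mu$ under the true profile is a blocking pair under the falsified one and vice versa, so $\mu$ stays stable; man-optimality of men-proposing GS then gives $\mu'(m)\succeq'_m\mu(m)$ for every man, hence $\mu'(m)\in P_L(m)\cup\{\mu(m)\}$. Your reduction disposing of the tail permutations can also be made rigorous (the run of GS on the true profile is verbatim a legal run on the tail-permuted profile, so determinacy applies).

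The gap is your final step, and it is genuine. Order-independence (McVitie--Wilson) says that for a \emph{fixed} profile every proposal order yields the same outcome; it says nothing about comparing two \emph{different} profiles, which is what you need. Likewise, the identity $o(w)=\{m: w\succeq_m \mu_{GS}(m)\}$ defines the proposer sets in terms of the GS outcome \emph{on that profile}, so the claim that the proposer sets ``depend only on the preserved partition'' already presupposes $\mu'=\mu$ --- this is precisely the circularity you flag, and invoking determinacy does not escape it. The missing idea is the Blocking Lemma (Gale--Sotomayor/Hwang): let $T$ be the set of men with $\mu'(m)\neq\mu(m)$; by your step 2 and partition preservation, every $m\in T$ has $\mu'(m)\succ_m\mu(m)$ with respect to his \emph{true} list. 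If $T\neq\emptyset$, the Blocking Lemma applied to $\mu'$ under the true profile produces a blocking pair $(m,w)$ with $m\notin T$ and $w\in\mu'(T)$. For such $m$ we have $\mu'(m)=\mu(m)$, so $w\succ_m\mu(m)$ forces $w\in P_L(m)$, and partition preservation gives $w\succ'_m\mu'(m)$; since women's lists are untouched, $(m,w)$ then blocks $\mu'$ under the falsified profile as well, contradicting its stability there. Hence $T=\emptyset$ and $\mu'=\mu$. Without the Blocking Lemma (or an equivalent counting argument over the GS execution), your proof does not close.
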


An immediate corollary of Theorem~\ref{thm:truthful} is the following.
\begin{corollary} \label{corollary:placing_higher}
Given a man $m$ with his preference order $\succ_m$, let $w_m = \mu(m)$. Let $\succ'_m$ be a preference order for $m$ such that if $w_m \succ_m w$ then $w_m \succ'_m w$. Then, $w_m$ is also the match of $m$ with the preference order $\succ'_m$, i.e., $\mu_x(m,\succ'_m)=w_m$.
\end{corollary}

We begin the analysis of Algorithm~\ref{alg:man_single_manip} by showing that it is possible to verify (in polynomial time) whether a preference order exists for $\hat{m}$, which makes $w^*$ the match of $\hat{m}$. 
We do so by showing that it is sufficient to check whether $w^*= \mu_x(\hat{m},\succ_x)$, where $\succ_x$ is the preference order that is built by Algorithm~\ref{alg:man_single_manip} in lines~\ref{line:begin_single_find_B}-\ref{line:end_single_find_B}.

\begin{lemma}
\label{lemma:exists_pref_at_all}
A preference order $\succ_t$ for $\hat{m}$ exists such that $w^*= \mu_x(\hat{m},\succ_t)$ if and only if 
$w^*= \mu_x(\hat{m},\succ_x)$. 
\end{lemma}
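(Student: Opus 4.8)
The direction $(\Leftarrow)$ is immediate: if $w^*=\mu_x(\hat{m},\succ_x)$ then $\succ_x$ itself is the required $\succ_t$. All the work is in $(\Rightarrow)$, and the plan is to reduce it to a single canonical test by proving two monotonicity facts about the man-proposing GS algorithm with the other agents' preferences held fixed. Let $\succ_{w^*}$ denote any preference order of $\hat{m}$ that ranks $w^*$ at the top. I would establish: (Fact~1) $\mu_x(\hat{m},\succ_x)=\mu_x(\hat{m},\succ_{w^*})$; and (Fact~2) if $\mu_x(\hat{m},\succ_t)=w^*$ for some $\succ_t$, then $\mu_x(\hat{m},\succ_{w^*})=w^*$. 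Together these give $\mu_x(\hat{m},\succ_x)=w^*$ whenever $w^*$ is attainable, which is exactly $(\Rightarrow)$.

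For Fact~1, the key observation is the loop's exit condition: the final $\succ_x$ satisfies that $\mu_x(\hat{m},\succ_x)$ is ranked no higher than $w^*$ in $\succ_x$, since every woman that $\hat{m}$ is matched to above $w^*$ is pushed below $w^*$ and added to $B$. Consequently, every woman ranked above $w^*$ in $\succ_x$ is rejected by $\hat{m}$ during the run, i.e.\ is matched to a man she strictly prefers to $\hat{m}$. I would argue that such ``doomed'' proposals are irrelevant: moving all of these women to just below $\hat{m}$'s match leaves the output matching unchanged, because the proposals $\hat{m}$ makes to them can be deferred to the end of the execution, where they are rejected on arrival and trigger no cascade. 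This is precisely the set-dependence captured by Lemma~\ref{lemma:permutation}, which says that $\hat{m}$'s match is determined by the \emph{set} of women ranked above it. After the demotion $w^*$ is at the top, so $\mu_x(\hat{m},\succ_x)=\mu_x(\hat{m},\succ_{w^*})$.

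For Fact~2, I would use the dominant-strategy property of Theorem~\ref{thm:truthful} in its monotone form: additional proposals by $\hat{m}$ can only make the women weakly better off, hence can only make it harder for $w^*$ to be matched to $\hat{m}$. Concretely, compare the run on $\succ_t$ with the run on the list that declares only $w^*$ acceptable. The latter uses a subset of the former's proposals, so every woman, $w^*$ included, is weakly worse off; thus in the restricted run $w^*$ holds a man she likes at most as much as $\hat{m}$. Since $\hat{m}$ proposes to $w^*$ there and prefers her to being unmatched, stability forbids her holding a man she likes strictly less than $\hat{m}$, forcing her to be matched to $\hat{m}$; and because $\hat{m}$ is then never rejected by $w^*$, the same holds when the remaining women are appended below $w^*$, i.e.\ for $\succ_{w^*}$. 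Combining Facts~1 and~2 yields the lemma, and termination of the loop is separate and easy, as each iteration inserts a new distinct woman below $w^*$ into $B$ and there are only $k$ women.

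The main obstacle is making the two monotonicity steps fully rigorous in the presence of rejection cascades. For Fact~1 the subtle point is that a proposal which is ultimately rejected can still cause \emph{intermediate} displacements, so I must justify carefully that deferring $\hat{m}$'s doomed proposals to the end washes out these displacements and leaves the man-optimal matching intact; the order-independence of deferred acceptance together with Lemma~\ref{lemma:permutation} is the right machinery, but the bookkeeping needs care. For Fact~2 the delicate issue is that $\hat{m}$'s set of proposals is endogenous --- it depends on who rejects him --- so I cannot simply compare two fixed proposal sequences; I would instead phrase the comparison in terms of inclusion of acceptable sets, invoke the standard GS monotonicity lemma underlying Theorem~\ref{thm:truthful}, and then upgrade ``weakly worse off'' to the exact conclusion $\mu(w^*)=\hat{m}$ via the blocking-pair argument sketched above.
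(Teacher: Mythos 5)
Your $(\Leftarrow)$ direction and your Fact~2 are fine: if some $\succ_t$ makes $w^*$ the match of $\hat{m}$, then the matching produced under $\succ_t$ remains stable when $\hat{m}$'s list is replaced by any list ranking $w^*$ first, so the man-optimal matching under that list must give $\hat{m}$ someone he weakly prefers to $w^*$, i.e.\ $w^*$ itself. The genuine gap is Fact~1. First, as stated --- for an \emph{arbitrary} order $\succ_{w^*}$ with $w^*$ on top --- it is false: when $w^*$ is unattainable, $\hat{m}$'s match under a top-$w^*$ list depends on the order below $w^*$. For instance, if $w^*$ is locked to a man she prefers to everyone, while women $a$ and $b$ both rank $\hat{m}$ first, the lists $(w^*,a,b,\ldots)$ and $(w^*,b,a,\ldots)$ give matches $a$ and $b$ respectively, so $\mu_x(\hat{m},\succ_{w^*})$ is not well defined independently of the choice of $\succ_{w^*}$. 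This particular flaw is benign (you only combine Fact~1 with Fact~2, i.e.\ when $w^*$ is attainable, and then it would have to be restated for the specific demoted copy of $\succ_x$). Second, and more seriously, the tools you cite cannot prove even that restricted version: Lemma~\ref{lemma:permutation} only permits permuting women \emph{within} the block above the match and within the block below it --- it never moves a woman across the match, which is exactly what your demotion does --- and McVitie--Wilson order-independence compares proposal orders under one \emph{fixed} profile, whereas ``deferring the doomed proposals'' compares executions under two different lists of $\hat{m}$. A doomed proposal can trigger cascades that permanently alter other agents' tentative partners, so the deferral is not an innocuous reordering within a single run; your own caveat that ``the bookkeeping needs care'' is in fact pointing at machinery that does not suffice.

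The missing ingredient is Theorem~\ref{thm:truthful}, and once you invoke it the entire detour through a canonical top-$w^*$ list becomes unnecessary. Treat $\succ_x$ as $\hat{m}$'s \emph{true} preference; strategy-proofness for the proposing side says that reporting $\succ_x$ yields a match that $\hat{m}$ weakly prefers, according to $\succ_x$, to the match produced by any other report, in particular $\mu_x(\hat{m},\succ_x) \succeq_{\succ_x} \mu_x(\hat{m},\succ_t) = w^*$. The loop's exit condition gives the opposite comparison: $\mu_x(\hat{m},\succ_x)$ is not ranked above $w^*$ in $\succ_x$. The two together force $\mu_x(\hat{m},\succ_x)=w^*$. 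This is, in compressed form, what the paper's five-case analysis does (its cases combine Theorem~\ref{thm:truthful}, Lemma~\ref{lemma:permutation}, and the exit condition), and it is also the clean way to close the hole in your Fact~1: applying Theorem~\ref{thm:truthful} once in each direction shows that demoting the women above $w^*$ in $\succ_x$ to below the match cannot change $\hat{m}$'s match.
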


That is, if Algorithm~\ref{alg:man_single_manip} returns false in line~\ref{line:single_man_no_preference} then there is no preference order for $\hat{m}$ that makes $w^*$ the match of $\hat{m}$ (and thus no manipulation is possible for $r$). 

We now show that the set $B$, which is identified by the algorithm in lines~\ref{line:begin_single_find_B}-\ref{line:end_single_find_B}, is a set of woman that might prevent $w^*$ from being $\hat{m}$'s match.

\begin{lemma}\label{lemma:b_prevent_m*}
Given a preference order $\succ_t$ for $\hat{m}$, if there exists $b \in B$ such that $b \succ_t w^*$ then $\mu_x(\hat{m}, \succ_t) \neq w^*$. 
\end{lemma}

That is, Algorithm~\ref{alg:man_single_manip} should place the women from $B$ in the lowest position in $\ell_r$ and $w^*$ in the highest position in $\ell_r$, so that $w^*$ will be preferred over every woman $b \in B$ in $\calF(\calL \cup \set{\ell_r})$.


%


\begin{theorem}
\label{thm:alg_1_correct}
Algorithm~\ref{alg:man_single_manip} correctly decides the MnM-m problem in polynomial time.
\end{theorem}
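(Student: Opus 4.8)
The plan is to split the argument into soundness, completeness, and a running-time count, with the real work concentrated in completeness. Soundness is immediate: whenever Algorithm~\ref{alg:man_single_manip} returns a ballot $\ell_r$ (line~\ref{line:single_man_success}) it has just verified $\mu(\hat{m},\ell_r)=w^*$, so the returned ballot is a genuine manipulation. For the two \textbf{false} outputs I would argue separately. If the algorithm stops at line~\ref{line:single_man_no_preference}, then by Lemma~\ref{lemma:exists_pref_at_all} no preference order $\succ_t$ for $\hat{m}$ yields $w^*=\mu_x(\hat{m},\succ_t)$; since every ballot induces only some such order $\calF(\calL\cup\set{\ell_r})$, no manipulation can exist, and rejecting is correct.

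The heart of the argument is to show that whenever a manipulation exists the algorithm reaches and passes the final test, and this hinges on making the role of $B$ precise. I would establish the characterization that, for any report $\succ$ of $\hat{m}$ that places $w^*$ in its induced top positions as the algorithm does, $\mu_x(\hat{m},\succ)=w^*$ holds if and only if every woman of $B$ is ranked below $w^*$ in $\succ$. The ``only if'' direction is that each $b\in B$ was, at the moment it entered $B$, the match of $\hat{m}$ while sitting above $w^*$, so $b$ accepts $\hat{m}$; hence any report leaving some $b\in B$ above $w^*$ lets $b$ (or a still more preferred woman) capture $\hat{m}$, and the match cannot be $w^*$. This is exactly the content of Theorem~\ref{thm:truthful} and Lemma~\ref{lemma:permutation} already invoked in the discussion preceding the statement. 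For the ``if'' direction I would use that the women ranked above $w^*$ in the final $\succ_x$ are precisely the originally-above-$w^*$ women that are not in $B$, all of which reject $\hat{m}$; Lemma~\ref{lemma:permutation} (together with Theorem~\ref{thm:truthful} to handle shrinking the above-set) then transfers ``$w^*$ is the match'' to every report whose set of women above $w^*$ is contained in this rejecting set.

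Next I would reduce the existence of a manipulation to a purely score-theoretic question about Borda and solve it greedily. Placing $w^*$ in the top position of $\ell_r$ is optimal without loss of generality, since it gives $w^*$ its maximum aggregate score $s(w^*,\calL)+(k-1)$ and thus makes it easiest to outscore the members of $B$. The crucial observation is that with $w^*$ on top no non-$B$ woman can become a blocker: a woman originally below $w^*$ has $s(\cdot,\calL)\le s(w^*,\calL)$ and an $\ell_r$-score of at most $k-2<k-1$, so she stays strictly below $w^*$; a woman originally above $w^*$ and not in $B$ rejects $\hat{m}$, so even when she remains above $w^*$ she does not capture him. Hence the only freedom that matters is how the members of $B$ are scored, and by the characterization a manipulation exists iff some assignment of the remaining positions pushes every $b\in B$ below $w^*$. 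The greedy rule of lines~\ref{line:begin_single_man_reverse}--\ref{line:end_single_man_reverse} — give the lowest $\ell_r$-positions to the members of $B$, assigning the smallest score to the $b$ with the largest $s(\cdot,\calL)$ — is optimal for this feasibility question by a standard exchange argument: swapping an out-of-order pair only relaxes the binding constraint. Consequently, if any ballot pushes all of $B$ below $w^*$ then the greedy ballot does as well, so the final test succeeds and a valid $\ell_r$ is returned.

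Finally, the running time is routine: the while loop of lines~\ref{line:begin_single_find_B}--\ref{line:end_single_find_B} adds a distinct woman to $B$ each iteration and hence runs at most $k$ times, each iteration invoking the Gale--Shapley algorithm once in $O(k^2)$; constructing $\ell_r$ costs $O(k\log k)$, and the verification is one further Gale--Shapley run, so the total is polynomial. The main obstacle I anticipate is the characterization lemma of the second paragraph — in particular arguing cleanly that $B$ captures \emph{every} obstacle, i.e.\ that with $w^*$ on top the originally-below women can never rise above $w^*$ to create a fresh blocker; once this score-domination fact is combined with Theorem~\ref{thm:truthful} and Lemma~\ref{lemma:permutation}, the exchange argument and the complexity bound are straightforward.
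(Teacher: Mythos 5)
Your proposal is correct, and its outer skeleton coincides with the paper's proof: soundness from the final check, rejection at line~\ref{line:single_man_no_preference} justified by Lemma~\ref{lemma:exists_pref_at_all}, the greedy construction of $\ell_r$, and the routine running-time count. The divergence is in how the central step---that the greedy ballot succeeds whenever some ballot $\ell_t$ does---is argued. You first prove an explicit characterization (with $w^*$ on top, $w^*$ is the match precisely when every woman of $B$ ends up below $w^*$ in the aggregate order) and then justify the reverse-order placement of $B$ by an exchange argument, transforming $\ell_t$ into $\ell_r$ through success-preserving swaps. The paper never states this characterization; its sufficiency direction is needed only for the specific order $\succ_x$ (Lemma~\ref{lemma:exists_pref_at_all}), and the women of $B$ are handled instead by a pigeonhole comparison of $\ell_r$ directly against $\ell_t$: if $s(w,\ell_r) > s(w,\ell_t)$ for some $w \in B$, then among the $s(w,\ell_r)$ members of $B$ placed below $w$ in $\ell_r$ some $w'$ satisfies $s(w',\ell_t) \geq s(w,\ell_r)$, while the reverse ordering guarantees $(w',\calL) > (w,\calL)$, so $(w,\calL,\ell_r) < (w^*,\calL,\ell_r)$ follows from $(w',\calL,\ell_t) < (w^*,\calL,\ell_t)$. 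The two techniques are equivalent in power here and rest on the same imported facts (Theorem~\ref{thm:truthful}, Lemma~\ref{lemma:permutation}) together with the same score-domination observation for women originally below $w^*$. What your decomposition buys is modularity: it makes explicit that $B \cup \set{w^*}$ is exactly the set of women whose relative placement matters, which the paper leaves implicit in its informal discussion. What the paper's pigeonhole buys is economy: it never has to establish sufficiency for arbitrary ballots, only compare the constructed ballot with the hypothetical one. One point to tighten in your version: the exchange step (``swapping an out-of-order pair only relaxes the binding constraint'') should be phrased with the tie-breaking relation $(c,\calL,\ell) > (c',\calL,\ell')$ rather than with raw scores, since ties are broken lexicographically; the paper's chain of inequalities is written in that notation for exactly this reason.
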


\subsection{Coalitional Manipulation}
We now study manipulation by a coalition of voters. 
The coalitional manipulation in the context of matching from the men's side is defined as follows:

\begin{definition} [coalitional MnM-m] \label{def:coalition_among_a_man}
    We are given a man $\hat{m}$, the preference profile $\calL$ of the honest voters that associate with $\hat{m}$, the preference profile $\succ_{M-\hat{m}}$, the preference profile $\succ_{W}$, a coalition of manipulators $R$, and a preferred woman $w^* \in W$. We are asked whether a preference profile $\calL_R$ exists such that $\mu(\hat{m}, \calL_R) = w^*$.
\end{definition}

We show that the coalitional MnM-m problem is computationally hard, even with two manipulators. The reduction is from the Permutation Sum problem (as defined by Davies et al.~\cite{davies2011complexity}) that is $NP$-complete~\cite{yu2004minimizing}.

\begin{definition}[Permutation Sum]\label{def:permutationSum} Given $q$ integers $X_1 \leq \ldots \leq X_q$ where $\sum_{i=1}^{q}X_i = q(q+1)$, do two permutations $\sigma$ and $\pi$ of $1$ to $q$ exist such that $\sigma(i)+\pi(i)=X_i$ for all $1 \leq i \leq q$?
\end{definition}

\begin{theorem} \label{thm:coalitional_mnm-m}
Coalitional MnM-m is {NP}-Complete. 
\end{theorem}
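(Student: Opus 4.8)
The plan is to prove membership in NP first and then NP-hardness by reduction from Permutation Sum, leveraging the fact that coalitional manipulation of Borda is already hard with only two manipulators. For membership, a certificate is simply the profile $\calL_R$: given $\calL_R$ we compute $\calF(\calL \cup \calL_R)$ and run the GS algorithm on $\succ_{M-\hat{m}}, \calF(\calL \cup \calL_R), \succ_W$, both in polynomial time, and check whether $\mu(\hat{m}, \calL_R) = w^*$.

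For hardness, the first step is a \emph{matching gadget} that collapses the stable-matching layer onto pure Borda winner determination. I would make every woman rank $\hat{m}$ at the top of her preference $\succ_w$ (the rest of $\succ_w$ and all of $\succ_{M-\hat{m}}$ being arbitrary). Then, in the men-proposing GS algorithm, the first woman $\hat{m}$ proposes to weakly prefers $\hat{m}$ to any tentative partner and hence never rejects him; consequently $\mu(\hat{m}, \calL_R)$ is exactly the top-ranked woman of $\calF(\calL \cup \calL_R)$. Thus $\mu(\hat{m}, \calL_R) = w^*$ if and only if $w^*$ is the Borda winner of the profile $\calL \cup \calL_R$, and it remains to encode Permutation Sum into this winner-determination question.

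Given an instance $X_1 \le \cdots \le X_q$ with $\sum_i X_i = q(q+1)$, I would take $W = \set{w^*, c_1, \ldots, c_q}$ (so $k = q+1$), fix $|R| = 2$, and construct the honest profile $\calL$ so that the score difference $s(w^*, \calL) - s(c_i, \calL)$ equals $X_i - 2(q+1)$ for every $i$, with the lexicographic tie-break set to favor $w^*$ over every $c_i$. Ranking $w^*$ first is without loss of generality for both manipulators (moving $w^*$ up only raises its score and weakly lowers the others'), so each manipulator awards $w^*$ the score $k-1 = q$ and distributes $\set{0,\ldots,q-1}$ bijectively among $c_1, \ldots, c_q$. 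Writing the two induced assignments as $\sigma(i) - 1$ and $\pi(i) - 1$, where $\sigma, \pi$ are permutations of $\set{1,\ldots,q}$, a direct computation gives that $w^*$'s total minus $c_i$'s total equals $X_i - (\sigma(i) + \pi(i))$. Hence $w^*$ wins iff $\sigma(i) + \pi(i) \le X_i$ for all $i$; summing and using $\sum_i X_i = q(q+1) = \sum_i(\sigma(i)+\pi(i))$ forces equality everywhere, so a successful manipulation exists iff the Permutation Sum instance is a yes-instance (the converse direction just reads the two permutations back as the two manipulative votes).

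I expect the main obstacle to be the honest-profile construction: realizing the prescribed differences $X_i - 2(q+1)$ (which may be negative and are of size $O(q^2)$) with an explicit, polynomially-sized set of Borda votes. I would handle this with the standard balancing technique (pairs of mutually-reversed votes contribute equally to all candidates and form a neutral backbone, on top of which a polynomial number of additional votes install the desired gaps), as used in the Borda-manipulation hardness literature, taking care that no auxiliary structure lets a $c_i$ overtake $w^*$. The remaining points — that the shift between Borda's scores $\set{0,\ldots,q-1}$ and Permutation Sum's range $1,\ldots,q$ is absorbed by the additive constant in the gap, and that the unspecified preferences of the other agents cannot disturb $\hat{m}$'s match — are routine to verify.
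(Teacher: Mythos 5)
Your proposal is correct and takes essentially the same route as the paper: the same NP-membership certificate, the same gadget of having every woman rank $\hat{m}$ first so that the Gale--Shapley match of $\hat{m}$ is exactly the top-ranked woman of $\calF(\calL \cup \calL_R)$, and the same reduction from Permutation Sum with two manipulators whose honest profile is realized via the score-gap construction of Davies et al. The only difference is cosmetic: the paper adds two auxiliary women ($w_{q+1}$ with a very high score and a dummy $w_{q+2}$) to force both manipulators to place $w^*$ on top and $w_{q+1}$ at the bottom, whereas you achieve the same normalization by a without-loss-of-generality argument on the manipulators' votes.
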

%
%
%
%
%
Even though coalitional MnM-m is $NP$-complete, it might still be possible to develop an efficient heuristic algorithm that finds a successful coalitional manipulation.
We use Algorithm~\ref{alg:coalition_manip_man}, which is a generalization of Algorithm~\ref{alg:man_single_manip}, that works as follows.
\begin{algorithm}[hbtp]
\caption{Manipulation by a coalition of voters from the men's side}
\label{alg:coalition_manip_man}
\SetAlgoLined
\LinesNumbered
\SetAlgoNoEnd
    $B \leftarrow \emptyset$ \\  \label{line:coalition_begin_find_B}
    set $\succ_x$ to be $\succ_{\hat{m}}$ \\
    $b \leftarrow \mu_x(\hat{m}, \succ_x)$ \\
    \While{$b \succ_x w^*$}{
        add $b$ to $B$ \\
        place $b$ in $\succ_x$ immediately below $w^*$ \\
        $b \leftarrow \mu_x(\hat{m}, \succ_x)$\\
    } \label{line:coalition_end_find_B}
    \If{$b \neq w^*$}{ \label{line:coalition_begin_if_possible}
        \textbf{return} false
    } \label{line:coalition_end_if_possible}
    \For{\textbf{each} $r \in R$} {\label{line:cmnm-m_begin-stage}
        $\ell_r \leftarrow$ empty preference order \\
        place $w^*$ in the highest position in $\ell_r$ \\
        \For{\textbf{each } $w \in W \setminus (B \cup \set{w^*})$}    { \label{line:coalition_begin_single_man_place_high}
            place $w$ in the next highest available position in $\ell_r$ \\
        } \label{line:coalition_end_single_man_place_high}
        $B' \leftarrow B$ \\
        \While{$B' \neq \emptyset$} { \label{line:begin_coalition_man_reverse}
        $b \leftarrow$ the least preferred woman from $B'$ according to $\calF(\calL \cup \calL_R)$ \\
        place $b$ in the highest available position in $\ell_r$ \\
        remove $b$ from $B'$ 
    } \label{line:end_coalition_man_reverse}
        add $\ell_{r}$ to $\calL_R$
    } \label{line:cmnm-m_end-stage}
    \If{$\mu(\hat{m}) = w^*$}{
            \textbf{return} $\calL_R$
        }
    \textbf{return} false
\end{algorithm} 
Similar to Algorithm~\ref{alg:man_single_manip}, Algorithm~\ref{alg:coalition_manip_man} identifies a set $B$, which is the set of women that might prevent $w^*$ from being $\hat{m}$'s match. In addition, it verifies that a preference order for $\hat{m}$ exists, which makes $w^*$ the match of $\hat{m}$.
Then, Algorithm~\ref{alg:coalition_manip_man} proceeds to build the preference order of every manipulator $r \in R$ similarly to how Algorithm~\ref{alg:man_single_manip} builds the preference order for the single manipulator. Indeed, Algorithm~\ref{alg:coalition_manip_man} builds the preference order of each manipulator $r$ in turn, and the order in which the women in $B$ are placed depends on their order according to $\calF(\calL \cup \calL_R)$. That is, the order in which the woman in $B$ are placed in each $\ell_r$ is not the same for each $r$, since $\calL_R$ is updated in each iteration. We refer to each of the iterations in Lines~\ref{line:cmnm-m_begin-stage}-\ref{line:cmnm-m_end-stage} as a \emph{stage} of the algorithm.
We now show that Algorithm~\ref{alg:coalition_manip_man} is an efficient heuristic that also has a theoretical guarantee. Specifically, the algorithm is guaranteed to find a coalitional manipulation in many instances, and we characterize the instances in which it may fail.
Formally,
\begin{theorem} \label{thm:coalition_manip_man}
Given an instance of coalitional MnM-m,
    \begin{enumerate}
        \item If there is no preference profile making $w^*$ the match of $\hat{m}$, then Algorithm~\ref{alg:coalition_manip_man} will return false. 
        \item If a preference profile making $w^*$ the match of $\hat{m}$ exists, then for the same instance with one additional manipulator, Algorithm~\ref{alg:coalition_manip_man} will return a preference profile that makes $w^*$ the match of $\hat{m}$.
    \end{enumerate}
\end{theorem}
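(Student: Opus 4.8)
The plan is to mirror the structure of the single‑voter correctness proof (Theorem~\ref{thm:alg_1_correct}) and lift it to a coalition, letting the one extra manipulator absorb the loss incurred by balancing the dangerous women's scores greedily rather than optimally. Part~1 (soundness) is immediate: the feasibility phase (lines~\ref{line:coalition_begin_find_B}--\ref{line:coalition_end_find_B}) is identical to Algorithm~\ref{alg:man_single_manip}, so by Lemma~\ref{lemma:exists_pref_at_all} the test $b\neq w^{*}$ returns false exactly when no preference \emph{order} for $\hat m$ makes $w^{*}$ his match; since every Borda output $\calF(\calL\cup\calL_R)$ is such an order, this also rules out every profile. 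As the algorithm returns a profile only after explicitly checking $\mu(\hat m,\calL_R)=w^{*}$, if no profile realizes $w^{*}$ then neither check can succeed and the algorithm returns false.

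For Part~2 I would first reduce matching to a scoring condition. As established in Section~\ref{sec:single_manipulator} (via Theorem~\ref{thm:truthful} and Lemma~\ref{lemma:permutation}), once feasibility holds, $\mu(\hat m,\calL')=w^{*}$ if and only if every woman of $B$ is ranked below $w^{*}$ in $\calF(\calL\cup\calL')$; women outside $B$ never block, since $w^{*}$ gets the top score in every manipulative ballot (so any $w\notin B$ with $w^{*}\succ_{\hat m}w$ stays below $w^{*}$), while women above $w^{*}$ are harmless by Lemma~\ref{lemma:permutation}. Thus it suffices to control the $B$‑women. Let $\calL_T$ be a successful profile with the original $n=|R|$ manipulators; using the iff above it keeps all of $B$ below $w^{*}$, so I may normalize it so that $w^{*}$ is top and $B$ occupies the bottom $|B|$ positions of every ballot — this only raises $s(w^{*},\cdot)$ and lowers each $s(w,\cdot)$ for $w\in B$, hence preserves success. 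Every ballot of $\calL_T$ and of the algorithm's output $\calL_R$ then distributes the Borda scores $\{0,\dots,|B|-1\}$ among $B$ as a permutation, and lines~\ref{line:begin_coalition_man_reverse}--\ref{line:end_coalition_man_reverse} are precisely the reverse‑greedy rule giving the currently lowest‑scoring woman the largest remaining score.

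The statement then collapses to a load‑balancing claim. Writing $S^{*}=s(w^{*},\calL)+(n+1)(k-1)$ for $w^{*}$'s final score under $\calL_R$, the success of $\calL_T$ gives $s(w,\calL)+s(w,\calL_T)\le s(w^{*},\calL)+n(k-1)=S^{*}-(k-1)$ for every $w\in B$. I then need that one extra reverse‑greedy round raises the maximum $B$‑score by at most $|B|-1$, i.e.\ $\max_{w\in B}\big(s(w,\calL)+s(w,\calL_R)\big)\le \max_{w\in B}\big(s(w,\calL)+s(w,\calL_T)\big)+(|B|-1)$. Combining this with the previous inequality and with $|B|\le k-1$ (as $w^{*}\notin B$) yields $s(w,\calL)+s(w,\calL_R)\le S^{*}-(k-|B|)\le S^{*}-1$ for every $w\in B$, so $(w,\calL,\calL_R)<(w^{*},\calL,\calL_R)$ holds for all $w\in B$ irrespective of the lexicographic tie‑break, and the sufficiency direction gives $\mu(\hat m,\calL_R)=w^{*}$.

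The load‑balancing claim is the main obstacle. Per round the reverse rule is optimal — matching the smallest running scores with the largest slot‑scores minimizes the post‑round maximum by a rearrangement argument — but greedy balancing is \emph{not} globally min‑max optimal, so the bound cannot be obtained by appealing to optimality; the extra round is exactly what pays for the accumulated greedy loss. I would prove it by generalizing the pigeonhole step of Theorem~\ref{thm:alg_1_correct}: fix the final maximizer $w^{\dagger}$ and the last round in which its score rose; the women that outranked it in that round, together with a counting/averaging argument over the scores distributed so far, should exhibit a woman whose score under $\calL_T$ already lies within $|B|-1$ of $w^{\dagger}$'s final score. The delicate point is that the reverse ordering is recomputed from $\calF(\calL\cup\calL_R)$ at every stage, so this witness must be tracked across stages; separating the ``balanced'' regime (where averaging the distributed scores suffices) from the ``skewed'' regime (where $w^{\dagger}$ is forced high already by the honest scores $s(\cdot,\calL)$, which $\calL_T$ cannot avoid either) is where the argument needs the most care.
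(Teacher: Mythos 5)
Your Part~1 and your reduction of Part~2 to a scoring condition (keep every woman of $B$ below $w^*$ in $\calF(\calL\cup\calL_R)$) do match the paper's route (cf.\ Lemma~\ref{lemma:exists_pref_at_all} and the first step of Lemma~\ref{lemma:times_algorithm_succeeds}), and your final accounting $S^*-(k-1)+|B|-1\le S^*-1$ is the same arithmetic the paper uses. But the proposal has a genuine gap exactly where the theorem lives: the load-balancing claim
$\max_{w\in B}\bigl(s(w,\calL)+s(w,\calL_R)\bigr)\le \max_{w\in B}\bigl(s(w,\calL)+s(w,\calL_T)\bigr)+|B|-1$
is asserted, flagged by you as ``the main obstacle,'' and then only sketched. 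As you yourself observe, the greedy is not min--max optimal, so no per-round exchange argument yields it; and your proposed repair --- pigeonhole on the final maximizer $w^{\dagger}$ and a witness tracked across stages --- is precisely what breaks down, because the reverse order is recomputed from $\calF(\calL\cup\calL_R)$ at every stage, so the set of women that outrank $w^{\dagger}$ changes from stage to stage and no single witness survives. This is why the pigeonhole step of Theorem~\ref{thm:alg_1_correct} does not lift to the coalitional case, and the proposal never replaces it with a working argument.

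The paper closes this gap with machinery your sketch is missing. It defines a closure set $D\subseteq B$: start from the $\calL$-top woman of $B$ and repeatedly add any woman whose ranking relative to a current member of $D$ flips between consecutive stages. Lemma~\ref{lemma:D_low} shows that the $D$-women always occupy the $|D|$ lowest positions of every manipulator's ballot, and this buys two facts at once: (i) the greedy profile minimizes the sum, hence the average $q(D)$, of the $D$-scores among all profiles with the same number of ballots, so the comparison with the hypothetical successful profile $\calL_T$ is routed through the \emph{average} rather than the max --- if $\calL_T$ succeeds with $|R|-1$ manipulators then $q(D)\le s_{|R|-1}(w^*)$ (Lemma~\ref{lemma:times_algorithm_returns_false}); and (ii) within $D$ the greedy scores are $1$-dense (Lemma~\ref{lemma:1_dense}, imported from Zuckerman et al.), whence $\max_{d\in D}s_{|R|}(d)\le q(D)+|D|-1$ (Lemma~\ref{lemma:average_fromMax}). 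The extra manipulator then gives $w^*$ an additional $k-1>|D|-1$ points. Note that neither fact holds if you work with all of $B$, as your claim does: scores over $B$ need not be dense (a woman of $B\setminus D$ may trail the $D$-women by an arbitrary amount), so the max-versus-average step collapses; the max over $B$ coincides with the max over $D$ only because, by the definition of $D$, no woman of $B\setminus D$ ever overtakes a woman of $D$ under the greedy. So your load-balancing inequality is in fact true, but any proof of it essentially has to construct $D$, prove density there, and compare with $\calL_T$ through the minimized average --- which is the content the proposal leaves unsupplied.
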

That is, Algorithm~\ref{alg:coalition_manip_man} will succeed in any given instance such that the same instance but with one less manipulator is manipulable. 
Thus, it can be viewed as a $1$-additive approximation algorithm (this approximate sense was introduced by Zuckerman et al. \cite{zuckerman2009algorithms} when analyzing Borda as a social choice function (SCF)).

In order to prove Theorem~\ref{thm:coalition_manip_man} we use the following definitions.
Let $D_0 = \set{d_0}$, where $(d_0, \calL) > (w, \calL)$, and $d_0,w \in B$.
For each $s=1,2,...$, let $D_s \subseteq B$ be $D_s=D_{s-1} \cup \set{d \in B: d$ was ranked above some $d' \in D_{s-1}$ according to ${\calF}(\calL, \calL_R)$ in some stage $l$, $1 \leq l \leq |R|}$. Now, let $D= \bigcup _{0 \leq s}D_s$. 
Note that $\forall s$ $D_s \neq D_{s-1}$, and $s$ does not necessarily equal $|R|$.
Let $s_\ell(w)$ be the score of woman $w$ in ${\calF}(\calL, \calL_R)$ after stage $\ell$.

The proof of Theorem~\ref{thm:coalition_manip_man} relies on Lemmata~\ref{lemma:D_low}-\ref{lemma:average_fromMax}, and its general intuition is as follows. Consider the women in $D$: we show that if there exists a manipulation, then Algorithm~\ref{alg:coalition_manip_man} is able to determine the votes in $\calL_R$ such that the average score of the women in $D$ is lower than the score of woman $w^*$. Moreover, a successful manipulation requires that $w^*$ will be ranked higher than any women in $D$, and thus the algorithm may use one additional manipulator.

We begin with a basic lemma that clarifies where Algorithm~\ref{alg:coalition_manip_man} places the women of $D$. 
\begin{lemma} \label{lemma:D_low}
The women in $D$ are placed in each stage $l$, $1 \leq l \leq |R|$ in the $|D|$ lowest positions.
\end{lemma}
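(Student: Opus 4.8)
The plan is to translate the geometric claim about ballot positions into a claim about the score-ranking of $B$, and then to exploit the closure definition of $D$ through a minimal-stage argument.

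First I would record the structural fact about how Algorithm~\ref{alg:coalition_manip_man} fills a ballot in a given stage. The loop in lines~\ref{line:coalition_begin_single_man_place_high}-\ref{line:coalition_end_single_man_place_high} puts $w^*$ together with all of $W \setminus (B \cup \set{w^*})$ into the top $|W|-|B|$ positions, so the women of $B$ always occupy the bottom $|B|$ positions. Within that block, the while-loop in lines~\ref{line:begin_coalition_man_reverse}-\ref{line:end_coalition_man_reverse} inserts the women of $B$ in increasing order of their $\calF(\calL \cup \calL_R)$-score (least preferred first, into the highest free slot), so the $|D|$ lowest positions of $\ell_r$ hold exactly the $|D|$ women of $B$ whose $\calF(\calL \cup \calL_R)$-score is highest in that stage. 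Hence the lemma is equivalent to the order-theoretic statement: in every stage $l$, the $|D|$ top-ranked women of $B$ (w.r.t.\ $\calF(\calL \cup \calL_R)$) are exactly the women of $D$; equivalently, every $d \in D$ is ranked above every $w \in B \setminus D$ in that stage.

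I would prove this equivalent statement by contradiction, taking the \emph{earliest} stage $l$ at which some $w \in B \setminus D$ is ranked above some $d \in D$. If $l \ge 2$ the argument is clean: by minimality, in stage $l-1$ every member of $D$ is ranked above every member of $B \setminus D$, so in particular $d$ is above $w$ in stage $l-1$ while $w$ is above $d$ in stage $l$. Thus $w$ was ranked below the $D$-member $d$ in stage $l-1$ and above it in stage $l$, with $1 \le l-1 < |R|$; the closure that defines $D = \bigcup_s D_s$ then forces $w$ into some $D_{s}$, contradicting $w \in B \setminus D$. The same crossing argument simultaneously rules out a $d \in D$ dropping below some $w \in B \setminus D$, since that would make $w$ rise above a member of $D$; so once the boundary between $D$ and $B \setminus D$ is established it is never crossed in either direction.

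The main obstacle is the base case $l = 1$, where the ranking is induced by the honest scores $s(\cdot,\calL)$ alone and there is no preceding stage to invoke the crossing condition. Here I would start from the defining property of $d_0$, namely $(d_0,\calL) > (w,\calL)$ for all $w \in B$, which already places $d_0$ at the very bottom of the first ballot, and then try to extend this to all of $D$. The natural route is to show that membership in $D$ respects the initial $s(\cdot,\calL)$ order: the reverse-Borda balancing in lines~\ref{line:begin_coalition_man_reverse}-\ref{line:end_coalition_man_reverse} awards the smallest increment to the currently top-ranked women of $B$ and the largest to the lowest-ranked, which should force a woman to rise above a member of $D$ only after every higher-$s(\cdot,\calL)$ woman has already done so. Making this precise — proving that $D$ is a ``prefix'' of the $s(\cdot,\calL)$ order and hence already occupies the $|D|$ lowest positions in stage $1$ — is the delicate point. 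I expect it to require an induction on the levels $D_0 \subseteq D_1 \subseteq \cdots$ that keeps track of how the bounded increments $0,1,\ldots,|B|-1$ control the score gaps between consecutive women of $B$, so that no far-below woman can leapfrog a member of $B \setminus D$ into $D$ without that member crossing first.
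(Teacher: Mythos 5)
Your reduction of the lemma to a ranking statement is accurate, and your minimal-stage argument for $l \ge 2$ is correct and complete: a first violation at stage $l \ge 2$ produces a woman $w \in B \setminus D$ that is ranked below $d \in D_s$ at stage $l-1$ and above it at stage $l$, with $1 \le l-1 < |R|$, which is exactly the consecutive-stage crossing condition, so $w \in D_{s+1} \subseteq D$, a contradiction. This step is in fact more careful than the paper's own proof. The genuine gap is the base case $l=1$ that you flag but do not close — and the situation there is worse than ``delicate'': under the literal crossing definition of $D$ that you (correctly) took from the paper, the stage-$1$ claim is \emph{false}, so the ``prefix of the $s(\cdot,\calL)$ order'' plan you sketch cannot be carried out. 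Take $B = \set{d_0, w, d}$ with honest scores $s(d_0,\calL)=20$, $s(w,\calL)=19$, $s(d,\calL)=18$, lexicographic tie-breaking $d \succ d_0 \succ w$, and $|R|=2$, so the block increments are $2,1,0$. The stage-$1$ ranking is $d_0 > w > d$, hence ballot $1$ from bottom to top is $d_0, w, d$ and all three women end at score $20$; by tie-breaking the stage-$2$ ranking is $d > d_0 > w$. The only consecutive-stage crossing is $d$ over $d_0$, so $D = \set{d_0, d}$, yet at stage $1$ the two lowest positions hold $d_0$ and $w \notin D$, while $d \in D$ sits above them: a woman can start above a future member of $D$ and never cross anyone, and the crossing closure never catches her.

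What makes the paper's one-sentence proof go through is that it silently uses a broader closure than the stated definition: from the single fact that $w$ is ranked above some $d \in D_s$ at some stage, it concludes ``by definition'' that $w \in D_{s+1}$, with no requirement that $w$ was below a member of $D_s$ at the preceding stage. Under that reading — $D$ is the closure of $\set{d_0}$ under ``ranked above some member of $D_s$ at some stage'' — the lemma, including stage $1$, is immediate by exactly that one step, and in the example above $w$ would simply belong to $D$. So your difficulty is real and stems from a mismatch in the paper between the definition of $D$ and the proof that uses it; the repair is to reread (or restate) the definition of $D$ as the broader closure, not to push a finer induction on $D_0 \subseteq D_1 \subseteq \cdots$, which targets a statement that is false as literally formulated.
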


We now show the relation between the score of $w^*$ and the average score of the women in $D$, when there are $|R|-1$ manipulators. In essence, the lemma characterizes the settings in which Algorithm~\ref{alg:coalition_manip_man} returns false and no manipulation exists.   

\begin{lemma} \label{lemma:times_algorithm_returns_false}
Let $q(D)$ be the average score of women in $D$ after $|R|-1$ stages. That is, $q(D) = {\frac{1}{|D|}}\sum_{d \in D}s_{|R|-1}(d)$.
If $s_{|R|-1}(w^*) < q(D)$ and there are $|R|-1$ manipulators, then there is no manipulation that makes $w^*$ the match of $\hat{m}$, and the algorithm will return false.
\end{lemma}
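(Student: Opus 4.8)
The plan is to argue by contradiction, leaning on two extremal facts: that $s_{|R|-1}(w^*)$ is the \emph{largest} score $w^*$ can possibly attain under any profile of $|R|-1$ manipulators, and that $q(D)$ is the \emph{smallest} average score to which the women of $D$ can be forced. So I assume there are $|R|-1$ manipulators with $s_{|R|-1}(w^*) < q(D)$, suppose toward a contradiction that some profile $\calL_R$ of $|R|-1$ votes makes $w^*$ the match of $\hat{m}$, and show this forces $s_{|R|-1}(w^*) \ge q(D)$.

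First I would record the ranking constraint. Since $D \subseteq B$, every $d \in D$ must be ranked below $w^*$ in $\calF(\calL, \calL_R)$ for the manipulation to succeed; this is exactly the argument made after Lemma~\ref{lemma:exists_pref_at_all}, where Theorem~\ref{thm:truthful} together with Lemma~\ref{lemma:permutation} forces each woman of $B$ strictly below $w^*$. Hence $(w^*, \calL, \calL_R) > (d, \calL, \calL_R)$ for every $d \in D$, which by the definition of $>$ weakens to the score inequality $s(w^*, \calL, \calL_R) \ge s(d, \calL, \calL_R)$. Summing over the $|D|$ women of $D$ gives $|D|\, s(w^*, \calL, \calL_R) \ge \sum_{d \in D} s(d, \calL, \calL_R)$.

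Next I would lower-bound the right-hand side. The honest contribution $\sum_{d \in D} s(d, \calL)$ is fixed, and each of the $|R|-1$ manipulators assigns the $|D|$ women of $D$ distinct Borda scores drawn from $\set{0,\ldots,k-1}$, so each manipulator contributes at least $0 + 1 + \cdots + (|D|-1) = \frac{|D|(|D|-1)}{2}$ to their total. Thus $\sum_{d \in D} s(d, \calL, \calL_R) \ge \sum_{d \in D} s(d, \calL) + (|R|-1)\frac{|D|(|D|-1)}{2}$. By Lemma~\ref{lemma:D_low} the algorithm places the women of $D$ in precisely the lowest $|D|$ positions in every stage, i.e.\ it hands them exactly the scores $0,\ldots,|D|-1$ each time, so this lower bound is \emph{attained} by the algorithm and equals $\sum_{d \in D} s_{|R|-1}(d) = |D|\, q(D)$; in other words $q(D)$ is the minimum achievable average $D$-score. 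Chaining with the previous paragraph yields $s(w^*, \calL, \calL_R) \ge q(D)$. On the other hand, each manipulator can award $w^*$ at most $k-1$ points, and the algorithm already realizes this by placing $w^*$ on top in every stage, so $s(w^*, \calL, \calL_R) \le s(w^*, \calL) + (|R|-1)(k-1) = s_{|R|-1}(w^*)$. Combining the two bounds gives $s_{|R|-1}(w^*) \ge q(D)$, contradicting the hypothesis. Hence no successful manipulation exists; in particular the profile the algorithm itself builds fails the final test $\mu(\hat{m}) = w^*$ (or it has already returned false at the $b \ne w^*$ check), so the algorithm returns false.

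The step I expect to be the crux is identifying the algorithm-produced quantity $q(D)$ with the information-theoretic minimum of the average $D$-score over \emph{all} adversarial profiles. This is precisely where Lemma~\ref{lemma:D_low} is indispensable: it certifies that the algorithm really does squeeze $D$ into the bottom $|D|$ slots each stage, so $q(D)$ is achieved rather than merely bounded, while the counting of distinct Borda scores supplies the matching lower bound valid for every profile. A secondary point requiring care is tie-breaking: the strict requirement $(w^*,\cdot) > (d,\cdot)$ must be correctly relaxed to the non-strict score inequality $s(w^*,\cdot) \ge s(d,\cdot)$ before summing, since score ties broken in $w^*$'s favor are allowed.
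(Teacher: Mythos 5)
Your proof is correct and takes essentially the same route as the paper's: both arguments rest on the two extremal facts — the algorithm maximizes $w^*$'s score by placing her on top in every stage, and by Lemma~\ref{lemma:D_low} it minimizes the total score of $D$ (awarding exactly $0,\dots,|D|-1$ each stage, which matches the distinct-scores lower bound valid for any profile) — combined with the requirement that every woman of $B \supseteq D$ lie below $w^*$ in a successful manipulation. The only cosmetic difference is in the averaging step: the paper extracts a single $d \in D$ whose score is at least the average $q^m(D)$ and shows it beats $w^*$, whereas you sum the per-woman inequalities $s(w^*,\calL,\calL_R) \geq s(d,\calL,\calL_R)$ to compare $w^*$ with the average directly; these are equivalent.
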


The following lemma shows the relation between the score of $w^*$ and the maximum score of a woman in $D$, when there are $|R|$ manipulators. In essence, the lemma characterizes the settings in which Algorithm~\ref{alg:coalition_manip_man} finds a successful manipulation.

\begin{lemma} \label{lemma:times_algorithm_succeeds}
Assume that a preference order for $\hat{m}$ exists, which makes $w^*$ the match of $\hat{m}$. If $\max_{d \in D}\set{s_{|R|}(d)} < s_{|R|}(w^*)$ and there are $|R|$ manipulators, then there is a manipulation that makes $w^*$ the match of $\hat{m}$, and Algorithm~\ref{alg:coalition_manip_man} will find such a manipulation.
\end{lemma}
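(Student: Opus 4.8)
The plan is to reduce the success of Algorithm~\ref{alg:coalition_manip_man} to a purely score-theoretic condition on the women of $B$, and then to verify that condition from the hypothesis. First I would note that, since a preference order for $\hat m$ making $w^*$ his match exists, the construction of $B$ in lines~\ref{line:coalition_begin_find_B}--\ref{line:coalition_end_find_B} terminates with $b=w^*$, so the algorithm does not return false and proceeds to build $\calL_R$. Exactly as in the single-manipulator analysis (Theorem~\ref{thm:truthful}, Lemma~\ref{lemma:permutation}, and the discussion following Lemma~\ref{lemma:exists_pref_at_all}), the women outside $B$ never prevent $w^*$ from being the match regardless of their placement, and the internal order among the women ranked above $w^*$ is irrelevant. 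Hence $\mu(\hat m,\calL_R)=w^*$ holds if and only if every woman of $B$ is ranked below $w^*$ in $\calF(\calL,\calL_R)$; since $w^*$ receives the top score from every manipulator, it suffices to show $s_{|R|}(w)<s_{|R|}(w^*)$ for every $w\in B$, which also makes the lexicographic tie-breaking against $w^*$ irrelevant.

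For the women of $D$ this is immediate from the hypothesis $\max_{d\in D}\{s_{|R|}(d)\}<s_{|R|}(w^*)$. The substance of the proof is therefore to control the women of $B\setminus D$, and the key claim I would establish is that the final maximum over all of $B$ is attained inside $D$; concretely, that every $w\in B\setminus D$ finishes below $d_0$, so that $s_{|R|}(w)\le s_{|R|}(d_0)\le\max_{d\in D}\{s_{|R|}(d)\}<s_{|R|}(w^*)$. I would argue this by tracking ranks across stages: by the choice of $d_0$, woman $w$ lies below $d_0$ after stage $0$ (when only $\calL$ counts), and at each stage the two are strictly comparable under the lexicographic tie-breaking. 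If $w$ were above $d_0$ at the final stage, then, since it starts below $d_0$, there would be a first pair of consecutive stages at which $w$ passes from below $d_0$ to above $d_0$; by the definition of the sets $D_s$ this would place $w$ in $D_1\subseteq D$, contradicting $w\in B\setminus D$. Lemma~\ref{lemma:D_low}, which keeps the women of $D$ massed in the $|D|$ lowest positions at every stage, is what makes this crossing analysis well-behaved.

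Combining the two cases, every woman of $B$ has final score strictly below $s_{|R|}(w^*)$, so all of $B$ is ranked below $w^*$ in $\calF(\calL,\calL_R)$ and, by the reduction above, $\mu(\hat m,\calL_R)=w^*$. The final test in Algorithm~\ref{alg:coalition_manip_man} then returns $\calL_R$, which is by construction a successful manipulation, establishing both assertions of the lemma.

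I expect the main obstacle to be the crossing argument of the second paragraph, specifically making the stage-by-stage rank comparison between $w$ and $d_0$ line up exactly with the index range used in the recursive definition of the $D_s$ (the transition from stage $l$ to stage $l+1$ for $1\le l<|R|$), so that a woman overtaking $d_0$ is provably captured by some $D_s$. Care is also needed because the placement order inside $B$ is recomputed from the updated scores $\calF(\calL\cup\calL_R)$ at every stage, so the ranks of $w$ and $d_0$ may change non-monotonically; the argument must rely only on the existence of a first below-to-above crossing rather than on any monotonicity of the intermediate rankings.
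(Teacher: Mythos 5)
Your proposal follows essentially the same route as the paper's proof: it splits $B$ into $D$ and $B\setminus D$, handles $D$ directly from the hypothesis, and handles $B\setminus D$ via the closure property built into the definition of $D$ --- the paper phrases this last step as a one-line contradiction (if $s_{|R|}(b) > s_{|R|}(d)$ for some $d\in D$ then $b\in D$ by definition), whereas you trace a first below-to-above crossing of $d_0$, which is the same idea made explicit. The indexing obstacle you flag (a crossing between stage $0$ and stage $1$ falls outside the range $1\le l< |R|$ used in the definition of $D_s$) is genuine, but it afflicts the paper's own terse argument equally and is silently glossed over there, so your version is if anything the more careful rendering of the same proof.
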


Finally, we show that the highest score of a woman from $D$ is not much higher than the average score of the women in $D$. We thus first show that the scores of the women in $D$ are dense, as captured by the following definition.

\begin{definition}[due to~\cite{zuckerman2009algorithms}]
\label{def:1_dense}
A finite non-empty set of integers $A$ is called $1$-dense if, when sorting the set in a non-increasing order $a_1 \geq a_2 \geq \dots \geq a_i$ (such that $ \set{a_1, \dots, a_i}=A$),  $\forall j, 1 \leq j \leq i-1$, $a_{j+1} \geq a_j - 1$ holds.
\end{definition}
\begin{lemma} \label{lemma:1_dense}
Let $D$ be as before. Then the set $\set{s_{|R|-1}(d):d \in D}$ is $1$-dense.
\end{lemma}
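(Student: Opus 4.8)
The plan is to prove that $\{s_{|R|-1}(d) : d \in D\}$ is $1$-dense by induction on the stages of the algorithm, tracking how the scores of the women in $D$ evolve. The key observation is that in each stage $\ell$, the women of $D$ are placed (by Lemma~\ref{lemma:D_low}) in the $|D|$ lowest positions of $\ell_r$, and moreover they are placed in \emph{reverse order} of their current score according to $\calF(\calL,\calL_R)$ (lines~\ref{line:begin_coalition_man_reverse}--\ref{line:end_coalition_man_reverse} select the least preferred remaining woman and give her the highest remaining position among the bottom $|D|$ slots). Thus in every stage the woman of $D$ with the currently lowest score receives the largest increment among the bottom positions, and the woman with the currently highest score receives the smallest increment. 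This is exactly the score-equalizing behavior that Zuckerman et al.\ exploit for Borda as an SCF, and it is what keeps the scores tightly clustered.

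Concretely, I would set up the induction on the stage index $\ell$, with the hypothesis that after stage $\ell$ the multiset $\{s_\ell(d) : d \in D\}$ is $1$-dense. For the base case, note that before any manipulator votes the set $D$ may already fail to be $1$-dense, so the careful point is that $D$ is \emph{defined} precisely so that the scores it contains become dense once the algorithm starts reordering them; here I would lean on the definition of $D$ as the ``closure'' $\bigcup_s D_s$ of women that get shuffled across the boundary of $D_{s-1}$ between consecutive stages. For the inductive step, assume $\{s_\ell(d):d\in D\}$ is $1$-dense; I would argue that after adding stage $\ell+1$'s increments, which assign to the sorted-by-score women of $D$ the consecutive integer scores $0,1,\dots,|D|-1$ in reverse (lowest-scoring woman gets $|D|-1$, highest gets a position no larger, etc.), the maximum gap between consecutive sorted scores cannot increase beyond $1$. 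The arithmetic to check is that if $d$ and $d'$ are adjacent in the sorted order with $s_\ell(d) - s_\ell(d') \le 1$, then after the reversed increments their new scores still differ by at most $1$, because the woman with the smaller old score receives an increment at least as large.

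I expect the main obstacle to be handling the interaction between the set $B \setminus D$ and $D$ at the stage boundaries, and in particular justifying that the women in $D$ really are ranked contiguously among themselves in each $\calF(\calL,\calL_R)$ so that the ``reverse order'' placement in lines~\ref{line:begin_coalition_man_reverse}--\ref{line:end_coalition_man_reverse} acts on $D$ as a clean block. By Lemma~\ref{lemma:D_low} the women of $D$ occupy the $|D|$ lowest positions in every stage, which gives them a block of $|D|$ consecutive increment-values $\{0,1,\dots,|D|-1\}$; the subtlety is confirming that a woman of $B\setminus D$ never interleaves in a way that hands a $D$-woman an out-of-range increment, and this is exactly what the definition of $D$ as a closure rules out (any woman that would cross the boundary is absorbed into $D$). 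Once that contiguity is established, the density argument is the same integer-gap bookkeeping used by Zuckerman et al.; I would phrase the inductive step as: a $1$-dense set, after each member receives a distinct increment from $\{0,\dots,|D|-1\}$ assigned in decreasing order of current score, remains $1$-dense, and then apply this from the first stage at which $D$ stabilizes through stage $|R|-1$.
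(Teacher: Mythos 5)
Your inductive step is correct as far as it goes: if a set of integer scores is $1$-dense and its members receive the distinct increments $\set{0,\dots,|D|-1}$ in decreasing order of current score, then (traversing the new scores in the old sorted order) consecutive values change by $0$ or $1$, so the new set is again $1$-dense. The genuine gap is that your inductive \emph{hypothesis} --- that $\set{s_\ell(d): d \in D}$ is $1$-dense after every stage $\ell$ --- is false, and no valid starting point for the induction exists. Take $B=\set{a,b}$ with $s_0(a)=14$, $s_0(b)=10$, $a$ lexicographically preferred, and $|R|=6$. In each stage the currently lower-scoring of the two receives increment $1$ and the higher receives $0$, so the score pairs evolve $(14,11),(14,12),(14,13),(14,14),(14,15)$; the crossing between stages $4$ and $5$ witnesses $b\in D_1$, hence $D=\set{a,b}$. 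The lemma's conclusion holds ($\set{14,15}$ is $1$-dense), but after stages $1$ and $2$ the sets $\set{14,11}$ and $\set{14,12}$ are not $1$-dense, so the induction cannot be run from any early stage. Nor can it be run ``from the first stage at which $D$ stabilizes'': $D$ is a global closure over the entire run, not an incrementally revealed set, and membership may be witnessed only by a crossing between stages $|R|-1$ and $|R|$ (take $s_0(a)=15$ instead: the crossing then occurs between stages $5$ and $6$), i.e., only \emph{after} the scores $s_{|R|-1}$ that the lemma speaks about are already fixed --- at which point there is nothing left to induct over.

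What is missing is precisely the crux of the lemma: an argument that women can join $D$ only with scores close to the members they cross. Concretely, if $d$ is ranked below $d'$ at stage $l$ and above it at stage $l+1$, then $s_l(d')-s_l(d)$ is bounded by the number of positions separating them at stage $l$, every woman ranked between them is absorbed into $D$ as well, and the stage-$(l+1)$ scores of this whole block form consecutive integers; separately, a member whose only crossing happens at the last stage must already be within distance $1$ at stage $|R|-1$ because the crossing is imminent. Your proposal gestures at this (``the closure rules out bad interleavings'') but never proves it, and the preserved-density step you did prove is the easy half; the right invariant is density of the scores of the members absorbed so far (together with imminent joiners), not of all of $D$. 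Note that the paper itself gives no self-contained proof either --- it cites Lemma 3.12 of Zuckerman et al.\ and asserts the score sets coincide --- so your plan, completed with the joining/absorption argument and the corrected invariant, would in fact be more informative than the paper's own proof; as written, however, it does not establish the statement.
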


\begin{lemma} \label{lemma:average_fromMax}
$\max_{d \in D}\set{s_{|R|}(d)} \leq q(D) + |D| - 1$.
\end{lemma}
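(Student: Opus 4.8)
The plan is to track precisely how the final manipulator, acting in stage $|R|$, modifies the scores of the women in $D$, and then to combine this accounting with the $1$-density established in Lemma~\ref{lemma:1_dense}.

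First I would pin down the score bookkeeping for the last stage. By Lemma~\ref{lemma:D_low}, in every stage — and in particular in stage $|R|$ — the women of $D$ occupy exactly the $|D|$ lowest positions of that manipulator's ballot, and these positions carry the Borda increments $0, 1, \dots, |D|-1$. Moreover, the algorithm fills the reserved low positions in reverse order of the current ranking $\calF(\calL \cup \calL_R)$, which reflects the scores $s_{|R|-1}$: the least preferred woman of $D$ is placed in the highest of these low slots and so receives increment $|D|-1$, while the most preferred woman of $D$ receives increment $0$. Writing the women of $D$ as $d_1, \dots, d_{|D|}$ sorted so that $a_j := s_{|R|-1}(d_j)$ is non-increasing, the $j$-th woman then receives increment $j-1$, giving the key identity $s_{|R|}(d_j) = a_j + (j-1)$.

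Next I would show that the maximum of these final scores is attained at $j = |D|$. Setting $f(j) = a_j + (j-1)$, I compute $f(j+1) - f(j) = (a_{j+1} - a_j) + 1$. By Lemma~\ref{lemma:1_dense} the sequence $a_1 \geq a_2 \geq \dots \geq a_{|D|}$ is $1$-dense, i.e.\ $a_j - a_{j+1} \leq 1$, so $f(j+1) - f(j) \geq 0$ and $f$ is non-decreasing. Hence $\max_{d \in D} s_{|R|}(d) = f(|D|) = a_{|D|} + |D| - 1$. Finally, since $a_{|D|} = \min_{d \in D} s_{|R|-1}(d)$ is at most the average $q(D) = \frac{1}{|D|}\sum_{d \in D} s_{|R|-1}(d)$, I conclude $\max_{d \in D} s_{|R|}(d) \leq q(D) + |D| - 1$, as required.

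The only delicate point is the first step: one must be certain that precisely the $|D|$ women of $D$ sit in the lowest $|D|$ slots, so that the increments they receive form exactly a reversal of $0, \dots, |D|-1$ rather than being perturbed by women of $B \setminus D$ landing among them. This is exactly what Lemma~\ref{lemma:D_low} guarantees, and once it is invoked the monotonicity of $f$ through $1$-density carries the entire argument, with the min-versus-average comparison being immediate.
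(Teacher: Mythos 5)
Your proof is correct and is essentially the paper's own argument: the paper sorts $D$ by $s_{|R|-1}$ in increasing order, defines $g_t = s_{|R|-1}(d_t) + |D| - t$, uses Lemma~\ref{lemma:1_dense} to show the $g_t$ are maximized at the minimum-score woman, and then bounds that minimum by the average $q(D)$ — exactly your $f(j) = a_j + (j-1)$ computation with the sort order reversed. Both versions share the same (harmless) implicit assumptions, namely that the stage-$|R|$ increments follow the ranking induced by $s_{|R|-1}$ and that Lemma~\ref{lemma:D_low} pins the women of $D$ to exactly the $|D|$ lowest slots.
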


Now we can prove the theorem.
\begin{proof} [Proof of Theorem~\ref{thm:coalition_manip_man}]
Clearly, if Algorithm~\ref{alg:coalition_manip_man} returns a preference profile $\calL_R$, then it is a successful manipulation that will make $w^*$ the match of $\hat{m}$.
Suppose that a preference profile exists that makes $w^*$ the match of $\hat{m}$ with $|R|-1$ manipulators. 
By Lemma~\ref{lemma:average_fromMax},
$\max_{d \in D}\set{s_{|R|}(d)} \leq q(D) + |D| - 1$.
By Lemma~\ref{lemma:times_algorithm_returns_false}, 
$
q(D) + |D| - 1 \leq s_{|R|-1}(w^*) + |D| - 1.
$
Since $|D| \leq k - 1$,
$
s_{|R|-1}(w^*) + |D| - 1 < s_{|R|-1}(w^*) + k - 1 = s_{|R|}(w^*).
$
Overall, 
$
\max_{d \in D}\set{s_{|R|}(d)} < s_{|R|}(w^*),
$
and by Lemma~\ref{lemma:times_algorithm_succeeds} the algorithm will find a preference profile that will make $w^*$  the match of $\hat{m}$ with $|R|$ manipulators.
\end{proof}

\section{Women's Side}
We now consider the second variant, in which a specific voter, or a coalition of voters, are associated with an agent $\hat{w}$, and they would like to manipulate the election so that a preferred spouse $m^*$ will be the match of $\hat{w}$.
This variant is more involved, since manipulation of the GS algorithm is also possible by a single woman or a coalition of women. Indeed, there are notable differences between manipulation from the women's side and manipulation from the men's side. First, 
the manipulators from the women's side need to ensure that \textbf{two} men are positioned ``relatively'' high. In addition, the set $B$, which is the set of agents that are placed in low positions, is defined differently, and it is not built iteratively. Finally, in manipulation from the women's side, it is not always possible to place all the agents from $B$ in the lowest positions.

\subsection{Single Manipulator}
With a single manipulator, the Manipulation in the context of Matching from the Women's side (MnM-w) is defined as follows:

\begin{definition}[MnM-w]
    We are given a woman $\hat{w}$, the preference profile $\calL$ of the honest voters that associate with $\hat{w}$, the preference profile $\succ_M$, the preference profile $\succ_{W-\hat{w}}$, a specific manipulator $r$, and a preferred man $m^* \in M$. We are asked whether a preference order $\ell_r$ exists such that $\mu(\hat{w}, \ell_r) = m^*$.
\end{definition}
\begin{algorithm}[bthp]
\caption{Manipulation by a single voter from the women's side}
\label{alg:woman_single_manip}
\SetAlgoLined
\LinesNumbered
\SetAlgoNoEnd
\For{\textbf{each } $m_{nd} \in M \setminus \set{m^*}$}{\label{line:begin_identify_m_nd}
    \tcp{phase 1:}
    $\ell_r \leftarrow$ empty preference order \\
    place $m_{nd}$ in the highest position in $\ell_r$ \\
    place $m^*$ in the second-highest position in $\ell_r$ \\
    \If{$(m_{nd},\calL, \ell_r) > (m^*,\calL, \ell_r) $}{ 
        place $m^*$ in the highest position in $\ell_r$ \\ 
        place $m_{nd}$ in  $\ell_r$ in the highest position such that $(m^*,\calL, \ell_r) > (m_{nd},\calL, \ell_r)$, if such position exists \\
        \If{no such position exists}{
            \textbf{continue} to the next iteration
        }
    }
    \If{$\mu(\hat{w},\ell_r) \neq m^*$ or $m_{nd} \notin o(\hat{w},\ell_r)$ }{
        \textbf{continue} to the next iteration
    }
\tcp{phase 2:}
\For{\textbf{each } $m \notin o(\hat{w}, \ell_r)$}{
    place $m$ in the highest available position in $\ell_r$ \\
}
\tcp{phase 3:}
$B^{nd} \leftarrow o(\hat{w}, \ell_r) \setminus \set{m^*, m_{nd}}$ \\ \label{line:woman_begin_B}
\While{$B^{nd} \neq \emptyset$}{ 
    $b \leftarrow$ the least preferred man from $B^{nd}$ according to $\calF(\calL)$ \\
    place $b$ in the highest available position in $\ell_r$ \\
    remove $b$ from $B^{nd}$ \\
} \label{line:woman_end_B}
\If{$\mu( \hat{w}, \ell_r) = m^*$}{
    \textbf{return} $\ell_r$ \label{line:algtwo_return_ellr}
}
} \label{line:end_identify_m_nd}
\textbf{return} false 
\end{algorithm}
Clearly, if $\mu(\hat{w}) = m^*$ then finding a preference order $\ell_r$ such that  $\mu(\hat{w}, \ell_r) = m^*$ is trivial. We thus henceforth assume that $\mu(\hat{w}) \neq m^*$.
The MnM-w problem can be decided in polynomial-time, using  Algorithm~\ref{alg:woman_single_manip}. The algorithm tries to identify a man $m_{nd} \in M$, and to place him and $m^*$ in $\ell_r$ such that $m_{nd}$ is ranked in $\calF(\calL \cup \set{\ell_r})$ as high as possible while $m^*$ is still preferred over $m_{nd}$ according to $\calF(\calL \cup \set{\ell_r})$. In addition, the algorithm ensures (at the end of phase $1$) that $\mu(\hat{w},\ell_r)=m^*$ and $m_{nd} \in o(\hat{w},\ell_r)$. Note that we compute $\calF(\calL \cup \set{\ell_r})$ even though $\ell_r$ is not a complete preference order, since we assume that all the men that are not in $\ell_r$ get a score of $0$ from $\ell_r$. 
%
%
If phase $1$ is successful (i.e., $\mu(\hat{w},\ell_r)=m^*$ and $m_{nd} \in o(\hat{w},\ell_r)$), the algorithm proceeds to phase $2$, where it fills the preference order $\ell_r$ by placing all the men that are not in $o(\hat{w}, \ell_r)$ in the highest available positions. Finally, in phase $3$, the algorithm places all the men from $o(\hat{w}, \ell_r)$ (except for $m^*$ and $m_{nd}$ that are already placed in $\ell_r$) in the lowest positions in $\ell_r$, and they are placed in a reverse order with regard to their order in $\calF(\calL)$. If $\mu(\hat{w},\ell_r)=m^*$ then we are done; otherwise, the algorithm iterates and considers another man. 

For proving the correctness of Algorithm~\ref{alg:woman_single_manip} we need the following  result.
\begin{lemma}[Swapping lemma, due to~\cite{vaish2017manipulating}] \label{lemma:swapping}
    Given a woman $w \in W$, let $\succ'_w$ be a preference order that is derived from $\succ_w$ by swapping the positions of an adjacent pair of men $(m_i, m_j)$ and making no other changes.
    Then,
    \begin{enumerate}
        \item if $m_i \notin o(w)$ or $m_j \notin o(w)$, then $\mu_x(w,\succ'_w) = \mu(w)$. \label{case:not_proposal}
        \item if both $m_i$ and $m_j$ are not one of the two most preferred proposals among $o(w)$ according to $\succ_w$, then $\mu_x(w,\succ'_w) = \mu(w)$. \label{case:not_2_preferred_proposals}
        \item if $m_i$ is the second preferred proposal among $o(w)$ according to $\succ_w$ and $m_j$ is the third preferred proposal among $o(w)$ according to $\succ_w$, then $\mu_x(w,\succ'_w)$ $\in \set{\mu(w), m_j}$. \label{case:2_3_preferred_proposals}
        \item if $m_i = \mu(w)$ and $m_j$ is the second preferred proposal among $o(w)$ according to $\succ_w$, then the second preferred proposal among $o(w)$ according to $\succ_w'$ is $m_i$ or $m_j$. \label{case:2_preferred_proposals}
    \end{enumerate}
\end{lemma}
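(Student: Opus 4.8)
The plan is to analyze the two men-proposing Gale--Shapley executions---one run under $\succ_w$ and one under $\succ'_w$, with every other preference list held fixed---side by side. Since the algorithm is confluent (the final stable matching does not depend on the order in which proposals are processed), I may fix a single canonical proposal order and \emph{couple} the two runs under it without altering either outcome. Under this coupling I track the \emph{first critical round}: the earliest round at which $w$'s accept/reject decision can differ. Because $\succ_w$ and $\succ'_w$ agree on every pair except $(m_i,m_j)$, $w$'s decision can flip only when she is choosing between $m_i$ and $m_j$ directly, i.e.\ when she currently holds one of them (with nothing better received so far) and receives the other; in particular such a round requires that \emph{both} $m_i$ and $m_j$ propose to $w$. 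Up to the first critical round the two runs are identical step for step, so every divergence originates there. Throughout I would also use two structural facts: $\mu(w)=\max_{\succ_w}o(w)$ (each woman ends with her best proposer), and the man-proposing GS outcome is woman-pessimal among all stable matchings.

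For Case~\ref{case:not_proposal}, if (say) $m_j\notin o(w)$ in the $\succ_w$-run, then by the identical-until-divergence property the $\succ'_w$-run agrees with it up to the first critical round; but no critical round can occur, since it would require both $m_i$ and $m_j$ to propose. Hence the runs coincide and $\mu_x(w,\succ'_w)=\mu(w)$. For Case~\ref{case:not_2_preferred_proposals}, let $p_1=\mu(w)$ and $p_2$ be the two most-preferred proposals in $o(w)$; an adjacent swap of two men both ranked below $p_2$ leaves $m_i,m_j$ still below $p_1$ in $\succ'_w$. Consequently $\mu$ has no new blocking pair under $\succ'_w$ (a fresh blocking pair could only pair $w$ with $m_i$ or $m_j$, and neither now outranks $\mu(w)$), so $\mu$ remains stable under $\succ'_w$. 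Woman-pessimality then gives $\mu(w)\succeq'_w\mu_x(w,\succ'_w)$, and a symmetric check---that the $\succ'_w$-matching is in turn stable under $\succ_w$---pins $\mu_x(w,\succ'_w)=\mu(w)$. The routine-but-careful step here is verifying this cross-stability, which reduces to checking that the swapped men stay below the relevant partners in both profiles.

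Cases~\ref{case:2_3_preferred_proposals} and~\ref{case:2_preferred_proposals} require tracing the single rejection cascade that the swap triggers, and this is where I expect the main difficulty to lie: the stability argument only bounds $\mu_x(w,\succ'_w)$ to be no better than $\mu(w)$ in $\succ'_w$, which is not tight enough. Writing $p_1,p_2,p_3$ for the three best proposals, in Case~\ref{case:2_3_preferred_proposals} with $m_i=p_2,m_j=p_3$ the swap makes $w$ reject $p_2$ at the critical round instead of $p_3$; I would follow the resulting rejection chain and show that its only possible effect on $o(w)$ is to remove $p_1$, so that $w$'s best remaining proposer is either $p_1$ (giving $\mu_x=\mu(w)$) or $p_3=m_j$, ruling out every other man. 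In Case~\ref{case:2_preferred_proposals}, with $m_i=\mu(w)=p_1$ and $m_j=p_2$, demoting $p_1$ below $p_2$ can dislodge $p_1$; I would show that the induced cascade cannot introduce a brand-new second-best proposer outranking both $p_1$ and $p_2$, so $w$'s second-best proposal under $\succ'_w$ stays in $\{p_1,p_2\}$.

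The unifying crux throughout is controlling how one local swap propagates through the deferred-acceptance rejection chains: the easy cases follow from ``no consequential divergence,'' whereas the hard cases demand a precise description of the \emph{single} cascade the swap can set off together with an argument that this cascade re-touches $w$ only through the men named in the statement. I expect that obstacle---bounding the cascade's footprint on $o(w)$---to absorb essentially all of the work.
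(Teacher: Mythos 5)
The paper itself never proves this lemma: it is imported as a black box from Vaish and Garg~\cite{vaish2017manipulating}, whose proof is a run-based analysis of the deferred-acceptance proposal sequence. So your attempt must be judged against that source's approach. Your coupling setup and Case~\ref{case:not_proposal} are sound: by order-independence of man-proposing GS you may fix one proposal schedule, the two runs can first diverge only when $w$ compares $m_i$ against $m_j$ directly, and such a comparison forces both men to have proposed within the common prefix, hence both lie in $o(w)$; if one of them is missing from $o(w)$ the runs never diverge.

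The genuine gap is Case~\ref{case:not_2_preferred_proposals}. Your argument there uses only the fact that both swapped men lie below $\mu(w)$ in both orders: this yields stability of $\mu$ under the swapped profile and, by woman-pessimality, $\mu(w) \succeq'_w \mu_x(w,\succ'_w)$ --- one direction only. The ``symmetric check'' you call routine is not available: stability of $\mu_x(w,\succ'_w)$ under $\succ_w$ requires that the swapped men lie below $w$'s \emph{new} partner in both orders, and that new partner is precisely the unknown. If $\mu_x(w,\succ'_w)$ equals the promoted man $m_j$, cross-stability fails (the pair $(m_i,w)$ may block under $\succ_w$), and no stability or lattice argument rules this out: one can check that $\mu_x(w,\succ'_w)=m_j$ together with $w \succ_{m_i} \mu_x(m_i,\succ'_w)$ is fully consistent with stability of both matchings under their respective profiles. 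That this hole cannot be patched by stability-only reasoning is demonstrated by Case~\ref{case:2_3_preferred_proposals} itself: there, too, both swapped men lie strictly below $\mu(w)$ (they are the second and third proposals), so your Case~\ref{case:not_2_preferred_proposals} argument applies verbatim, yet the lemma explicitly allows the match to move to $m_j$ in that configuration; your argument would thus ``prove'' a strengthening of Case~\ref{case:2_3_preferred_proposals} to equality that the cited source evidently could not establish. The feature separating the two cases --- whether one of the swapped men is the second-best proposal --- is a property of the proposal sequence that is invisible to the final matching and its blocking pairs, so Case~\ref{case:not_2_preferred_proposals} also requires the rejection-cascade analysis you reserve only for Cases~\ref{case:2_3_preferred_proposals} and~\ref{case:2_preferred_proposals}. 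Those last two cases, in turn, are stated as intentions (``I would follow the resulting rejection chain\dots'') rather than arguments, and by your own accounting they contain essentially all of the work.
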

If we use the swapping lemma sequentially, we get the following corollary.
\begin{corollary} \label{corollary:m_nd}
Given a woman $w \in W$, let $\succ'_w$ be a preference order for $w$ such that $\succ_w \neq \succ_w'$.
Let $m^* \in M$ be the most preferred man among $o(w)$ according to $\succ_w$. That is, $\mu(w) = m^*$.
Let $m_{nd} \in M$ be the second most preferred man among $o(w)$ according to $\succ_w$. 
If $m_{nd}$ is the most preferred man among $o(w) \setminus \set{m^*}$ according to $\succ_w'$, and $m^* \succ_w' m_{nd}$, then $o(w) = o_x(w,\succ'_w)$ and thus $\mu_x(w,\succ'_w)$ $= \mu(w) = m^*$.
\end{corollary}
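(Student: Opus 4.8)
The plan is to reduce the claim to a single membership fact and then to establish that fact by replaying the transformation from $\succ_w$ to $\succ'_w$ one adjacent swap at a time, invoking Lemma~\ref{lemma:swapping} at each step.

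First I would use the basic property of the men-proposing Gale--Shapley algorithm that underlies the swapping lemma: when all other agents' preferences are fixed, $w$ is matched to her most preferred proposer, i.e.\ $\mu_x(w,\sigma)$ is the $\sigma$-maximum of $o_x(w,\sigma)$ for any preference order $\sigma$ of $w$. Applying this to $\sigma=\succ'_w$, the hypothesis states that $m_{nd}$ is the $\succ'_w$-maximum of $o_x(w,\succ'_w)\setminus\set{m^*}$ and that $m^*\succ'_w m_{nd}$. Consequently, the whole corollary reduces to showing that $m^*$ still proposes to $w$ under $\succ'_w$, i.e.\ $m^*\in o_x(w,\succ'_w)$: once this is known, $m^*$ is the $\succ'_w$-best proposer and therefore $\mu_x(w,\succ'_w)=m^*$.

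To prove $m^*\in o_x(w,\succ'_w)$ I would write $\succ'_w$ as the image of $\succ_w$ under a finite sequence of adjacent transpositions and track the match, together with the identities of the two best proposers, along the sequence using Lemma~\ref{lemma:swapping}. The order of the transpositions is chosen deliberately: I would first perform every swap that involves at least one non-proposer or that involves only proposers outside the current top two, so that by cases (1) and (2) of the lemma the match stays $m^*$ and $m^*$ remains the best proposer; I would defer to the end the swaps that touch $m^*$ or the second-best proposer $m_{nd}$. The invariant I would carry is that after each prefix of swaps $m^*$ and $m_{nd}$ are the two most preferred proposers of $w$, with $m^*$ matched, so in particular $m^*\in o_x$ at every intermediate order.

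The delicate part, and the main obstacle, is exactly the swaps that fall under cases (3) and (4) of Lemma~\ref{lemma:swapping}, since these are the only moves that can change the match or interchange the top two proposers; I must argue that no third man ever overtakes the pair $\set{m^*,m_{nd}}$ and that, although $m^*$ may temporarily be demoted below $m_{nd}$, it regains the top-proposer slot by the time the terminal order is reached. This is where the endpoint hypotheses do the work: because $\succ'_w$ ranks $m^*$ above $m_{nd}$ and keeps $m_{nd}$ as the best proposer other than $m^*$, case (4) applied to the final swap that places $m^*$ above $m_{nd}$ guarantees that the second-best proposer stays within $\set{m^*,m_{nd}}$, so $m^*$ ends as the unique best proposer. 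Maintaining this invariant through cases (3)--(4) --- in particular ruling out that a demotion of $m^*$ lets a previously third-ranked proposer become second and then first --- is the step I expect to require the most careful bookkeeping, and it is precisely the sequential use of the swapping lemma alluded to before the statement.
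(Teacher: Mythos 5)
Your opening reduction is valid and, in fact, cleaner than what the paper does: since in men-proposing GS a woman is matched to her $\sigma$-maximum proposer, the corollary does follow once $m^*\in o_x(w,\succ'_w)$ is established (the paper never isolates this step; it tracks the match itself through the swap sequence). The genuine gap is in the second half, and you name it yourself: maintaining your invariant through swaps of types (\ref{case:2_3_preferred_proposals}) and (\ref{case:2_preferred_proposals}) is "the step I expect to require the most careful bookkeeping." That step is not bookkeeping --- it is the whole difficulty, and the tools you plan to use cannot carry it. Case~(\ref{case:2_3_preferred_proposals}) of Lemma~\ref{lemma:swapping} only guarantees $\mu_x(w,\succ'_w) \in \set{\mu(w), m_j}$, i.e., it explicitly permits the match to jump to a third-ranked proposer; and case~(\ref{case:2_preferred_proposals}) constrains only who the \emph{second-best} proposal is after the swap and says nothing about the match. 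So the first time you are forced into such a swap, your invariant ("$m^*$ and $m_{nd}$ are the two best proposers, with $m^*$ matched") can break, and nothing in the sketch restores it; the appeal to "the endpoint hypotheses" via case~(\ref{case:2_preferred_proposals}) on the last swap is not an argument, both because the damage may occur at an earlier swap and because knowing the second-best proposer lies in $\set{m^*,m_{nd}}$ does not imply that $m^*$ is the best proposer, nor that $m^*$ is matched.

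The idea you are missing --- and it is how the paper's proof works --- is that the hypotheses let one choose the swap sequence so that cases (\ref{case:2_3_preferred_proposals}) and (\ref{case:2_preferred_proposals}) never arise at all. Since $m^*$ is ranked above $m_{nd}$ both in $\succ_w$ and in $\succ'_w$, these two men never need to be swapped with each other. Moreover, in $\succ_w$ every proposer other than $m^*,m_{nd}$ lies below $m_{nd}$, and in $\succ'_w$ every member of $o_x(w,\succ'_w)$ other than $m^*,m_{nd}$ lies below $m_{nd}$; hence the rearrangement decomposes into swaps in the region above $m_{nd}$, each of which pairs at most one proposer (namely $m^*$) with non-proposers and is covered by case~(\ref{case:not_proposal}), and swaps in the region below $m_{nd}$, which involve only men outside the two best proposals and are covered by case~(\ref{case:not_2_preferred_proposals}) (the paper also notes the accompanying invariant that these non-proposers do not become proposers along the way). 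Every swap then preserves the match, and the conclusion follows by induction on the sequence, with no "dangerous" deferred swaps ever needed.
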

Corollary~\ref{corollary:m_nd} is the basis of our algorithm. Intuitively, the manipulator needs to ensure that $m^*$ is among the set of proposals $o(\hat{w},\ell_r)$, and that $m^*$ is the most preferred men, according to $\calF(\calL \cup \set{\ell_r})$, among this set.
That is, $m^*=\mu(\hat{w},\ell_r)$. Thus, the algorithm searches for a man, denoted by $m_{nd}$, that serves as the second-best proposal. If such a man exists, then, according to Corollary~\ref{corollary:m_nd}, the position of every man $m \in o(\hat{w},\ell_r)$ does not change $\hat{w}$'s match (which is currently $m^*$) if $m_{nd}$ is preferred over $m$ in $\calF(\calL \cup \set{\ell_r})$. In addition, the position of every man $m \notin o(\hat{w},\ell_r)$ does not change $\hat{w}$'s match at all. 
\begin{theorem}
\label{thm:alg_2_correct}
Algorithm ~\ref{alg:woman_single_manip} correctly decides the MnM-w problem in polynomial time.
\end{theorem}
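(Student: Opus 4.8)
The theorem has two parts: correctness (the returned $\ell_r$ actually achieves $\mu(\hat w,\ell_r)=m^*$, and a \textbf{false} answer correctly certifies non-manipulability) and polynomial runtime. The plan is to structure the proof around Corollary~\ref{corollary:m_nd}, which is the algorithmic invariant the whole construction is engineered to satisfy. The first order of business is the easy direction: whenever the algorithm reaches line~\ref{line:algtwo_return_ellr} it has explicitly verified $\mu(\hat w,\ell_r)=m^*$, so any returned $\ell_r$ is a genuine manipulation. The runtime bound is also routine: the outer loop over $m\in M\setminus\set{m^*}$ runs $k-1$ times, each iteration invokes the GS algorithm (which is $O(k^2)$), and the subsequent filling loops plus the final $B$-ordering loop are each $O(k\log k)$ or better, so the total is polynomial.

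\textbf{The substance: completeness.} The hard direction is showing that if \emph{some} manipulative vote $\ell_t$ with $\mu(\hat w,\ell_t)=m^*$ exists, then the algorithm does not return \textbf{false}. First I would fix such an $\ell_t$ and let $m_{nd}^{t}$ be the second-most-preferred proposal in $o_x(\hat w,\calF(\calL\cup\set{\ell_t}))$; by Corollary~\ref{corollary:m_nd} the success of $\ell_t$ forces $m^*$ to be the top proposal and $m_{nd}^{t}$ to be a man that the manipulator can push ``relatively high'' while keeping $m^*\succ m_{nd}^t$. The key claim is that \emph{the specific $m_{nd}$ the algorithm selects is at least as good a choice} as the witness $m_{nd}^t$: the loop in lines~\ref{line:begin_identify_m_nd}--\ref{line:end_identify_m_nd} tests every candidate man as a potential second proposal, for each one positions $m^*$ and $m$ so that $m$ is as high as possible subject to $m^*$ still beating it, and keeps the candidate maximizing $(m,\calL,\ell_r')$. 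Since $m_{nd}^t$ is one of the candidates tested, the loop finds \emph{some} valid $m_{nd}$ and therefore does not fall through to \textbf{return false}.

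\textbf{Closing the argument: the $B$-placement works.} Once $m_{nd}$ is fixed, the remaining task mirrors the men's-side argument (Theorem~\ref{thm:alg_1_correct}): the men not in $o(\hat w,\ell_r)$ are irrelevant to $\hat w$'s match by case~(\ref{case:not_proposal}) of Lemma~\ref{lemma:swapping}, and the men of $B=o(\hat w,\ell_r)\setminus\set{m^*,m_{nd}}$ must all be scored below $m_{nd}$ in $\calF(\calL\cup\set{\ell_r})$ so that Corollary~\ref{corollary:m_nd} applies with $m_{nd}$ as the genuine second proposal. I would reuse the pigeonhole/reversal argument from Theorem~\ref{thm:alg_1_correct}: placing the men of $B$ in reverse order of their $\calF(\calL)$-ranking in the lowest available positions guarantees that for every $b\in B$, $(m_{nd},\calL,\ell_r)>(b,\calL,\ell_r)$, because the algorithm's greedy choice of $m_{nd}$ maximizing $(m,\calL,\ell_r')$ means any $b$ that could overtake $m_{nd}$ under $\ell_t$ would have been selected as $m_{nd}$ itself. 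I expect \textbf{the main obstacle} to be the interaction between the two simultaneously-constrained men $m^*$ and $m_{nd}$: unlike the men's side, where only $w^*$ is fixed at the top, here the score gap between $m^*$ and $m_{nd}$ and the requirement $m_{nd}\in o(\hat w,\ell_r)$ must both hold, so I must argue carefully that the ``highest available positions'' left for the $B$-men after seating $m^*$, $m_{nd}$, and the non-proposers are enough to keep every $b$ below $m_{nd}$ — this is precisely where the optimality of the algorithm's $m_{nd}$-choice (it picked the feasible man with the largest achievable score) does the work, and I would spell that step out rather than leaving it to the reader.
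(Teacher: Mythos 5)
Your outline follows the paper's proof in broad strokes (easy direction, runtime, loop-optimality of $m_{nd}$, pigeonhole for $B$), but it has a genuine gap at the decisive step. You claim the loop in lines~\ref{line:begin_identify_m_nd}--\ref{line:end_identify_m_nd} cannot return false because "$m_{nd}^t$ is one of the candidates tested." That does not follow: the loop tests each candidate $m$ under the algorithm's \emph{canonical} partial order $\ell_r'$, in which only $m^*$ and $m$ are placed and every other man receives score $0$ from the manipulator, and the test requires $\mu(\hat{w},\ell_r') = m^*$ \emph{and} $m \in o(\hat{w},\ell_r')$. The proposal set $o(\hat{w},\cdot)$ is produced by the GS dynamics and changes when $\hat{w}$'s preference order changes, so the fact that $m_{nd}^t$ is the second proposal under the full order $\calF(\calL \cup \set{\ell_t})$ does not by itself imply that $m_{nd}^t$ passes the test under $\calF(\calL \cup \set{\ell_r'})$. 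This transfer is exactly where the paper spends its effort: it morphs the witness $\ell_t$ in two stages, first into $\ell_{t(1)}$ (lifting $m_{nd}$ to the highest position that keeps the manipulation successful, with $m^*$ on top) and then into $\ell_{t(2)}$ (lifting all non-proposers above the remaining proposers), invoking Corollary~\ref{corollary:m_nd} at each stage to argue that $\mu(\hat{w},\cdot)=m^*$ is preserved. Without constructing such intermediate witnesses, your "key claim" is an assertion, not a proof.

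The same missing witness undermines your closing step. You justify the reverse-order placement of $B$ by saying that "any $b$ that could overtake $m_{nd}$ under $\ell_t$ would have been selected as $m_{nd}$ itself." This is not sound: a man $b \in B$ whose score in $\calF(\calL \cup \set{\ell_r})$ ends up above $m_{nd}$'s does not thereby become a feasible second proposal (it simply destroys the manipulation), so no contradiction with the greedy choice of $m_{nd}$ arises. The paper's actual argument is a pigeonhole comparison of the algorithm's output $\ell_r$ against the restructured witness $\ell_{t(2)}$, which has the same shape as $\ell_r$ ($m^*$ and $m_{nd}$ on top, non-proposers high, $B$ low): if some $b \in B$ got a higher score from $\ell_r$ than from $\ell_{t(2)}$, then some $b' \in B$ placed below $b$ in $\ell_r$ — hence ranked above $b$ in $\calF(\calL)$ — would have at least that score in $\ell_{t(2)}$, and since $(b',\calL,\ell_{t(2)}) < (m_{nd},\calL,\ell_{t(2)})$ one concludes $(b,\calL,\ell_r) < (m_{nd},\calL,\ell_r)$. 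You correctly identify that a Theorem~\ref{thm:alg_1_correct}-style pigeonhole is needed, but without first establishing the structured witness $\ell_{t(2)}$ there is nothing to run the pigeonhole against, so the proof as proposed does not close.
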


\subsection{Coalitional Manipulation} \label{sec:calition_manipulation}

Finally, We study manipulation by a coalition of voters from the women's side. 

\begin{definition}[coalitional MnM-w] \label{def:coalition_among_a_woman}
    We are given a woman $\hat{w}$, the preference profile $\calL$ of the honest voters that associate with $\hat{w}$, the preference profile $\succ_M$, the preference profile $\succ_{W-\hat{w}}$, a coalition of manipulators $R$, and a preferred man $m^* \in M$. We are asked whether a preference profile $\calL_R$ exists such that $\mu(\hat{w}, \calL_R) = m^*$.
\end{definition}

Similar to the single manipulator setting, if $\mu(\hat{w}) = m^*$ then finding a preference profile $\calL_R$ such that  $\mu(\hat{w}, \ell_R) = m^*$ is trivial. We thus henceforth assume that $\mu(\hat{w}) \neq m^*$.
The coalitional MnM-w problem is computationally hard, even with two manipulators, and we again reduce from the Permutation Sum problem (Definition~\ref{def:permutationSum}).

\begin{theorem} \label{thm:coalitional_mnm-w}
Coalitional MnM-w is NP-Complete.
\end{theorem}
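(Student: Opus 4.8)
The plan is to prove NP-completeness by reducing from the Permutation Sum problem, following the same overall structure as the proof of Theorem~\ref{thm:coalitional_mnm-m}, but adapting the gadget to account for the fundamental asymmetry of the GS algorithm on the women's side. Membership in NP is immediate: given a candidate profile $\calL_R$, we compute $\calF(\calL \cup \calL_R)$ and run the GS algorithm, both in polynomial time, and check whether $\mu(\hat{w}, \calL_R) = m^*$. The substance is the hardness direction.

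The key structural difference I must respect is the one flagged in the text and formalized in Corollary~\ref{corollary:m_nd}: on the women's side a successful manipulation is governed not by a single threshold but by \emph{two} men. Specifically, $m^*$ must be the most preferred proposal in $o(\hat{w})$ according to $\calF(\calL \cup \calL_R)$, and there must be a designated second-best proposal $m_{nd}$ that $m^*$ beats. So the reduction instance must be engineered so that some distinguished man plays the role of $m_{nd}$, and the Permutation-Sum integers $X_1,\ldots,X_q$ are encoded (via Lemma~$1$ of~\cite{davies2011complexity}) into the honest voters' Borda scores of $q$ ``dangerous'' men $m_1,\ldots,m_q$, each of which proposes to $\hat{w}$. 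First I would set up $q+c$ men and $q+c$ women (for a small constant $c$), fix the men's preference profile $\succ_M$ so that exactly the intended set of men propose to $\hat{w}$ and so that $m^*$ would be rejected unless it is ranked above all the $m_i$, and fix $\succ_{W-\hat{w}}$ so that the acceptances of the other women force the desired proposal structure $o(\hat{w})$ at $\hat{w}$. The honest scores would make $m^*$ start just below the dangerous men, with $m_{nd}$ set to a ``pivot'' man whose score equals the target value $m^*$ must reach.

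With the gadget in place, the two directions mirror the men's-side proof. For ($\Leftarrow$), from permutations $\sigma,\pi$ with $\sigma(i)+\pi(i)=X_i$ I would build two manipulative votes that place $m^*$ top, place the pivot/boundary men at the extremes, and distribute the remaining positions among $m_1,\ldots,m_q$ according to $\sigma^{-1}$ and $\pi^{-1}$, so that each $m_i$'s added score is exactly $X_i$, pulling every $m_i$ to a score at most that of $m^*$ and leaving $m^*$ as the top proposal with $m_{nd}$ second; Corollary~\ref{corollary:m_nd} then certifies $\mu(\hat{w},\calL_R)=m^*$. For ($\Rightarrow$), from a successful manipulation I would read off score functions $\sigma(i),\pi(i)$ recording how much each of the two manipulators boosts $m_i$. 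Success forces $m^*$ to be ranked above every $m_i$, giving the inequalities $\sigma(i)+\pi(i)\le X_i$ for each $i$; summing and using $\sum_i X_i = q(q+1)$ together with the lower bounds $\sum_i\sigma(i),\sum_i\pi(i)\ge \tfrac{q(q+1)}{2}$ (forced because the boundary men must be ranked lowest) squeezes all inequalities to equalities, so $\sigma,\pi$ are genuine permutations of $1,\ldots,q$ with $\sigma(i)+\pi(i)=X_i$.

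The main obstacle I anticipate is not the counting argument, which is essentially identical to the men's-side case, but designing $\succ_M$ and $\succ_{W-\hat{w}}$ so that the proposal set $o(\hat{w},\calL_R)$ is exactly the controlled set of men I want, \emph{and} so that the ``two men'' condition of Corollary~\ref{corollary:m_nd} reduces cleanly to a single-threshold score comparison. In the men's-side proof the manipulator only needed to push $w^*$ above one boundary woman $w_{q+1}$; here I must simultaneously guarantee that $m^*$ is a proposer to $\hat{w}$ at all and that the pivot man $m_{nd}$ sits just below $m^*$ while all dangerous men $m_i$ are squeezed below $m_{nd}$. Getting the men's and other women's preferences to realize this proposal structure robustly (for every $\calL_R$ the algorithm might consider, not just the intended one) is the delicate part, and it is where I would spend most of the care, likely by invoking Corollary~\ref{corollary:m_nd} to argue that only the relative order of the proposers matters and reducing the condition $\max_i s(m_i,\calL,\calL_R) \le s(m^*,\calL,\calL_R)$ to the Permutation-Sum arithmetic.
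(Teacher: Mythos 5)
Your NP-membership argument and your counting argument (the inequalities $\sigma(i)+\pi(i)\le X_i$, the lower bounds $\sum_i\sigma(i),\sum_i\pi(i)\ge q(q+1)/2$ forced by the boundary man being placed last, and the squeeze to equality) match the paper's proof exactly, as does the choice of Permutation Sum as the source problem. The genuine gap is that you never construct the gadget: you explicitly defer the design of $\succ_M$ and $\succ_{W-\hat{w}}$ --- what you call ``the delicate part'' --- and you frame that design around satisfying the two-men condition of Corollary~\ref{corollary:m_nd}, i.e., around installing a pivot $m_{nd}$ sitting just below $m^*$ among the proposals. Without that construction the hardness direction cannot be checked: what you have is a plan for a reduction, not a reduction. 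Worse, the framing makes the ($\Rightarrow$) direction genuinely hard, because if the proposal set $o(\hat{w},\calL_R)$ can vary with the manipulators' votes, you must argue about every possible proposal structure a successful manipulation could induce.

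The missing idea is that the gadget can be chosen so that the women's-side subtlety vanishes entirely. In the paper's construction there are $q+3$ men and $q+3$ women, every man ranks $\hat{w}$ first, and every woman other than $\hat{w}$ has the same fixed order $(m^*, m_1, \ldots, m_{q+2})$. Consequently all men propose to $\hat{w}$ in the first round of the GS algorithm, so $o(\hat{w}) = M$ no matter what $\calL_R$ is, and $\mu(\hat{w},\calL_R)$ is simply the top-ranked man in $\calF(\calL\cup\calL_R)$. The condition ``$m^*$ is matched to $\hat{w}$'' therefore collapses to the single-threshold condition ``$m^*$ is ranked first by the Borda SWF,'' no $m_{nd}$ or proposal-structure analysis is needed, and the proof becomes verbatim the men's-side argument: Lemma~$1$ of Davies et al.\ yields honest scores $(2q+4+C,\, 2q+4+C-X_1, \ldots, 2q+4+C-X_q,\, C,\, z)$ with $m_{q+1}$ as the sole boundary man that both manipulators must rank last, and the arithmetic you already described goes through unchanged. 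Your worry about controlling $o(\hat{w},\calL_R)$ ``for every $\calL_R$ the algorithm might consider'' is precisely the problem this gadget eliminates: when all men propose to $\hat{w}$ unconditionally, the proposal set cannot depend on the manipulators' votes, and Corollary~\ref{corollary:m_nd} is never needed in the hardness proof (it is only needed for the algorithmic results).
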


Similar to the coalitional MnM-m, the coalitional MnM-w also has an efficient heuristic algorithm that finds a successful manipulation.
We use Algorithm ~\ref{alg:coalition_manip_woman}, which works as follows.
\begin{algorithm}[hbpt]
\caption{Manipulation by a coalition of voters from the women's side}
\label{alg:coalition_manip_woman}
\SetAlgoLined
\LinesNumbered
\SetAlgoNoEnd
    \For{\textbf{each } $m_{nd} \in M \setminus \set{m^*}$}{
        \tcp{phase 1:}
        $gap \leftarrow s(m_{nd}, \calL) - s(m^*, \calL)$ \\
        \If{$m_{nd}$ is preferred over $m^*$ according to the lexicographical tie breaking rule}{ 
            $gap = gap + 1$ \\
        }
        \If{$|R| \cdot (k-1) < gap$}{ \label{line:gap_too_big}
            \textbf{continue} to the next iteration 
        } 
        $\calL_R \leftarrow (\ell_1,..., \ell_{|R|})$ where each preference order is an empty one \\
        \If{$|R| \geq gap$}{ \label{line:begin_2_preferred}
            place $m^*$ in the highest position and $m_{nd}$ in the second highest position, in $\max(gap + \lceil (|R| - gap) / 2  \rceil,0)$ preference orders of $\calL_R$ \\ \label{line:first_place_2_preferred}
            place $m^*$ in the second highest position and $m_{nd}$ in the highest position in all of the other preference orders of $\calL_R$ \\ \label{line:end_2_preferred}
        }
        \Else{ 
            place $m^*$ in the highest position in each $\ell_r \in \calL_R$ \\
            $s_{m_{nd}} \leftarrow |R|\cdot (k-1) - gap$ \\
            place $m_{nd}$ in $(s_{m_{nd}} \mod |R|)$ manipulators such that it gets a score of $\lceil \frac{s_{m_{nd}}}{|R|} \rceil$ from each manipulator  \\ 
            place $m_{nd}$ in the other manipulators such that it gets a score of $\lfloor \frac{s_{m_{nd}}}{|R|} \rfloor$ from each manipulator  \\ 
        }
        \If{$\mu(\hat{w}, \calL_R) \neq m^*$ or $m_{nd} \notin o(\hat{w},\calL_R)$}{ \label{line:pass_phase_1}
            \textbf{continue} to the next iteration
        }
        \tcp{phase 2:}
        $B^{nd} \leftarrow o(\hat{w}, \calL_R) \setminus \set{m^*, m_{nd}}$ \\ \label{line:Bnd_definition}
        \For{\textbf{each } $r \in R$} { \label{line:coalition_men_begin_stage}
            \For{\textbf{each } $m \in M \setminus (B^{nd} \cup \set{m^*, m_{nd}})$}{
                place $m$ in the next highest available position in $\ell_r$ \\
            }
            $B' \leftarrow B^{nd}$ \\
            \While{$B' \neq \emptyset$} {
                $b \leftarrow$ the least preferred man from $B'$ according to $\calF(\calL \cup \calL_R)$ \\
             place $b$ at the highest available position in $\ell_r$ \\
             remove $b$ from $B'$ \\
            }
        } \label{line:coalition_men_end_stage}
        \If{$\mu(\hat{w}, \calL_R) = m^*$}{
            \textbf{return} $\calL_R$    
        }
    }
    \textbf{return} false
\end{algorithm}
Similar to Algorithm \ref{alg:woman_single_manip}, Algorithm \ref{alg:coalition_manip_woman} needs to identify a man $m_{nd} \in M$, such that $m_{nd}$ is ranked in $\calF(\calL \cup \calL_R)$ as high as possible while $m^*$ is still preferred over $m_{nd}$ according to $\calF(\calL \cup \calL_R)$.
In addition, the algorithm needs to ensure that $\mu(\hat{w},\calL_R)=m^*$ and $m_{nd} \in o(\hat{w},\calL_R)$, which is done at the end of phase $1$.
Indeed, finding such a man $m_{nd} \in M$, and placing him and $m^*$ in every $\ell_r \in \calL_R$ is not trivial. 
The algorithm considers every $m \in M \setminus \set{m^*}$, and computes the difference between the score of $m$ from $\calL$ and the score of $m^*$ from $\calL$. 
Clearly, if this gap is too big, $m$ cannot be $m_{nd}$ (line~\ref{line:gap_too_big}). Otherwise, there are two possible cases.
If there are many manipulators, specifically, $|R| \geq gap$, then the algorithm places $m^*$ and $m$ in the two highest positions in every $\ell_r$ (lines~\ref{line:first_place_2_preferred}-\ref{line:end_2_preferred}). 
On the other hand, if $|R| < gap$, then the algorithm places $m^*$ in the highest position in every $\ell_r$. 
The algorithm places $m$ such that he gets a total score of $|R|\cdot (k-1)-gap$ from the manipulators. Moreover, the algorithm tries to place $m$ in almost the same position in every $\ell_r$.
If phase $1$ is successful, the algorithm proceeds to fill the preference orders of $\calL_R$ iteratively in phase $2$.
The algorithm first defines the set $B^{nd}$, which consists of all the men from $o(\hat{w}, \calL_R)$, except for $m^*$ and $m_{nd}$. Note that at the beginning of phase $2$, in which $B^{nd}$ is defined, only $m^*$ and $m$ are positioned in every $\ell_r \in \calL_R$.
Then, in every $\ell_r \in \calL_R$, the algorithm places all the men that are not in $B^{nd}$ (except for $m^*$ and $m_{nd}$ that are already placed in $\calL_R$) in the highest available positions. The algorithm places all the men from $B^{nd}$ in the lowest positions in $\ell_r$, and they are placed in a reverse order with regard to their current order in $\calF(\calL \cup \calL_R)$. Note that since $\calL_R$ is updated in every iteration, the order in which the men from $B^{nd}$ are placed in each $\ell_r$ is not the same for each $r$. If $\mu(\hat{w},\ell_R)=m^*$ then we are done; otherwise, the algorithm iterates and considers another man. We refer to each of the iterations in Lines \ref{line:coalition_men_begin_stage}-\ref{line:coalition_men_end_stage} as a \emph{stage} of the algorithm.

We now show that Algorithm~\ref{alg:coalition_manip_woman} will succeed in any given instance such that the same instance but with two less manipulators is manipulable. That is, unlike the coalitional MnM-m, the coalitional MnM-w admits a $2$-additive approximation algorithm.
Formally,
\begin{theorem} \label{thm:coalition_manip_woman}
Given an instance of coalitional MnM-w,
    \begin{enumerate}
        \item If there is no preference profile making $m^*$ the match of $\hat{w}$ exists, then Algorithm~\ref{alg:coalition_manip_woman} will return false. 
        \item If a preference profile making $m^*$ the match of $\hat{w}$ exists, then for the same instance with two additional manipulators, Algorithm~\ref{alg:coalition_manip_woman} will return a preference profile that makes $m^*$ the match of $\hat{w}$.
    \end{enumerate}
\end{theorem}

\section{Conclusion and Future Work}
In this paper, we initiate the analysis of strategic voting in the context of stable matching of teams. Specifically, we assume that the Borda rule is used as a SWF, which outputs an order over the agents that is used as an input in the GS algorithm. Note that in the standard model of manipulation of Borda, the goal is that a specific candidate will be the winner. In our setting, the algorithms need also to ensure that specific candidates will not be ranked too high. Similarly, in the standard model of manipulation of the GS algorithm, the goal is simply to achieve a more preferred match. In our setting, the algorithms for manipulation need also to ensure that a less preferred spouse is matched to a specific agent. Therefore, even though the manipulation of the Borda rule and the manipulation of the GS algorithm have already been studied, our analysis of the manipulation of Borda rule in the context of GS stable matching provides a better understanding of both algorithms.

Interestingly, our algorithms for the single manipulator settings are quite powerful. They provide exact solutions for the single manipulator case, and their generalizations provide approximate solutions to the coalitional manipulation settings, both when the manipulators are on the men’s side or on the women’s side.
%
%

For future work, we would like to extend our analysis and study additional voting rules as SWFs. It is also worth studying the destructive manipulation objective. In addition, it will be interesting to examine a new type of manipulation, which is only relevant in our setting, in which there is a coalition of manipulators, but every manipulator is associated with a different agent.

\section*{Acknowledgments}
This research has been partly supported by the Israel Science Foundation under grant 1958/20, by the EU Project TAILOR under grant 952215, and by the Ministry of Science, Technology \& Space (MOST), Israel.

\bibliographystyle{plainnat}
\bibliography{sample-bibliography}

\begin{thebibliography}{25}
\providecommand{\natexlab}[1]{#1}
\providecommand{\url}[1]{\texttt{#1}}
\expandafter\ifx\csname urlstyle\endcsname\relax
  \providecommand{\doi}[1]{doi: #1}\else
  \providecommand{\doi}{doi: \begingroup \urlstyle{rm}\Url}\fi

\bibitem[Abdulkadiro{\u{g}}lu et~al.(2005)Abdulkadiro{\u{g}}lu, Pathak, and
  Roth]{abdulkadirouglu2005new}
Atila Abdulkadiro{\u{g}}lu, Parag~A Pathak, and Alvin~E Roth.
\newblock The new york city high school match.
\newblock \emph{American Economic Review}, 95\penalty0 (2):\penalty0 364--367,
  2005.

\bibitem[Bossert and Sprumont(2014)]{bossert2014strategy}
Walter Bossert and Yves Sprumont.
\newblock Strategy-proof preference aggregation: Possibilities and
  characterizations.
\newblock \emph{Games and Economic Behavior}, 85:\penalty0 109--126, 2014.

\bibitem[Bossert and Storcken(1992)]{bossert1992strategy}
Walter Bossert and Ton Storcken.
\newblock Strategy-proofness of social welfare functions: the use of the kemeny
  distance between preference orderings.
\newblock \emph{Social Choice and Welfare}, 9\penalty0 (4):\penalty0 345--360,
  1992.

\bibitem[Conitzer and Walsh(2016)]{conitzer2016barriers}
Vincent Conitzer and Toby Walsh.
\newblock Barriers to manipulation in voting.
\newblock In Felix Brandt, Vincent Conitzer, Ulle Endriss, J{\'{e}}r{\^{o}}me
  Lang, and Ariel~D. Procaccia, editors, \emph{Handbook of Computational Social
  Choice}, pages 127--145. Cambridge University Press, 2016.

\bibitem[Davies et~al.(2011)Davies, Katsirelos, Narodytska, and
  Walsh]{davies2011complexity}
Jessica Davies, George Katsirelos, Nina Narodytska, and Toby Walsh.
\newblock Complexity of and algorithms for borda manipulation.
\newblock In \emph{The Twenty-Fifth {AAAI} Conference on Artificial
  Intelligence}, pages 657--662, 2011.

\bibitem[Dogan and Lain{\'{e}}(2016)]{dogan2016strategic}
Onur Dogan and Jean Lain{\'{e}}.
\newblock Strategic manipulation of social welfare functions via strict
  preference extensions.
\newblock In \emph{The 13th Meeting of the Society for Social Choice and
  Welfare}, page 199, 2016.

\bibitem[Dubins and Freedman(1981)]{dubins1981machiavelli}
Lester~E Dubins and David~A Freedman.
\newblock Machiavelli and the gale-shapley algorithm.
\newblock \emph{The American Mathematical Monthly}, 88\penalty0 (7):\penalty0
  485--494, 1981.

\bibitem[Faliszewski and Procaccia(2010)]{faliszewski2010ai}
Piotr Faliszewski and Ariel~D Procaccia.
\newblock Ai's war on manipulation: Are we winning?
\newblock \emph{AI Magazine}, 31\penalty0 (4):\penalty0 53--64, 2010.

\bibitem[Gale and Shapley(1962)]{gale1962college}
David Gale and Lloyd~S Shapley.
\newblock College admissions and the stability of marriage.
\newblock \emph{The American Mathematical Monthly}, 69\penalty0 (1):\penalty0
  9--15, 1962.

\bibitem[Gale and Sotomayor(1985)]{gale1985ms}
David Gale and Marilda Sotomayor.
\newblock Ms. machiavelli and the stable matching problem.
\newblock \emph{The American Mathematical Monthly}, 92\penalty0 (4):\penalty0
  261--268, 1985.

\bibitem[Hosseini et~al.(2021)Hosseini, Umar, and Vaish]{hosseiniaccomplice}
Hadi Hosseini, Fatima Umar, and Rohit Vaish.
\newblock Accomplice manipulation of the deferred acceptance algorithm.
\newblock In \emph{The Thirtieth International Joint Conference on Artificial
  Intelligence}, pages 231--237, 2021.

\bibitem[Hosseini et~al.(2022)Hosseini, Umar, and Vaish]{hosseini2022two}
Hadi Hosseini, Fatima Umar, and Rohit Vaish.
\newblock Two for one \& one for all: Two-sided manipulation in matching
  markets.
\newblock \emph{arXiv preprint arXiv:2201.08774}, 2022.

\bibitem[Huang(2006)]{huang2006cheating}
Chien-Chung Huang.
\newblock Cheating by men in the gale-shapley stable matching algorithm.
\newblock In \emph{European Symposium on Algorithms}, pages 418--431. Springer,
  2006.

\bibitem[Perach and Anily(2022)]{perach2022stable}
Nitsan Perach and Shoshana Anily.
\newblock Stable matching of student-groups to dormitories.
\newblock \emph{European Journal of Operational Research}, 2022.

\bibitem[Pini et~al.(2011)Pini, Rossi, Venable, and
  Walsh]{pini2011manipulation}
Maria~Silvia Pini, Francesca Rossi, K~Brent Venable, and Toby Walsh.
\newblock Manipulation complexity and gender neutrality in stable marriage
  procedures.
\newblock \emph{Autonomous Agents and Multi-Agent Systems}, 22\penalty0
  (1):\penalty0 183--199, 2011.

\bibitem[Roth(1982)]{roth1982economics}
Alvin~E Roth.
\newblock The economics of matching: Stability and incentives.
\newblock \emph{Mathematics of operations research}, 7\penalty0 (4):\penalty0
  617--628, 1982.

\bibitem[Roth(1996)]{roth1996nrmp}
Alvin~E Roth.
\newblock The nrmp as a labor market: understanding the current study of the
  match.
\newblock \emph{Journal of the American Medical Association}, 275\penalty0
  (13):\penalty0 1054--1056, 1996.

\bibitem[Roth et~al.(2004)Roth, S{\"o}nmez, and {\"U}nver]{roth2004kidney}
Alvin~E Roth, Tayfun S{\"o}nmez, and M~Utku {\"U}nver.
\newblock Kidney exchange.
\newblock \emph{The Quarterly journal of economics}, 119\penalty0 (2):\penalty0
  457--488, 2004.

\bibitem[Schmerler and Hazon(2021)]{schmerler2021strategic}
Leora Schmerler and Noam Hazon.
\newblock Strategic voting in negotiating teams.
\newblock In \emph{International Conference on Algorithmic Decision Theory},
  pages 209--223. Springer, 2021.

\bibitem[Schmerler et~al.(2022)Schmerler, Hazon, and
  Kraus]{schmerler2022strategic}
Leora Schmerler, Noam Hazon, and Sarit Kraus.
\newblock Strategic voting in the context of stable-matching of teams.
\newblock \emph{arXiv preprint arXiv:2207.04912}, 2022.

\bibitem[Shen et~al.(2021)Shen, Deng, and Tang]{shen2021coalitional}
Weiran Shen, Yuan Deng, and Pingzhong Tang.
\newblock Coalitional permutation manipulations in the gale-shapley algorithm.
\newblock \emph{Artificial Intelligence}, 301:\penalty0 103577, 2021.

\bibitem[Teo et~al.(2001)Teo, Sethuraman, and Tan]{teo2001gale}
Chung-Piaw Teo, Jay Sethuraman, and Wee-Peng Tan.
\newblock Gale-shapley stable marriage problem revisited: Strategic issues and
  applications.
\newblock \emph{Management Science}, 47\penalty0 (9):\penalty0 1252--1267,
  2001.

\bibitem[Vaish and Garg(2017)]{vaish2017manipulating}
Rohit Vaish and Dinesh Garg.
\newblock Manipulating gale-shapley algorithm: Preserving stability and
  remaining inconspicuous.
\newblock In \emph{The Twenty-Sixth International Joint Conference on
  Artificial Intelligence}, pages 437--443, 2017.

\bibitem[Yu et~al.(2004)Yu, Hoogeveen, and Lenstra]{yu2004minimizing}
Wenci Yu, Han Hoogeveen, and Jan~Karel Lenstra.
\newblock Minimizing makespan in a two-machine flow shop with delays and
  unit-time operations is \text{NP}-hard.
\newblock \emph{Journal of Scheduling}, 7\penalty0 (5):\penalty0 333--348,
  2004.

\bibitem[Zuckerman et~al.(2009)Zuckerman, Procaccia, and
  Rosenschein]{zuckerman2009algorithms}
Michael Zuckerman, Ariel~D Procaccia, and Jeffrey~S Rosenschein.
\newblock Algorithms for the coalitional manipulation problem.
\newblock \emph{Artificial Intelligence}, 173\penalty0 (2):\penalty0 392--412,
  2009.

\end{thebibliography}

\begin{contact}
Leora Schmerler\\
Ariel University\\
Ariel, Israel\\
\email{leoras@ariel.ac.il}
\end{contact}

\begin{contact}
Noam Hazon\\
Ariel University\\
Ariel, Israel\\
\email{noamh@ariel.ac.il}
\end{contact}

\begin{contact}
Sarit Kraus\\
Bar-Ilan University\\
Rammat-Gan, Israel\\
\email{sarit@cs.biu.ac.il}
\end{contact}

\clearpage
\appendix
\section{Table of Notations}

\begin{table*}[h!]
\begin{center}
 \begin{tabular}{||c c||}
 \hline
 $M / W$ & The set of men / women \\
 \hline  
 $k$ & $|M|$ and $|W|$ \\
 \hline
 $\succ_m$ / $\succ_w$ & The linear preference order of man $m$ / woman $w$ \\
  \hline
 $\succ_M$ / $\succ_W$ & The set of preference orders of the men / women \\
 \hline
 $\succ_{M-m}$ / $\succ_{W-w}$ & The set of preference orders of the men / women without $m$ / $w$ \\
 \hline
 $\mu(m)$ / $\mu(w)$ & The match $m$ / $w$ received \\
 \hline
 
 $\hat{m}$ / $\hat{w}$ & A man / woman that represents a voting group  \\
 \hline
 $V$ & The set of honest voters associated with $\hat{w}$ or $\hat{m}$ \\
 \hline
 $\calL$ & The set of preference order of $V$ \\
 \hline
 $\calF(\calL)$ & The output of the SWF on the preferences set $\calL$  \\
 \hline

 $w^*$ / $m^*$ & The preferred woman / man according to the preference of $r$ \\
 \hline
 
 $\mu(\hat{w})$ / $\mu(\hat{m})$ & The match $\hat{w}$ / $\hat{m}$ received according to $\hat{w}$'s / $\hat{w}$'s preference order $\calF(\calL)$\\
 \hline
 $\mu(\hat{w}, \ell_r)$ / $\mu(\hat{m}, \ell_r)$ & The match $\hat{w}$ / $\hat{m}$ received according to $\hat{w}$'s / $\hat{m}$'s preference order $\calF(\calL \cup \set{\ell_r})$\\
 \hline
 $o(\hat{w})$ & The set of proposals $\hat{w}$ received according to $\hat{w}$'s preference order $\calF(\calL)$\\
 \hline
 $o(\hat{w}, \ell_r)$ & The set of proposals $\hat{w}$ received according to $\hat{w}$'s preference order $\calF(\calL \cup \set{\ell_r})$\\
 \hline
 

 $r$ & The manipulator \\
 \hline
 $\ell_r$ & The linear preference order of $r$ \\ 
 \hline
 $R$ & The set of manipulators \\
 \hline
 $\calL_R = \set{\ell_{1}, ..., \ell_{|R|}}$ & The preference profile of $R$ \\
 \hline

 $\calF(\calL \cup \set{\ell_r})$ & The output of the SWF on the preferences set $\calL \cup {\ell_r}$  \\
 \hline
 $s(c,\ell_v)$ & The score of $c$ from the preferences of $v$ according to the voting rule \\
 \hline
 $s(c,\calL)$ & The score of $c$ from the preferences set $\calL$, i.e., $s(c,\calL) = \sum_{v \in V} s(c,v)$ \\
 \hline
  $s(c,\calL, \ell_r)$ & The score of $c$ from the preferences set $\calL \cup \set{\ell_r}$ \\
 \hline
 $(c,\ell) > (c',\ell')$ & $s(c,\ell) > s(c',\ell')$ \\ & or $s(c,\ell) = s(c',\ell')$ but $c$ is preferred over $c'$ \\ & according to a lexicographical tie-breaking rule \\
 \hline
 $(c,\calL, \ell) > (c',\calL, \ell')$ & $s(c,\calL, \ell) > s(c',\calL, \ell')$ \\ & or $s(c,\calL, \ell) = s(c',\calL, \ell')$ but $c$ is preferred over $c'$ \\ & according to a lexicographical tie-breaking rule \\
 \hline

 $c >_\ell c'$ & $(c,\ell) \succ (c',\ell)$ \\
 \hline

\end{tabular}
\end{center}
    \caption{Table of Notations.}
\end{table*}

\section{Omitted Proofs}
\subsection{Proof of Corollary~\ref{corollary:placing_higher}}

\begin{proof}
Assume by contradiction that $w_m' = \mu_x(m,\succ'_m) \neq w_m$. There are two possible cases:
\begin{enumerate}
    \item $w_m' \succ_m w_m$.
    That is, $m$ will have an incentive to strategically report the preference order $\succ'_m$ instead of his sincere preference order $\succ_m$, in contradiction to Theorem~\ref{thm:truthful}.
    \item $w_m \succ_m w_m'$.
    By the definition of $\succ'_m$, it is not possible that $w_m' \succ'_m w_m$.
    Consider the setting in which $\succ_W$ and $\succ_{M-m}$ are the same, but the sincere preference order of $m$ is $\succ'_m$. However, $w_m \succ'_m w_m'$, and thus, in this setting, $m$ will have an incentive to strategically report the preference order $\succ_m$ instead of his sincere preference order $\succ'_m$, in contradiction to Theorem~\ref{thm:truthful}.
\end{enumerate}
\end{proof}

\subsection{Proof of Lemma~\ref{lemma:exists_pref_at_all}}
%
\begin{proof}
One direction is trivial, so we prove the other direction.
Assume that a preference order $\succ_t$ for $\hat{m}$ exists such that $w^*=\mu_x(\hat{m}, \succ_t)$. We show  that $w^*=\mu_x(\hat{m}, \succ_x)$.
Assume to contradiction that $\mu_x(\hat{m}, \succ_x) = w$ and $w \neq w^*$. It is not possible that $w \succ_x w^*$, since the condition in line \ref{line:begin_single_find_B} would have been true (where $b=w$), and then the algorithm would have placed $w$ below $w^*$ in $\succ_x$ (in line \ref{line:move_b}). Now, consider the setting in which $\succ_W$ and $\succ_{M-\hat{m}}$ are the same, but the sincere preference order of $\hat{m}$ is $\succ_x$. In this setting, if $\hat{m}$ is truthful, then his match is $w$. However, if $\hat{m}$ strategically reports the preference order $\succ_t$, then his match will be $w^*$ and $w^* \succ_x w$, in contradiction to Theorem~\ref{thm:truthful}.
\end{proof}

\subsection{Proof of Lemma~\ref{lemma:b_prevent_m*}}

\begin{proof}
Let $b \in B$ such that $b \succ_t w^*$.
Since $b \in B$, then there is an iteration of Algorithm~\ref{alg:man_single_manip} in which $b$ is moved immediately below $w^*$ in $\succ_x$ (Line~\ref{line:move_b}).
Let $\succ^{(1)}_x$ be the preference order $\succ_x$ in this iteration, before the execution of Line~\ref{line:move_b}.
By definition, $\mu_x(\hat{m}, \succ^{(1)}_x) = b$.
Let $\succ^{(2)}_x$ be $\succ^{(1)}_x$ when we move all the woman $w$ such that $b \succ_t w \succ_t w^*$ to be below $b$ and above $w^*$ in $\succ^{(2)}_x$, and  all the woman $w$ such that $w^* \succ_t w$ to be below $w^*$ in $\succ^{(2)}_x$. 
According to Corollary~\ref{corollary:placing_higher}, $\mu_x(\hat{m}, \succ^{(2)}_x) = b$.
Let $\succ^{(3)}_x$ be $\succ^{(2)}_x$ when we move all the woman $w$ such that $w \succ_t b$ to be above $b$ in $\succ^{(3)}_x$.
Assume by contradiction that $\mu_x(\hat{m}, \succ_t)$ $= w^*$. By Lemma~\ref{lemma:permutation}, $\mu_x(\hat{m}, \succ^{(3)}_x) = w^*$. On the other hand, each woman $w$ such that $w^* \succ^{(3)}_x w$ is also below $w^*$ in $\succ^{(2)}_x$. Therefore, according to Corollary~\ref{corollary:placing_higher}, $\mu_x(\hat{m}, \succ^{(2)}_x) = w^*$, which is a contradiction. Therefore, $\mu_x(\hat{m}, \succ_t) \neq w^*$.
\end{proof}

\subsection{Proof of Theorem~\ref{thm:alg_1_correct}}
\begin{proof}
Clearly, the algorithm runs in polynomial time since there are three loops, where the three loops together iterate at most $2k$ times, and the running time of the GS matching algorithm is in  $O(k^2)$. In addition, if the algorithm returns a preference order, which is a manipulative vote for the manipulator $r$, then $w^*$ will be the match of $\hat{m}$ by the GS algorithm. We need to show that if there exists a preference order for the manipulator $r$ that makes $w^*$ the match of $\hat{m}$, then our algorithm will find such a preference order for $r$.
Assume that a manipulative vote, $\ell_t$, exists, which makes $w^*$ the match of $\hat{m}$. That is, $\mu(\hat{m}, \ell_t) = w^*$. Then, by Lemma~\ref{lemma:exists_pref_at_all}, the algorithm finds a preference order for $\hat{m}$ that makes $w^*$ his match (i.e., the preference order $\succ_x$), and thus it does not return false in line~\ref{line:single_man_no_preference}.
We show that Algorithm~\ref{alg:man_single_manip} returns $\ell_r$ in line~\ref{line:single_man_success}.
we begin by analyzing $\ell_r$ after phase $1$. Since $w^*$ is positioned in the highest position in $\ell_r$, then $s(w^*, \ell_r) \geq s(w^*, \ell_t)$. Since $\ell_t$ is a successful manipulation, then $w^*$ is preferred over every $b \in B$ in $\calF(\calL \cup \set{\ell_t})$ according to Lemma~\ref{lemma:b_prevent_m*}. In addition, every $b \in B$ gets a score of $0$ from $\ell_r$, and thus $w^*$ is preferred over every $b \in B$ in $\calF(\calL \cup \set{\ell_r})$. Note that when Algorithm~\ref{alg:man_single_manip} changes $\calF(\calL)$ to $\succ_x$ it only re-position the women $b \in B$. That is, for every $w \in W \setminus B$ such that $w^* \succ_x w$, $w$ is also less preferred than $w^*$ in $\calF(\calL)$. However, since $w^*$ is positioned in the highest position in $\ell_r$, then for every $w \in W \setminus B$ such that $w^* \succ_x w$, $w$ is also less preferred than $w^*$ in $\calF(\calL \cup \set{\ell_r})$.
Overall, every $w$ such that $w^* \succ_x w$ is also less preferred than $w^*$ in $\calF(\calL \cup \set{\ell_r})$, and thus according to Corollary~\ref{corollary:placing_higher}, $\mu(\hat{m},\ell_r) = w^*$.

We now show that Algorithm~\ref{alg:man_single_manip}  (lines~\ref{line:begin_single_man_reverse}-\ref{line:end_single_man_reverse}) can assign scores to all the women $b \in B$ such that $\mu(\hat{m},\ell_r) = w^*$ in line~\ref{line:single_man_success} (i.e., $\ell_r$ is a successful manipulation). 
For any $b \in B$, if $s(b,\ell_r) \leq s(b,\ell_t)$ then $s(b,\calL,\ell_r) \leq s(b,\calL,\ell_t)$. Since $(w^*,\calL,\ell_t) > (b,\calL,\ell_t)$ then $(w^*,\calL,\ell_r) > (b,\calL,\ell_r)$. 
Otherwise, let $b \in B$ be a woman such that $s(b,\ell_r) > s(b,\ell_t)$ and let $s = s(b, \ell_r)$. 
There are $s$ women from $B$ below $b$ in $\ell_r$. 
According to the pigeonhole principle, at least one of the women from $B$, denoted $b'$, gets a score of at least $s$ from $\ell_t$.
That is, $s(b', \ell_t) \geq s(b, \ell_r)$. By the algorithm construction, all of the women $b'' \in B$ that are positioned lower than $b$ in $\ell_r$ are positioned higher than $b$ in ${\calF}(\calL)$. That is, $(b',\calL) > (b,\calL)$. 
However, $(b',\calL, \ell_t) < (w^*,\calL, \ell_t)$ and thus $(b,\calL, \ell_r) < (w^*,\calL, \ell_r)$. 
Overall, after placing the women from $B$ in $\ell_r$, $\forall b \in B$, $(b,\calL, \ell_r) < (w^*,\calL, \ell_r)$.
That is, every $w$ such that $w^* \succ_x w$ is also less preferred than $w^*$ in $\calF(\calL \cup \set{\ell_r})$, and thus according to Corollary~\ref{corollary:placing_higher}, $\mu(\hat{m},\ell_r) = w^*$.
\end{proof}

\subsection{Proof of Theorem~\ref{thm:coalitional_mnm-m}}

\begin{proof}
Clearly, the coalitional MnM-m problem is in $NP$. 
Given an instance of the Permutation Sum problem, we build an instance of the coalitional MnM-m problem as follows. There are $q + 3$ men and $q + 3$ women. 
The men are: $m_1, ..., m_{q+2}$ and $\hat{m}$. 
The women are: $w_1, ..., w_{q+2}$ and $w^*$. We assume that $w^*$ is preferred over the other women according to the lexicographical tie-breaking rule.
The $q$ women, $w_1,\ldots,w_q$, correspond to the integers $X_1,\ldots,X_q$. 
By Lemma $1$ from~\cite{davies2011complexity}, we can construct an election in which the non-manipulators cast votes such that:
\[
\hat{m} = (w_{q+1}, w_1, \ldots, w_q, w^* ,w_{q + 2}),
\]
 and the corresponding scores are: 
\[
(2q + 4 + C, 2q + 4 + C - X_1 \ldots, 2q + 4 + C - X_q, C, z),
\] 
where $C$ is a constant and $z \leq C$.
The preference orders of the men beside $\hat{m}$ are the same, and are as follows:
\[
(w^*,w_1,\ldots,w_{q + 2})
\] 
The preference orders of the women are the same, and are as follows:
\[
(\hat{m},m_1,\ldots,m_{q + 2})
\] 
We show that two manipulators can make the man $\hat{m}$ the match of woman $w^*$ iff the Permutation Sum problem has a solution.

($\Leftarrow$) Suppose we have two permutations $\sigma$ and $\pi$ of $1$ to $q$ such that $\sigma (i) + \pi (i) = X_i$. Let $\sigma^{-1}$ be the inverse function of $\sigma$, i.e., $i=\sigma^{-1}(w)$ if $w=\sigma(i)$. We define $\pi^{-1}(w)$  similarly.
We construct the following two manipulative votes:
\[
(w^*,w_{q+2},w_{\sigma^{-1}(q)},\ldots,w_{\sigma^{-1}(1)},w_{q+1})
\]
\[
(w^*,w_{q+2}, w_{\pi^{-1}(q)}, \ldots, w_{\pi^{-1}(1)},w_{q+1})
\]
Since $\sigma (i) + \pi (i) = X_i$
and $z \leq C$, 
the preference order of $\hat{m}$, $\calF(\calL_R, \calL)$ is: 
\[
(w^*, w_1, \ldots,w_q, w_{q+1} ,w_{q+2})
\]
since the corresponding scores are:
\[
(2q+4+C,2q+4+C-X_1+X_1, \ldots, 2q+4+C-X_q+X_q, 
\]
\[
2q+4+C, 2(q + 1) + z).
\]
Therefore, $w^*$ is the match of $\hat{m}$.

($\Rightarrow$) Assume we have a successful manipulation. 
Such manipulation must ensure that none of the women $w_1,\ldots ,w_{q+2}$ are placed in the first positions in the preference order of $\hat{m}$.
That is, to ensure that woman $w^*$ is ranked higher than woman $w_{q+1}$, both manipulators have to place $w^*$ in the highest position in their preferences, and $w_{q+1}$ in the lowest position in their preferences. Thus, the score of woman $w^*$ in the preference of $\hat{m}$ will be $2q+4+C$.
Let $\sigma(i)$ be a function that determines the score where the first manipulator is assigned to woman $w_i$. $\pi(i)$ is defined similarly for the second manipulator. 
Since the manipulation is successful, for every $i$, $1 \leq i \leq q$, 
$$
2q+4+C-X_i +\sigma (i)+ \pi (i) \leq 2q+4+C,
$$
and thus, 
$$
\sigma (i)+ \pi (i) \leq X_i.
$$
Since $\sum_{i=1}^q X_i = q(q+1)$,
$$
\sum_{i=1}^q \sigma(i)+\pi(i) \leq q(q+1).
$$
On the other hand, since $w_{q+1}$ is placed in the lowest position by both manipulators, 
$$
\sum_{i=1}^q \sigma(i) \geq \frac{q(q+1)}{2}
$$
and
$$
\sum_{i=1}^q \pi(i) \geq \frac{q(q+1)}{2}.
$$
Therefore, 
$\sum_{i=1}^q \sigma(i)+\pi(i) = q(q+1)$, and $\sum_{i=1}^q \sigma(i) = \sum_{i=1}^q \pi(i) = \frac{q(q+1)}{2}$. That is, $\sigma$ and $\pi$ are permutations of $1$ to $q$. 
Moreover, since there is no slack in the inequalities,
$$
\sigma (i)+ \pi (i) = X_i.
$$
That is, there is a solution to the Permutation Sum problem.
\end{proof}

\subsection{Proof of Lemma~\ref{lemma:D_low}}
\begin{proof}
Zuckerman et al.~\cite{zuckerman2009algorithms} 
define a set of candidates $G_W$, and show that their algorithm places them at the $|G_W|$ lowest positions (Lemma $3.5$ in \cite{zuckerman2009algorithms}). Even though our definition of the set $D$ is slightly different, our algorithm handles the set $D$ in the same way that their algorithm handles the set $G_W$. Thus, the women in $D$ are placed in each stage in the $|D|$ lowest positions.
\end{proof}

\subsection{Proof of Lemma~\ref{lemma:times_algorithm_returns_false}}
\begin{proof}
Assume that there is a successful manipulation $\calL_T$ with $|R|-1$ manipulators.
Recall that $s(w,\calL,\calL_T)$ is the Borda score of a woman $w$ in  $(\calL, \calL_T)$.
Since Algorithm~\ref{alg:coalition_manip_man} places the woman $w^*$ at the highest positions, then:
{\small 
\begin{equation}\label{eq:sw}
s_{|R|-1}(w^*) = (s(w^*, \calL)+ \sum_{j = 1}^{|R|-1} k-1) \geq s(w^*,\calL, \calL_T)
\end{equation}
}

Since Algorithm~\ref{alg:coalition_manip_man} (as proved in Lemma~\ref{lemma:D_low}) places all of the women $d \in D$ at the $|D|$ lowest positions, then:
{\small 
\begin{equation}\label{eq:qd}
q(D) = {\frac{1}{|D|}}(\sum_{d \in D}s(d, \calL)+ \sum_{j = 1}^{|R|-1} \sum_{i = 0}^{|D|-1}i) \leq
\end{equation}
$$
{\frac{1}{|D|}} \sum_{d \in D} s(d,\calL, \calL_T)=:q^T(D)
$$
}
Combining the equations and the assumption that $s_{|R|-1}(w^*) < q(D)$, we get that $q^T(D) > s(w^*, \calL, \calL_T)$. 
Since $q^T(D)$ is an average, there is at least one $d \in D$ such that $s(d, \calL, \calL_T) \geq q^T(D)$. Therefore, $s(d, \calL, \calL_T) > s(w^*, \calL, \calL_T)$. However, since $d$ is positioned above $w^*$, then $\hat{m}$'s match will not be $w^*$, according to Lemma~\ref{lemma:b_prevent_m*}, which is a contradiction to the assumption that $\calL_T$ is a successful manipulation.
Therefore, there is no manipulation that makes $w^*$ the match of $\hat{m}$. Clearly, if Algorithm~\ref{alg:coalition_manip_man} returns $\calL_R$, then it is a successful manipulation. Thus, if there is no manipulation, then the algorithm will return false. 
\end{proof}

\subsection{Proof of Lemma~\ref{lemma:times_algorithm_succeeds}}
\begin{proof}
Since Algorithm~\ref{alg:coalition_manip_man} builds the preference order $\succ_x$ in an identical way to Algorithm~\ref{alg:man_single_manip}, then according to Lemma~\ref{lemma:exists_pref_at_all}, Algorithm~\ref{alg:coalition_manip_man} will find a preference order $\succ_x$ such that $w^* = \mu_x(\hat{m},\succ_x)$.
Recall that for every $w \in W \setminus \set{B}$ such that $w^* \succ_x w$, $w$ is also less preferred than $w^*$ in $\calF(\calL)$. However, since $w^*$ is positioned in the highest position in each $\ell_r \in \calL_R$, then for every $w \in W \setminus \set{B}$ such that $w^* \succ_x w$, $w$ is also less preferred than $w^*$ in $\calF(\calL \cup \calL_R)$.
We now show that for all $b \in B \setminus D$, it holds that $s_{|R|}(b) \leq min_{d \in D}\{s_{|R|}(d)\}$.
Assume by contradiction that there exists $b \in B \setminus D$ and $d \in D$, such that $s_{|R|}(b) > s_{|R|}(d)$. Then, by the definition of $D$, $b \in D$, which contradicts the choice of $b$.
Since $\max_{d \in D}\set{s_{|R|}(d)} < s_{|R|}(w^*)$, then for all $b \in B$, it holds that $s_{|R|}(b) < s_{|R|}(w^*)$. 
Overall, we get that for all $w \in W$ such the $w^* \succ_x w$ $w$ is also less preferred over $w^*$ according to $\calF(\calL \cup \calL_R)$. Therefore, by Corollary~\ref{corollary:placing_higher}, $w^*$ is $\hat{m}$'s match with the preference $\calF(\calL \cup \calL_R)$.
That is, the algorithm finds a manipulation that makes $w^*$ the match of $\hat{m}$.
\end{proof}

\subsection{Proof of Lemma~\ref{lemma:1_dense}}
\begin{proof}
Zuckerman et al.~\cite{zuckerman2009algorithms} 
define a set of candidates $G_W$, and show that the scores of the  candidates in $G_W$ are $1$-dense (Lemma $3.12$ in \cite{zuckerman2009algorithms}). Even though our definition of the set $D$ is slightly different, the set of scores $\set{s_{|R|-1}(d):d \in D}$ is essentially identical to the set of scores of the candidates in $G_W$. Thus, it is $1$-dense. 
\end{proof}

\subsection{Proof of Lemma~\ref{lemma:average_fromMax}}
\begin{proof}
Sort the members of $D$ by their scores after stage $|R|-1$ in an increasing order, i.e., $D = \set{d_1, \ldots , d_{|D|}}$ such that for all $1 \leq t \leq |D| - 1$, $s_{|R|-1}(d_t) \leq s_{|R|-1}(d_{t + 1})$. 
Thus, $d_1 = \argmin_{d \in D}\set{s_{|R|-1}(d)}$. 
Denote for $1 \leq t \leq |D|$, $g_t = s_{|R|-1}(d_t) + |D| - t$, and let $G = \set{g_1, \ldots , g_{|D|}}$.
Note that according to Algorithm~\ref{alg:coalition_manip_man}, for any $t$, $s_{|R|}(d_t) = g_t$. 
Now, since the set $\set{s_{|R|-1}(d):d \in D}$ is $1$-dense (according to Lemma~\ref{lemma:1_dense}) then for any $1 \leq t \leq |D| - 1$, $g_t \geq g_{t+1}$.
Thus, for any $t > 1$, $g_1 \geq g_t$. That is, $g_1 = \max_{d \in D}\set{s_{|R|}(d)} = s_{|R|-1}(d_1) +|D| - 1 = \min_{d \in D}\set{s_{|R|-1}(d)} + |D| - 1$. 
Clearly, $q(D) \geq \min_{d \in D}\set{s_{|R|-1}(d)}$, and thus $\max_{d \in D}\set{s_{|R|}(d)} \leq q(D) + |D| - 1$.
\end{proof}

\subsection{Proof of Corollary~\ref{corollary:m_nd}}

\begin{proof}
We construct the preference order $\succ_w'$ by starting from $\succ_w$ and performing a sequence of swaps of two adjacent men till the resulting preference order is $\succ_w'$.
We show that each swap does not change the set of proposals, by repeatedly invoking Lemma~\ref{lemma:swapping}.

We begin by positioning the most preferred man according to $\succ'_w$, using swaps of two adjacent men. 
That is, if $m_{st}$ is the most preferred man according to $\succ'_w$, we swap pairs $(m,m_{st}), m \in M$, until $m_{st}$ is placed in the first position in $\succ_w'$. We call these swaps the swaps of $m_{st}$.
We then position the second preferred man using his swaps, and so on.
Clearly, this process terminates since the number of men is finite.
Let $\succ_w^{(t)}$ be $\succ_w$ after $t$ swaps. That is, $\succ_w^{(0)}$ is $\succ_w$, $\succ_w^{(1)}$ is $\succ_w$ after one swap, and $\succ_w^{(t+1)}$ is $\succ_w^{(t)}$ after one swap. We show that for every $t \geq 0$, $o_x(w, \succ_w^{(t)}) = o_x(w, \succ_w^{(t+1)})$ and thus $o(w) = o_x(w, \succ_w^{(0)}) = o_x(w, \succ'_w)$. Let $(m_i,m_j)$ be the pair of adjacent men that swap their positions when moving from $\succ_w^{(t)}$ to $\succ_w^{(t+1)}$. That is, $m_i \succ_w^{(t)} m_j$ and $m_j \succ_w^{(t+1)} m_i$.
Recall that for every $m_1,m_2 \in M$, if $m_1 \succ_w' m_2$, then all the swaps of $m_1$ are executed before all the swaps of $m_2$. In addition, since $m_{nd}$ is the most preferred man among $o(w) \setminus \set{m^*}$ according to $\succ'_w$ and $m^* \succ'_w m_{nd}$, then the following cases are not possible:
\begin{enumerate}
    \item $m_i = m^*$ and $m_j \in o_x(w, \succ_w^{(t)})$. 
    \item $m_i \in o_x(w, \succ_w^{(t)})$ and $m_j = m^*$. 
    \item $m_i = m_{nd}$ and $m_j \in o_x(w, \succ_w^{(t)})$.
    \item $m_i \in o_x(w, \succ_w^{(t)})$ and $m_j = m_{nd}$. 
    \item $m_i = m^*$ and $m_j = m_{nd}$.
\end{enumerate}
We thus need to consider only the following two cases:
\begin{enumerate}
    \item $m_i \notin o_x(w, \succ_w^{(t)})$ or $m_j \notin o_x(w, \succ_w^{(t)})$. According to the GS algorithm, a swap of such $m_i$ and $m_j$ cannot change the response of $w$ (either an acceptance or rejection). Therefore,  $o_x(w, \succ_w^{(t)}) = o_x(w, \succ_w^{(t+1)})$. 
    \item $m_i, m_j \in o_x(w, \succ_w^{(t)})\setminus \set{m^*, m_{nd}}$.  
    We use case~\ref{case:not_2_preferred_proposals} of Lemma~\ref{lemma:swapping} for this case.
    Assume to contradiction that $o_x(w, \succ_w^{(t)}) \neq o_x(w, \succ_w^{(t+1)})$. There are two possible cases:
    \begin{enumerate}
        \item There exists a man $o \in o_x(w,\succ_w^{(t)})$ such that $o \notin o_x(w,\succ_w^{(t+1)})$. By case~\ref{case:not_2_preferred_proposals} of Lemma~\ref{lemma:swapping}, $m^* = \mu_x(w,\succ_w^{(t)}) = \mu_x(w, \succ_w^{(t+1)})$. Let $\succ^o$ be $\succ_w^{(t+1)}$ such that $o$ is positioned above $m^*$.
        We can construct $\succ^o$ from $\succ_w^{(t+1)}$ by swaps of $o$. Since $o \notin o_x(w,\succ_w^{(t+1)})$, then by case~\ref{case:not_proposal} of Lemma~\ref{lemma:swapping},  
        $\mu_x(w,\succ_w^{(t+1)}) = \mu_x(w, \succ^o)$. We now swap $m_j$ and $m_i$ in $\succ^o$, and thus $m_i \succ^o m_j$ as in $\succ_w^{(t)}$.
        Let $Pre^o \subset o_x(w,\succ_w^{(t)})$ be the set of proposals that $w$ receives before she receives the proposal $o$.
        Note that all the men $o \in Pre^o$ are in the same order in $\succ_w^{(t)}$ and in $\succ^o$.
        Therefore, the response of woman $w$ is the same for all the proposals $o \in Pre^o$ and thus $o \in o_x(w,\succ^o)$. Therefore, $m^* \neq \mu_x(w,\succ^o)$, which is a contradiction. 
        
        \item There exists a man $o \notin o_x(w,\succ_w^{(t)})$ such that $o \in o_x(w,\succ_w^{(t+1)})$. Using a similar argument to case (a) above (i.e., we now construct $\succ^o$ from $\succ_w^{(t)}$) we get that in this case also $o_x(w, \succ_w^{(t)}) = o_x(w, \succ_w^{(t+1)})$

    \end{enumerate}

\end{enumerate}
\end{proof}

\subsection{Proof of Theorem~\ref{thm:alg_2_correct}}

\begin{proof}
Clearly, the algorithm runs in polynomial time since there are three loops, where the three loops together iterate at most $k^2$ times, and the running time of the GS matching algorithm is in  $O(k^2)$. 
In addition, if the algorithm returns a preference order, which is a manipulative vote for the manipulator $r$, then $m^*$ will be the match of $\hat{w}$ by the GS algorithm. 
We need to show that if there exists a preference order for the manipulator $r$ that makes $m^*$ the match of $\hat{w}$, then our algorithm will find such a preference order for $r$.
Assume that a manipulative vote, $\ell_t$, exists, which makes $m^*$ the match of $\hat{w}$. That is, $\mu(\hat{w}, \ell_t) = m^*$. 
We show that Algorithm~\ref{alg:woman_single_manip} returns $\ell_r$ in line~\ref{line:algtwo_return_ellr}.
Let $\ell_{r(1)}$ be the preference order $\ell_r$ after phase $1$ of the algorithm. $\ell_{r(2)}$ and $\ell_{r(3)}$ are defined similarly. Note that $\ell_{r(3)}$ is the preference order $\ell_r$ that is returned by the algorithm in line~\ref{line:algtwo_return_ellr}. 

Algorithm~\ref{alg:woman_single_manip} iterates over all $m_{nd} \in M \setminus \set{m^*}$, and thus there exists an iteration in which $m_{nd}$ is the second preferred proposal among $o(\hat{w}, \ell_t)$ according to $\calF(\calL \cup \set{\ell_t})$.
Let $\ell_{t(1)}$ be the preference order $\ell_t$ where $m^*$ and $m_{nd}$ are placed in the same positions as in $\ell_{r(1)}$. Note that $m^*$ and $m_{nd}$ are placed in $\ell_{r(1)}$ such that $m^*$ is preferred over $m_{nd}$ according to $\calF(\calL \cup \set{\ell_{r(1)}})$, and thus $m^*$ is preferred over $m_{nd}$ according to $\calF(\calL \cup \set{\ell_{t(1)}})$. In addition, $m_{nd}$ is positioned in $\calF(\calL \cup \set{\ell_{t(1)}})$ not lower than in $\calF(\calL \cup \set{\ell_t})$. Therefore, $m_{nd}$ is the most preferred man among $o(\hat{w}, \ell_t) \setminus \set{m^*}$ according to $\calF(\calL \cup \set{\ell_{t(1)}})$. By Corollary~\ref{corollary:m_nd}, $o(\hat{w}, \ell_t) = o(\hat{w}, \ell_{t(1)})$ and $\mu(\hat{w}, \ell_{t(1)}) = m^*$. Thus, $m_{nd}$ is the second preferred proposal among $o(\hat{w}, \ell_{t(1)})$ according to $\calF(\calL \cup \set{\ell_{t(1)}})$.

Let $\ell_{t(2)}$ be the preference order $\ell_{t(1)}$ where the men $m \not\in o(\hat{w}, \ell_{t(1)})$ are placed in the highest  positions in $\ell_{t(2)}$ without changing the positions of $m^*$ and $m_{nd}$ (similar to the positioning of the men $m \not\in o(\hat{w}, \ell_{r(2)})$ after phase $2$ of the algorithm). That is, $m^*$ is preferred over $m_{nd}$ according to $\calF(\calL \cup \set{\ell_{t(2)}})$ and $m_{nd}$ is the most preferred man among $o(\hat{w}, \ell_{t(1)}) \setminus \set{m^*}$ according to $\calF(\calL \cup \set{\ell_{t(2)}})$. We can thus use (again) Corollary~\ref{corollary:m_nd} to get that $o(\hat{w}, \ell_{t(1)}) = o(\hat{w}, \ell_{t(2)})$ and $\mu(\hat{w}, \ell_{t(2)}) = m^*$.

Recall that at the end of phase $1$ of Algorithm~\ref{alg:woman_single_manip}, $m^*$ and $m_{nd}$ are placed in $\ell_{r(1)}$ such that $m^*$ is preferred over $m_{nd}$ according to $\calF(\calL \cup \set{\ell_{r(1)}})$. In addition, $m_{nd}$ is positioned in $\calF(\calL \cup \set{\ell_{r(1)}})$ not lower than in $\calF(\calL \cup \set{\ell_{t(2)}})$, since they are placed in the same position in $\ell_{r(1)}$ and $\ell_{t(2)}$ and the other men in $\ell_{r(1)}$ get a score of $0$ from $\ell_{r(1)}$. Specifically, the men $m \in o(\hat{w}, \ell_{t(2)})$ also get a score of $0$ from $\ell_{r(1)}$ and thus $m_{nd}$ is the most preferred man among $o(\hat{w}, \ell_{t(2)}) \setminus \set{m^*}$ according to $\calF(\calL \cup \set{\ell_{r(1)}})$. 
By Corollary~\ref{corollary:m_nd}, $o(\hat{w}, \ell_{t(2)}) = o(\hat{w}, \ell_{r(1)})$ and $\mu(\hat{w}, \ell_{r(1)}) = m^*$. 
Since in phase $2$ of Algorithm~\ref{alg:woman_single_manip} we place only men $m \notin o(\hat{w}, \ell_{r(1)})$, then, we can (again) use Corollary~\ref{corollary:m_nd} to show that $o(\hat{w}, \ell_{t(2)}) = o(\hat{w}, \ell_{r(2)})$, $\mu(\hat{w}, \ell_{r(2)}) = m^*$, and $m_{nd}$ is the second preferred proposal among $o(\hat{w},\ell_{r(2)})$ according to $\calF(\calL \cup \set{\ell_{r(2)}})$.

We now show that Algorithm~\ref{alg:woman_single_manip} (lines~\ref{line:woman_begin_B}-\ref{line:woman_end_B}) can assign scores to all the men $m \in B^{nd}$ such that $\ell_r$ is a successful manipulation. For any $m \in B^{nd}$, if $s(m, \ell_{r(3)}) \leq s(m, \ell_{t(2)})$ then $s(m, \calL, \ell_{r(3)}) \leq s(m,  \calL, \ell_{t(2)})$. 
Since $(m_{nd}, \calL, \ell_{t(2)}) > (m, \calL, \ell_{t(2)})$ and $s(m_{nd},\calL,\ell_{t(2)}) = s(m_{nd},\calL,\ell_{r(3)})$ then \sloppy{$(m_{nd}, \calL, \ell_{r(3)}) > (m, \calL, \ell_{r(3)})$}. Otherwise, let $m \in B^{nd}$ be a man such that $s(m, \ell_{r(3)}) > s(m, \ell_{t(2)})$ and let $s = s(m, \ell_{r(3)})$.
By the algorithm construction, there are $s$ men from $B^{nd}$ below $m$ in $\ell_{r(3)}$.
According to the pigeonhole principle, at least one of the men from $B^{nd}$, denoted $m'$, gets a score of at least $s$ from $\ell_{t(2)}$.
That is, $s(m', \ell_{t(2)}) \geq s(m, \ell_{r(3)})$.
By the algorithm construction, all the men $m'' \in B^{nd}$ that are positioned lower than $m$ in $\ell_{r(3)}$ are positioned higher than $m$ in $\calF(\calL)$. That is, $(m', \calL) > (m, \calL)$.
However, $(m', \calL, \ell_{t(2)}) < (m_{nd}, \calL, \ell_{t(2)})$ and thus $(m, \calL, \ell_{r(3)}) < (m_{nd}, \calL, \ell_{r(3)})$. Overall, after placing the men from $B^{nd}$ in $\ell_{r(3)}$, $\forall m \in B^{nd}$, $(m, \calL, \ell_{r(3)}) < (m_{nd}, \calL, \ell_{r(3)})$. That is, $m_{nd}$ is the most preferred man among $o(\hat{w}, \ell_{r(2)}) \setminus \set{m^*}$ according to $\calF(\calL \cup \set{\ell_{r(2)}})$. In addition $(m_{nd}, \calL, \ell_{r(3)}) < (m^*, \calL, \ell_{r(3)})$ and thus by Corollary~\ref{corollary:m_nd}, $\mu(\hat{w}, \ell_{r(3)}) = m^*$.
\end{proof}

\subsection{Proof of Theorem~\ref{thm:coalitional_mnm-w}}

\begin{proof}
Clearly, the coalitional MnM-w problem is in $NP$. 
Given an instance of the Permutation Sum problem we build an instance of the coalitional MnM-w problem as follows. There are $q + 3$ men and $q + 3$ women. 
The men are: $m_1, \cdots, m_{q+2}$ and $m^*$. The women are: $w_1, \cdots, w_{q+2}$ and $\hat{w}$. 
We assume that $m^*$ is preferred over the other women according to the lexicographical tie-breaking rule. 
The $q$ men, $m_1,\ldots,m_q$, correspond to the integers $X_1,\ldots,X_q$. By Lemma $1$ from~\cite{davies2011complexity}, we can construct an election in which the non-manipulators cast votes such that:
\[
\hat{w} = (m_{q+1} ,m_1, \ldots, m_q, m^* ,m_{q + 2}),
\]
 and the corresponding scores are: 
\[
(2q + 4 + C, 2q + 4 + C - X_1 \ldots, 2q + 4 + C - X_q, C, z),
\] 
where $C$ is a constant and $z \leq C$. The preference orders of the women beside $\hat{w}$ are the same, and are as follows:
\[
(m^*,m_1,\ldots,m_{q + 2})
\] 

The preference orders of the men are the same, and are as follows:
\[
(\hat{w},w_1,\ldots,w_{q + 2})
\] 
We show that two manipulators can make the woman $\hat{w}$ the match of man $m^*$ iff the Permutation Sum problem has a solution.

($\Leftarrow$) Suppose we have two permutations $\sigma$ and $\pi$ of $1$ to $q$ such that $\sigma (i) + \pi (i) = X_i$. Let $\sigma^{-1}$ be the inverse function of $\sigma$, i.e., $i=\sigma^{-1}(m)$ if $m=\sigma(i)$. We define $\pi^{-1}(m)$  similarly.
We construct the following two manipulative votes:
\[
(m^*,m_{q+2},m_{\sigma^{-1}(q)},\ldots,m_{\sigma^{-1}(1)},m_{q+1})
\]
\[
(m^*,m_{q+2}, m_{\pi^{-1}(q)}, \ldots, m_{\pi^{-1}(1)},m_{q+1})
\]
Since $\sigma (i) + \pi (i) = X_i$
and $z \leq C$, 
the preference order of $\hat{w}$, $\calF(\calL_R, \calL)$ is: 
\[
(m^*, m_1, \ldots,m_{q+2})
\]
since the corresponding scores are:
\[
(2q+4+C,2q+4+C-X_1+X_1, \ldots, 2q+4+C-X_q+X_q,\]
\[2q+4+C, 2(q + 1) + z).
\]
Therefore, $m^*$ is the match of $\hat{w}$.

($\Rightarrow$) Assume we have a successful manipulation. 
Such a manipulation must ensure that none of the men $m_1,\ldots ,m_{q+2}$ are placed in the first positions in the preference order of $\hat{w}$.
That is, to ensure that man $m^*$ is ranked higher than man $m_{q+1}$, both manipulators have to place $m^*$ in the highest position in their preferences, and $m_{q+1}$ in the lowest position in their preferences. Thus, the score of man $m^*$ in the preference of $\hat{w}$ will be $2q+4+C$.
Let $\sigma(i)$ be a function that determines the score where the first manipulator is assigned to man $m_i$. $\pi(i)$ is defined similarly for the second manipulator. 
Since the manipulation is successful, for every $i$, $1 \leq i \leq q$, 
$$
2q+4+C-X_i +\sigma (i)+ \pi (i) \leq 2q+4+C,
$$
and thus, 
$$
\sigma (i)+ \pi (i) \leq X_i.
$$
Since $\sum_{i=1}^q X_i = q(q+1)$,
$$
\sum_{i=1}^q \sigma(i)+\pi(i) \leq q(q+1).
$$
On the other hand, since $m_{q+1}$ is placed in the lowest position by both manipulators, 
$$
\sum_{i=1}^q \sigma(i) \geq \frac{q(q+1)}{2}
$$
and
$$
\sum_{i=1}^q \pi(i) \geq \frac{q(q+1)}{2}.
$$
Therefore, 
$\sum_{i=1}^q \sigma(i)+\pi(i) = q(q+1)$, and $\sum_{i=1}^q \sigma(i) = \sum_{i=1}^q \pi(i) = \frac{q(q+1)}{2}$. That is, $\sigma$ and $\pi$ are permutations of $1$ to $q$. 
Moreover, since there is no slack in the inequalities,
$$
\sigma (i)+ \pi (i) = X_i.
$$
That is, there is a solution to the Permutation Sum problem.
\end{proof}

\subsection{Proof of Theorem~\ref{thm:coalition_manip_woman}}
In order to prove Theorem~\ref{thm:coalition_manip_woman} we use the following definitions, which correspond to the definitions we used for proving Theorem~\ref{thm:coalition_manip_man}. Given $m_{nd} \in M \setminus \set{m^*}$, let $D^{nd}_0 = \set{d_0}$, where $(d_0, \calL) > (m, \calL)$, and $d_0,m \in B^{nd}$.
For each $s=1,2,...$, let $D^{nd}_s \subseteq B^{nd}$ be $D^{nd}_s=D^{nd}_{s-1} \cup \set{d: d \in B^{nd}$ was ranked above some $d' \in D^{nd}_{s-1}$ according to $\calF(\calL, \calL_R)$ in some stage $l$, $1 \leq l \leq |R|}$. Now, let $D^{nd}= \bigcup _{0 \leq s}D^{nd}_s$. 
Note that $\forall s$ $D^{nd}_s \neq D^{nd}_{s-1}$, and $s$ does not necessarily equal $|R|$.
Let $s_\ell(m)$ be the score of man $m$ in ${\calF}(\calL, \calL_R)$ after stage $\ell$.
Let $\calL_{R-2}$ be a preference profile of $|R|-2$ manipulators.

The proof of Theorem~\ref{thm:coalition_manip_woman} relies on several Lemmata, which reminiscent Lemmata \ref{lemma:D_low}-\ref{lemma:average_fromMax}, and its general intuition is also somehow similar to the intuition behind the proof of Theorem \ref{thm:coalition_manip_man}.

We begin with a basic lemma that clarifies where Algorithm ~\ref{alg:coalition_manip_woman} places the men of $D^{nd}$.

\begin{lemma} \label{lemma:D_man_low}
Given $m_{nd} \in M \setminus \set{m^*}$, the men in $D^{nd}$ are placed in each stage $l$, $1 \leq l \leq |R|$ among the $|D^{nd}| + 1$ lowest positions.
\end{lemma}
\begin{proof}
Zuckerman et al.~\cite{zuckerman2009algorithms} 
define a set of candidates $G_W$, and show that their algorithm places them at the $|G_W|$ lowest positions (Lemma $3.5$ in \cite{zuckerman2009algorithms}). Even though our definition of the set $D^{nd}$ is slightly different, our algorithm handles the set $D^{nd}$ in almost the same way that their algorithm handles the set $G_W$.
Indeed, the algorithm might place $m_{nd}$ in one of the $|D^{nd}|$ lowest positions in phase $1$.
Thus, the men in $D^{nd}$ are placed in each stage in the $|D^{nd}|+1$ lowest positions.
\end{proof}

We now show that if there exists a successful manipulation, then 
Algorithm~\ref{alg:coalition_manip_woman} will successfully finish phase $1$. That is, the algorithm will find a man $m_{nd}$ such that $\mu(\hat{w},\calL_R)=m^*$ and $m_{nd} \in o(\hat{w},\calL_R)$ at the end of phase $1$.

\begin{lemma} \label{lemma:successful_phase_1}
Assume that a preference profile $\calL_T$ with $|R|$ manipulators exists such that $\mu(\hat{w}, \calL_T) = m^*$. Let $m'_{nd} \in o(\hat{w}, \calL_T)$ be the second most preferred man according to $\calF(\calL \cup \calL_T)$.
If the input to Algorithm~\ref{alg:coalition_manip_woman} consists of $|R|$ manipulators, then in the iteration in which $m_{nd} = m'_{nd}$, $(m_{nd}, \calL, \calL_R) < (m^*, \calL, \calL_R)$ and $s(m_{nd}, \calL, \calL_T) \leq s(m_{nd}, \calL, \calL_R)$. 
Similarly, if $\calL_T$ consists of $|R| - 2$ manipulators, and the input to Algorithm~\ref{alg:coalition_manip_woman} consists of $|R|-2$ manipulators, then in the iteration in which $m_{nd} = m'_{nd}$, $(m_{nd}, \calL, \calL_{R-2}) < (m^*, \calL, \calL_{R-2})$ and $s(m_{nd}, \calL, \calL_T) \leq s(m_{nd}, \calL, \calL_{R-2})$.
\end{lemma}

\begin{proof}
Clearly, since $m^*$ is preferred over $m'_{nd}$, then for the iteration $m_{nd} = m'_{nd}$ it satisfies $|R| (k-1) \geq gap$. Otherwise, even if each manipulator would place $m^*$ in the highest position and $m_{nd}$ in the lowest position $m_{nd}$ would be preferred over $m^*$ which contradicts the existence of $\calL_T$. 

Next we show that $m^*$ is preferred over $m_{nd}$ according to $\calF(\calL \cup \calL_R)$. We consider the two cases of the algorithm:
\begin{itemize}
    \item $|R| \geq gap$. We show that $s(m_{nd}, \calL, \calL_T) \leq s(m_{nd}, \calL, \calL_R)$ and $(m_{nd}, \calL, \calL_R) < (m^*, \calL, \calL_R)$.
    
    According to the algorithm, $s(m^*, \calL, \calL_R) = s(m^*, \calL) + \max(gap + \lceil (|R| - gap) / 2  \rceil,0)(k-1) + (|R| - \max(gap + \lceil (|R| - gap) / 2  \rceil,0))(k-2)$ and $s(m_{nd}, \calL, \calL_R) = s(m_{nd}, \calL) + \max(gap + \lceil (|R| - gap) / 2  \rceil,0)(k-2) + (|R| - \max(gap + \lceil (|R| - gap) / 2  \rceil,0))(k-1)$. 
    We show that $s(m^*, \calL, \calL_R) - s(m_{nd}, \calL, \calL_R) \geq 0$. 
    $s(m^*, \calL, \calL_R) - s(m_{nd}, \calL, \calL_R) = $
    
    $s(m^*, \calL) + 2 \max(gap + \lceil (|R| - gap) / 2  \rceil,0) - |R| - s(m_{nd}, \calL) \geq
    s(m^*, \calL) + \max(|R| + gap,0) - |R| - s(m_{nd}, \calL)$. We consider the two cases:
    \begin{enumerate}
        \item $|R| + gap > 0$. Then, $s(m^*, \calL, \calL_R) - s(m_{nd}, \calL, \calL_R) \geq s(m^*, \calL) - |R| + |R| + gap - s(m_{nd}, \calL) \geq 0$. In addition, $s(m^*, \calL, \calL_R) - s(m_{nd}, \calL, \calL_R) \leq 1$. 
        $s(m^*, \calL) + \max(|R| + gap,0) - |R| - s(m_{nd}, \calL)$
        If one of the manipulators that positioned $m^*$ above $m_{nd}$, would change the positions of $m_{nd}$ and $m^*$, then $m^*$ would be preferred over $m_{nd}$. 
        \item $|R| + gap \leq 0$. $s(m^*, \calL, \calL_R) - s(m_{nd}, \calL, \calL_R) = s(m^*, \calL) - |R| + 0 - s(m_{nd}, \calL) \geq 0$. 
        Then in each manipulator $m_{nd}$ is positioned in the highest position. 
    \end{enumerate}    
    Thus, it is not possible to give $m_{nd}$ a higher score. That is, $s(m_{nd}, \calL, \calL_T) \leq s(m_{nd}, \calL, \calL_R)$.
    

    \item $|R| < gap$. We show that 
    $(m_{nd}, \calL, \calL_R) < (m^*, \calL, \calL_R)$. 
    
    According to the algorithm, $s(m^*, \calL, \calL_R) = s(m^*, \calL) + |R| (k - 1)$
    and $s(m_{nd}, \calL, \calL_R) = s(m_{nd}, \calL) + (s_{m_{nd}} \mod |R|) \lceil \frac{|R|(k-1) - gap}{|R|} \rceil + (|R| - (s_{m_{nd}} \mod |R|)) \lfloor \frac{|R|(k-1) - gap}{|R|} \rfloor$. 
    
    $s(m^*, \calL, \calL_R) - s(m_{nd}, \calL, \calL_R) = 
    s(m^*, \calL) + |R|(k - 1) 
    - s(m_{nd}, \calL) - (s_{m_{nd}} \mod |R|) \lceil \frac{|R|(k-1) - gap}{|R|} \rceil - (|R| - (s_{m_{nd}} \mod |R|)) \lfloor \frac{|R|(k-1) - gap}{|R|} \rfloor \geq 0$.

    Since, $|R|(k - 1) \geq gap$.

\end{itemize}

Since there exists $m'_{nd} \in o(\hat{w}, \calL_T)$, there is no option to award $m_{nd}$ a higher score such that it will not overcome $m^*$ according to $\calF(\calL \cup \calL_R)$ then according to line~\ref{line:pass_phase_1}, $m_{nd} \in o(\hat{w}, \calL_R)$ after phase $1$ and $s(m_{nd}, \calL, \calL_R) \geq s(m_{nd}, \calL, \calL_T)$.

\end{proof}

\begin{corollary}\label{corrolary:successful_phase_1}
Assume that a preference profile $\calL_T$ with $|R|$ manipulators exists such that $\mu(\hat{w}, \calL_T) = m^*$. Let $m'_{nd} \in o(\hat{w}, \calL_T)$ be the second most preferred man according to $\calF(\calL \cup \calL_T)$.
If the input to Algorithm~\ref{alg:coalition_manip_woman} consists of $|R|$ manipulators, then in the iteration in which $m_{nd} = m'_{nd}$ the algorithm will successfully finish phase 1. Similarly, if $\calL_T$ consists of $|R| - 2$ manipulators, and the input to Algorithm~\ref{alg:coalition_manip_woman} consists of $|R|-2$ manipulators, then in the iteration in which $m_{nd} = m'_{nd}$ the algorithm will successfully finish phase 1.
\end{corollary}

\begin{proof}
According to Lemma~\ref{lemma:successful_phase_1}, $(m_{nd}, \calL, \calL_R) < (m^*, \calL, \calL_R)$ and $s(m_{nd}, \calL, \calL_T) \leq s(m_{nd}, \calL, \calL_R)$. And therefore in the end of phase 1, $\forall m \in o(\hat{w}, \calL_R)$, $(m, \calL, \calL_R) < (m_{nd}, \calL, \calL_R)$. Since $\mu(\hat{w}, \calL_T) = m^*$ and $m_{nd} \in o(\hat{w}, \calL_T)$, then by Corollary~\ref{corollary:m_nd}, $\mu(\hat{w}, \calL_R) = m^*$ and $m_{nd} \in o(\hat{w}, \calL_R)$ in the end of phase 1. That is, Algorithm~\ref{alg:coalition_manip_woman} successfully finishes phase 1. Clearly, the proof is similar for manipulation with $|R|-2$ manipulators.
\end{proof}

We now show the relation between the score of $m_{nd}$ and the average score of the men in $D^{nd}$, when there are $|R|-2$ manipulators.

\begin{lemma} \label{lemma:woman_times_algorithm_returns_false}
Assume that a preference profile $\calL_T$ with $|R| - 2$ manipulators exists such that $\mu(\hat{w}, \calL_T) = m^*$. Let $m'_{nd} \in o(\hat{w}, \calL_T)$ be the second most preferred man according to $\calF(\calL \cup \calL_T)$. 
Assume that the input to Algorithm~\ref{alg:coalition_manip_woman} consists of $|R|-2$ manipulators, and let $q(D^{nd})$ be the average score of men in $D^{nd}$. That is, $q(D^{nd}) = {\frac{1}{|D^{nd}|}}\sum_{d \in D^{nd}}s(d, \calL, \calL_{R-2})$. Then, in the iteration in which $m_{nd} = m'_{nd}$, $s(m_{nd}, \calL, \calL_{|R|-2}) \geq q(D^{nd})$.
\end{lemma}

\begin{proof}

Let $q^T(D^{nd})={\frac{1}{|D^{nd}|}} \sum_{d \in D^{nd}}s(d, \calL, \calL_T)$ be the average score of men in $D^{nd}$ in $(\calL, \calL_T)$.
As a corollary of Lemma~\ref{lemma:D_man_low},
if $|R| \geq gap$, then all the men in $D^{nd}$ are placed among the $|D^{nd}|$ lowest positions in the preference profile of each manipulator. That is, 

$q(D^{nd}) = {\frac{1}{|D^{nd}|}}(\sum_{d \in D^{nd}}s(d, \calL)+ \sum_{j = 1}^{|R|-2} \sum_{i = 0}^{|D^{nd}|-1}i) \leq
{\frac{1}{|D^{nd}|}} \sum_{d \in D^{nd}}s(d, \calL, \calL_{T})=q^T(D^{nd})
$.

Now, if $|R| - 2 < gap$, but $\lfloor \frac{(|R|-2)(k-1) - gap}{|R|-2} \rfloor \geq |D^{nd}|$ then again all the men in $D^{nd}$ are placed among the $|D^{nd}|$ lowest positions in the preference profile of each manipulator, and $q(D^{nd}) \leq q^T(D^{nd})$.
Finally, consider the case in which $|R| - 2 < gap$, and $\lfloor \frac{(|R|-2)(k-1) - gap}{|R| - 2} \rfloor < |D^{nd}|$. Note that for each $\ell_r \in \calL_r$, only men from $D^{nd}$ are placed below $m_{nd}$, and all the men from $D^{nd}$ are placed among the $|D^{nd}|+1$ lowest positions.
Therefore, if for every $1 \leq i \leq |R| - 2$, $m_{nd}$ is positioned in $\ell_i \in \calL_T$ at the same position as in $\ell_i \in \calL_R$ or at a lower position, then obviously $q(D^{nd}) \leq q^T(D^{nd})$.
If it is not the case, then there is a subset of manipulators $\calL^+_T \subset \calL_T$, and a corresponding subset $\calL^+_R \subset \calL_{R-2}$, such that $m_{nd}$ is positioned in $\ell_i \in \calL^+_T$ at a higher position than in $\ell_i \in \calL^+_R$. 
That is, $\Delta = s(m_{nd},\calL^+_T)-s(m_{nd},\calL^+_R)$ is greater than $0$, and thus $\sum_{d\in D^{nd}} s(d,\calL^+_R) - s(d,\calL^+_T)$ is at most $\Delta$.
However, since $s(m_{nd}, \calL, \calL_T) \leq s(m_{nd}, \calL, \calL_{R-2})$ according to Lemma~\ref{lemma:successful_phase_1}, then there must be a non-empty subset of manipulators $\calL^-_T \subset \calL_T$, and a corresponding subset $\calL^-_R \subset \calL_{R-2}$, such $m_{nd}$ is positioned in $\ell_i \in \calL^-_T$ at a lower position than in $\ell_i \in \calL^-_R$. Moreover, $s(m_{nd},\calL^-_R)-s(m_{nd},\calL^-_T) \geq \Delta$ and thus 
$\sum_{d\in D^{nd}} s(d,\calL^-_T) - s(d,\calL^-_R)$ is at least $\Delta$. Therefore, $q(D^{nd}) \leq q^T(D^{nd})$. 

Note that $m'_{nd} \in o(\hat{w}, \calL_T)$ is the second most preferred man according to $\calF(\calL \cup \calL_T)$, and $\mu(\hat{w}, \calL_T) = m^*$. Therefore, according to Lemma~\ref{lemma:successful_phase_1}, in the iteration in which $m_{nd} = m'_{nd}$, $(m_{nd}, \calL, \calL_{R-2}) < (m^*, \calL, \calL_{R-2})$ and $s(m_{nd}, \calL, \calL_T) \leq s(m_{nd}, \calL, \calL_{R-2})$. 
In addition, after phase $1$ of the algorithm, $M \setminus \set{m_{nd}, m^*}$ are not placed yet in $\calL_{R-2}$, and thus at the beginning of phase $2$, $B^{nd} = o(\hat{w}, \calL_T) \setminus \set{m^*, m_{nd}}$ according to Corollary~\ref{corollary:m_nd}. 
Since $\mu(\hat{w}, \calL_T) = m^*$, 
then for all $d \in o(\hat{w}, \calL_T) \setminus \set{m^*, m_{nd}}$,
$s(d, \calL, \calL_T) \leq s(m_{nd}, \calL, \calL_T)$. Since $D^{nd} \subseteq B^{nd}$, then $q^T(D^{nd}) \leq s(m_{nd}, \calL, \calL_T)$.
Since we showed that $q(D^{nd}) \leq q^T(D^{nd})$ and $s(m_{nd}, \calL, \calL_T) \leq s(m_{nd}, \calL, \calL_{R-2})$, we get that $q(D^{nd}) \leq s(m_{nd}, \calL, \calL_{R-2})$.
\end{proof}



The following lemma shows the relation between the score of $m_{nd}$ and the maximum score of a man in $D^{nd}$, when there are $|R|$ manipulators. In essence, the lemma characterizes the settings in which Algorithm~\ref{alg:coalition_manip_woman} finds a successful manipulation.

\begin{lemma}\label{lemma:woman_times_algorithm_succeeds}
Assume that a preference profile $\calL_T$ with $|R|$ manipulators exists such that $\mu(\hat{w}, \calL_T) = m^*$. Let $m'_{nd} \in o(\hat{w}, \calL_T)$ be the second most preferred man according to $\calF(\calL \cup \calL_T)$.
If the input to Algorithm~\ref{alg:coalition_manip_woman} consists of $|R|$ manipulators, and in the iteration in which $m_{nd} = m'_{nd}$, $\max_{d \in D^{nd}}\set{s(d, \calL, \calL_R)} < s(m_{nd}, \calL, \calL_R)$, then Algorithm~\ref{alg:coalition_manip_woman} finds a successful manipulation.
\end{lemma}

\begin{proof}
Since there exists a preference profile $\calL_T$ that ensures that $\mu(\hat{w}, \calL_T) = m^*$.
Then, according to Lemma~\ref{lemma:successful_phase_1} the algorithm can place $m^*$ and $m_{nd}$ such that in the end of phase 1 $\mu(\hat{w}, \calL_R) = m^*$ and $s(m_{nd}, \calL, \calL_T) \leq s(m_{nd}, \calL, \calL_R)$.

Even though the set $D^{nd}$ in the proof of Theorem~\ref{thm:coalition_manip_woman} is defined slightly differently than in the proof of Theorem~\ref{thm:coalition_manip_man}, and this Lemma is slightly different than Lemma~\ref{lemma:times_algorithm_succeeds}, the proof of this lemma is essentially identical to the proof of Lemma~\ref{lemma:times_algorithm_succeeds}. And therefore $\forall m \in o(\hat{w}, \calL_R) \setminus \set{m^*, m_{nd}}$ in the end of phase 1, $(m, \calL, \calL_R) < (m_{nd}, \calL, \calL_R)$ in the end of phase 2, and according to Corollary~\ref{corollary:m_nd} $\mu(\hat{w}, \calL_R) = m^*$ in the end of phase 2.
Thus, Algorithm~\ref{alg:coalition_manip_woman} will find a successful manipulation.
\end{proof}

Next, we show that the highest score of a man from $D^{nd}$, when there are $|R|$ manipulators, is not much higher than the average score of the men in $D^{nd}$, when there are $|R|-2$ manipulators. We thus first show that the scores of men in $D^{nd}$ according to $\calF(\calL \cup \calL_{R-2})$ and  $\calF(\calL \cup \calL_{R-1})$ are almost $1$-dense, as captured by the following definition. 

\begin{definition}
\label{def:almost_1_dense}
A finite non-empty set of integers $B$ is called almost $1$-dense if, when sorting the set in a non-increasing order $b_1 \geq b_2 \geq \dots \geq b_i$ (such that $ \set{b_1, \dots, b_i}=B$), there exists at most one index, $j$, such that $1 \leq j \leq i-1$, $b_{j+1} = b_j - 2$ and $\forall t \neq j, 1 \leq t \leq i-1$, $b_{t+1} \geq b_t - 1$ holds.
\end{definition}

\begin{lemma} \label{lemma:subset_almost_1_dense}
Let $D^{nd} = \bigcup_{0 \leq s} D^{nd}_s$, as before. Then for all $s \geq 1$ and $d \in D^{nd}_s \setminus D^{nd}_{s-1}$ there exist $d' \in D^{nd}_{s-1}, X \subseteq B^{nd}$ and $j, 0 \leq j \leq |R|$, s.t. $\set{s_j(d), s_j(d')} \cup s_j(X)$ is almost $1$-dense.
\end{lemma}

\begin{proof}
Let $s \geq 1$ and $d \in D^{nd}_s \setminus D^{nd}_{s-1}$. By definition, there exist $d' \in D^{nd}_{s-1}$ and a minimal $j, 0 \leq j \leq |R|$, such that $d$ was ranked below $d'$ in stage $j$, that is $s(d, \ell_j) < s(d', \ell_j)$. We distinguish between two cases:
    case $1$: $j > 1$. In this case $d$ was ranked above $d'$ in stage $j-1$. So we have:
\begin{equation} \label{eq:stage_j-1}
    s_{j-1}(d) \geq s_{j-1}(d')
\end{equation}

\begin{equation} \label{eq:stage_j-2}
    s_{j-2}(d) \leq s_{j-2}(d')
\end{equation}
$d$ was ranked above $d'$ in stage $j-1$, and hence $s(d, \ell_{j-1}) > s(d', \ell_{j-1})$. 
Denote $g = s(d, \ell_{j-1}) - s(d', \ell_{j-1})$ - 1. We distinguish into two cases:

$\bullet$ Let $d' = d_0, d_1, ..., d_{g+1} = d$ be the men that got in stage $j-1$ the points $s(d', \ell_{j-1}), s(d', \ell_{j-1}) + 1, ..., s(d', \ell_{j-1}) + g + 1$, respectively. In this case $m_{nd}$ was not placed between the men $\set{d_0, d_1, ..., d_{g+1}}$, but higher than $d$ or lower than $d'$ in stage $j-1$.
This case corresponds to the $1$-dense case as proved in \citet{zuckerman2009algorithms}. Since $1$-dense is almost $1$-dense by definition. 

$\bullet$ Let $d' = d_0, d_1, ..., d_g = d$ be the men that got in stage $j-1$ the points $s(d', \ell_{j-1}), s(d', \ell_{j-1}) + 1, ..., s(d', \ell_{j-1}) + g + 1$, respectively. In this case $m_{nd}$ was placed between the men $\set{d_0, d_1, ..., d_g}$, but higher than $d'$ and lower than $d$ in stage $j-1$.
Our purpose is to show that $\set{s_{j-1}(d_0), ..., s_{j-1}(d_g)}$ is almost $1$- dense. By definition of the algorithm, 
\begin{equation} \label{eq:seq_stage_j-2}
    s_{j-2}(d_0) \geq s_{j-2}(d_1) \geq ... \geq s_{j-2}(d_g)    
\end{equation}
Let $m_{nd}$ be placed between $d_i$ and $d_{i+1}$, $i \in \set{0, ..., g}$. That is, $s(d_i, \ell_{j-1}) + 2 = s(d_{i+1}, \ell_{j-1})$. Denote $u_t = s_{j-2}(d_t) + s(d', \ell_{j-1})$ for $0 \leq t \leq g$. Then
\begin{equation} \label{eq:dt_score_stage_j-1_prev_i}
    \forall t, 0 \leq t \leq i, s_{j-1}(d_t) = u_t + t.
\end{equation}

\begin{equation} \label{eq:dt_score_stage_j-1_after_i}
    \forall t, i + 1 \leq t \leq g, s_{j-1}(d_t) = u_t + t + 1.
\end{equation}

We need to show that $\set{s_{j-1}(d_t) | 0 \leq t \leq g}$ is almost $1$-dense. It is enough to show that:

\begin{enumerate}
    \item For all $t$, $0 \leq t < i$, if $u_t + t < u_0$, then there exists $t'$, $t < t' \leq i$ s.t. $u_t + t < u_{t'} + t' \leq u_t + t + 1$
    \item For all $t$, $i+1 \leq t < g$, if $u_t + t + 1 < u_0$, then there exists $t'$, $t < t' \leq g$ s.t. $u_t + t < u_{t'} + t' \leq u_t + t + 1$
    \item For all $t$, $0 < t \leq i$, if $u_t + t > u_0$, then there exists $t'$, $0 \leq t' < t$ s.t. $u_t + t - 1 \leq u_{t'} + t' < u_t + t$
    \item For all $t$, $i+1 < t \leq g$, if $u_t + t + 1 > u_0$, then there exists $t'$, $i+1 \leq t' < t$ s.t. $u_t + t - 1 \leq u_{t'} + t' < u_t + t$
    \item  For $t = i$ and for $y = i + 1$ that either $u_t + t < u_0$ or $u_y + y + 1 > u_0$. There are $1$ of $2$ possible cases:
    \begin{enumerate}
        \item If $u_t + t < u_0$, then there exists $t'$, $t < t' \leq g$ s.t. $u_t + t < u_{t'} + t' + 1 = u_t + t + 2$ and, if $u_y + y + 1 > u_0$, then there exists $y'$, $0 \leq y' < y$ s.t. $u_y + y + 1 - 1 \leq u_{y'} + y' < u_y + y + 1$
        \item If $u_t + t < u_0$, then there exists $t'$, $t < t' \leq g$ s.t. $u_t + t < u_{t'} + t' + 1 = u_t + t + 1$ and, if $u_y + y + 1 > u_0$, then there exists $y'$, $0 \leq y' < y$ s.t. $u_y + y + 1 - 2 \leq u_{y'} + y' < u_y + y + 1$
    \end{enumerate}

\end{enumerate}

Proof of $1$: From (\ref{eq:seq_stage_j-2}) we get
\begin{equation} \label{eq:seq_ut}
    u_0 \geq ... \geq u_g
\end{equation}

Also from (\ref{eq:stage_j-1}) and (\ref{eq:dt_score_stage_j-1_prev_i}) we have $u_0 \leq u_g + g + 1$. 
Let $0 \leq t < i$ s.t. $u_t + t < u_0$. Let us consider the sequence $u_t + t, u_{t+1} + t +1, ..., u_i + i$. 
Since $u_t + t < u_0 \leq u_i + i$, it follows that there is a minimal index $t'$, $t < t' \leq i$ s.t. $u_t + t < u_{t'} + t'$.
Then $u_{t'-1}+t'-1 \leq u_t + t$, and thus
\begin{equation} \label{eq:t<i_ut'-1_and_ut}
    u_{t'-1}+t' \leq u_t + t + 1.
\end{equation}
From (\ref{eq:seq_ut}) $u_{t'} \leq u_{t'-1}$, and then
\begin{equation} \label{eq:t<i_ut'-1_and_ut'}
    u_{t'} + t' \leq u_{t'-1} + t'.
\end{equation}
Combining (\ref{eq:t<i_ut'-1_and_ut}) and (\ref{eq:t<i_ut'-1_and_ut'}) together, we get $u_{t'} + t' \leq u_t + t + 1$. This concludes the proof of $1.$ for $0 \leq t < i$.

Let $i+1 \leq t < g$ s.t.  $u_t + t + 1 < u_0$. Let us consider the sequence $u_t + t + 1, u_{t+1} + t +2, ..., u_g + g + 1$.
Since $u_t + t + 1 < u_0 \leq u_g + g + 1$, it follows that there is a minimal index $t'$, $t < t' \leq g$ s.t. $u_t + t + 1 < u_{t'} + t' + 1$.
Then 
\begin{equation} \label{eq:t>i_ut'-1_and_ut}
    u_{t'-1}+t' \leq u_t + t + 1.
\end{equation}
From (\ref{eq:seq_ut}) $u_{t'} \leq u_{t'-1}$, and then
\begin{equation} \label{eq:t>i_ut'-1_and_ut'}
    u_{t'} + t' \leq u_{t'-1} + t'.
\end{equation}
Combining (\ref{eq:t>i_ut'-1_and_ut}) and (\ref{eq:t>i_ut'-1_and_ut'}) together, we get $u_{t'} + t' \leq u_t + t + 1$. This concludes the proof of $1.$

The proof of $2.$ is similar to the proof of $1.$, by replacing the use of Equation~(\ref{eq:dt_score_stage_j-1_after_i}) instead of Equation~(\ref{eq:dt_score_stage_j-1_prev_i}). Rest of the equations are the same.

The proof of $3.$ is analogous to the proof of $1.$, by choosing $t'$ to be the maximal index such that $u_{t'} + t' < u_t + t$.

The proof of $4.$ is analogous to the proof of $2.$, by choosing $t'$ to be the maximal index such that $u_{t'} + t' < u_t + t$.

The proof of $5.$ considers the position of $m_{nd}$ in $\ell_j$ between $d_i$ and $d_{i+1}$.
The proof of $5.(a)$, 
for the first part let us consider the sequence $u_t + t, u_{t+1} + t + 2, ..., u_g + g + 1$. 
Since $u_t + t < u_0 \leq u_g + g + 1$, it follows that there is a minimal index $t'$, $t < t' \leq g$ s.t. $u_t + t < u_{t'} + t' + 1$.
Then $u_{t'-1}+t' \leq u_t + t$, and thus we conclude with the same equation as Equation~\ref{eq:t<i_ut'-1_and_ut}.

From (\ref{eq:seq_ut}) $u_{t'} \leq u_{t'-1}$, and then we have again Equation~\ref{eq:t<i_ut'-1_and_ut'}

Combining (\ref{eq:t<i_ut'-1_and_ut}) and (\ref{eq:t<i_ut'-1_and_ut'}) together, we get $u_{t'} + t' < u_t + t + 1$. 
The second part of the proof, is of the proof of $3$. This concludes the proof of $5.(a)$

The proof of $5.(b)$ the first part is of the proof of $1$.
The second part is analogous to $5.(a)$, by choosing $t'$ to be the maximal index such that $u_{t'} + t' < u_t + t$.

Case $2$: $j=1$. We proceed by essentially reducing this case to Case $1$. In Case $2$ we have that $s \geq 2$, because otherwise, if $s=1$, then $d'\in D^{nd}_0$: therefore $s_0(d) \geq s_0(d')=\max_{b\in B^{nd}}\set{s_0(b)}\rightarrow d \in D^{nd}_0$, a contradiction. $d' \notin D^{nd}_{s-2}$, because otherwise, by definition, $d \in D^{nd}_{s-1}$. Therefore there exists $d'' \in D^{nd}_{s-2}$ s.t. $d'$ was ranked below $d''$ in some stage $j'$, i.e., $s_{j'-1}(d) \leq s_{j'-1}(d'')$. By combining the last arguments, we get that $s_{j'-1}(d) \leq s_{j'-1}(d')$.
Let $j_0$ be minimal s.t. $s_{j_0}(d) \geq s_{j_0}(d')$. As in stage $1$, $d$ was ranked below $d'$, it holds that $s_0(d) \geq s_0(d')$. If $j_0 = 0$ then $s_0(d) = s_0(d')$, hence $\set{s_0(d), s_0(d')}$ is $0$-dense, and in particular $1$-dense and almost $1$-dense.

Otherwise $(j_0 \neq 0)$ it holds that $s_{j_0-1}(d) > s_{j_0-1}(d')$ by the minimality of $j_0$. So, we have that
\begin{equation}
    s_{j_0}(d') \geq s_{j_0}(d)
\end{equation}
and 
\begin{equation}
    s_{j_0-1}(d') \leq s_{j_0-1}(d)
\end{equation}
These two inequalities are analogous to (\ref{eq:stage_j-1}) and (\ref{eq:stage_j-2}), with $j-1$ replaced by $j_0$, and the roles of $d$ and $d'$ exchanged. From this point we can proceed exactly as in Case $1$, keeping in mind these cosmetic changes.
\end{proof}

\begin{lemma}\label{lemma:set_almost_1_dense}
    Let $H \subseteq B$ such that $s_j(H)$ is almost $1$-dense for some $0 \leq j \leq |R|$. Then there exists $H'$, $H \subseteq H' \subseteq B$ such that $s_{|R|}(H')$ is almost $1$-dense.
\end{lemma}

\begin{proof}
    The proof is the same as in~\cite{zuckerman2009algorithms}, proof of Lemma 3.11.
\end{proof}

\begin{lemma}\label{lemma:Dnd_almost_1_dense}
    The set $D^{nd}$ is almost $1$-dense.
\end{lemma}
\begin{proof}
    The proof is the similar to the one in~\cite{zuckerman2009algorithms}, proof of Lemma 3.12.
    We Note that here for two almost $1$-dense sets $A, A'$, if $A \cap A' \neq \emptyset$, then $A \cup A'$ is also almost $1$-dense, since the almost $1$-dense occurs after a manipulator places $m_{nd}$ between to men from $B$. Since this can happen only once in each manipulator. Clearly, either the set $A$ or the set $A'$ is $1$-dense. Otherwise $A \cap A'$ is already almost $1$-dense, and this concludes, that $A \cup A'$ is almost $1$-dense.
\end{proof}

\begin{lemma} \label{lemma:woman_average_fromMax}
Assume that the input to Algorithm~\ref{alg:coalition_manip_woman} consists of  $|R|$ manipulators. If there exists $m_{nd} \in M \setminus \set{m^*}$ such that the condition in Line~\ref{line:pass_phase_1} does not hold, then $\max_{d \in D^{nd}}\set{s(d, \calL, \calL_R)} \leq q(D^{nd}) + \frac{3|D^{nd}|}{2} + 1$.
\end{lemma}

\begin{proof}
Recall that $q(D^{nd})$ is the average score of men in $D^{nd}$ when there are $|R|-2$ manipulators.
Let $q(D^{nd},R)$ be the average score of men in $D^{nd}$ when there are $|R|$ manipulators. That is, $q(D^{nd},R) = {\frac{1}{|D^{nd}|}}\sum_{d \in D^{nd}}s(d, \calL, \calL_R)$. We define $q(D^{nd},R-1)$ similarly.
According to Lemma~\ref{lemma:Dnd_almost_1_dense}, the set $\set{s_{|R|}(d):d \in D^{nd}}$ is almost $1$-dense, and therefore $\max_{d \in D^{nd}}\set{s(d, \calL, \calL_R)} \leq q(D^{nd},R) + \frac{|D^{nd}|}{2}$. According to Lemma~\ref{lemma:D_low}, in each stage each man from $D^{nd}$ is placed among the $|D^{nd}|+1$ lowest positions. Therefore, $q(D^{nd},R) + \frac{|D^{nd}|}{2} \leq q(D^{nd},R-1) + |D^{nd}| + \frac{1}{2}$. Using this Lemma again we get $q(D^{nd},R-1) + |D^{nd}| + \frac{1}{2} \leq q(D^{nd},R-2) +  \frac{3|D^{nd}|}{2} + 1$
\end{proof}

Finally, we show that by adding two manipulators, we can increase the score of $m_{nd}$ by at least $2k-3$.
\begin{lemma} \label{lemma:additional_manipulator}
Assume that the input to Algorithm~\ref{alg:coalition_manip_woman} consists of  $|R|-2$ manipulators. If there exists $m_{nd} \in M \setminus \set{m^*}$ such that the condition in Line~\ref{line:gap_too_big} does not hold, then when adding $2$ additional manipulators, $s(m_{nd}, \calL, \calL_{R-2}) + 2k - 3 \leq s(m_{nd}, \calL, \calL_R)$.
\end{lemma}
\begin{proof}
Clearly, if in line~\ref{line:gap_too_big}, $(|R|-2)(k-1) \geq gap$, then $|R|(k-1) \geq gap$.
Let $t = gap + \lceil (|R| - gap) / 2 \rceil$  and let $t_{-2} = gap + \lceil (|R| - 2 - gap) / 2 \rceil$. Note that $t-t_{-2} = 1$.
There are three possible cases: 

\begin{itemize}
    \item $|R| - 2 \geq gap$ and $|R| > gap$. 

When there are $|R|$ manipulators, according to Algorithm~\ref{alg:coalition_manip_woman},  
$s(m_{nd}, \calL, \calL_R) = s(m_{nd}, \calL) + (k-2)\max(t,0) + (k-1)(|R| - \max(t,0))$. 
Similarly, when there are $|R| - 2$ manipulators, according to Algorithm~\ref{alg:coalition_manip_woman},
$s(m_{nd}, \calL, \calL_{R-2}) = s(m_{nd}, \calL) + (k-2)\max(t_{-2},0) + (k-1)(|R| - 2 - \max(t_{-2},0))$. 

Consider the following three possible cases: 
\begin{enumerate}
    \item $t_{-2} \geq 0$ and $t \geq 0$.
    
    That is, $s(m_{nd}, \calL, \calL_R) = s(m_{nd}, \calL) + (k-2)t + (k-1)(|R| - t) = s(m_{nd}, \calL) -t + (k-1)|R|$.
    Similarly, $s(m_{nd}, \calL, \calL_{R-2}) = s(m_{nd}, \calL) + (k-2)t_{-2} + (k-1)(|R| - 2 - t_{-2}) = s(m_{nd}, \calL) -t_{-2} + (k-1)|R|-2(k-1)$.
    Therefore, 
    $s(m_{nd}, \calL, \calL_R) - s(m_{nd}, \calL, \calL_{R-2}) = -1 + 2(k-1) = 2k - 3$.
    
    \item $t_{-2} < 0$ and $t \geq 0$.
    
    That is, $max(t_{-2},0) = 0$, and thus $s(m_{nd}, \calL, \calL_{R-2}) = s(m_{nd}, \calL) + (k-1)(|R| - 2)$.
    Therefore, $s(m_{nd}, \calL, \calL_R) - s(m_{nd}, \calL, \calL_{R-2}) = -t + 2(k-1)$. Recall that $t-t_{-2}=1$, and since we assume that $t_{-2} < 0$ and $t \geq 0$, it must be that $t=0$. Therefore, $s(m_{nd}, \calL, \calL_R) - s(m_{nd}, \calL, \calL_{R-2}) = 2(k-1) \geq 2k-3$.
    
    \item $t_{-2} < 0$ and $t < 0$.
    
    That is, $max(t,0) = max(t_{-2},0) = 0$, and thus $s(m_{nd}, \calL, \calL_R) = s(m_{nd}, \calL) + (k-1)|R|$.
    Therefore, $s(m_{nd}, \calL, \calL_R) - s(m_{nd}, \calL, \calL_{R-2}) = 2(k-1) \geq 2k-3$.
\end{enumerate}

\item $|R| - 2 < gap$ and $|R| \geq gap$.

That is, $gap = |R|$ or $gap = |R| - 1$. Therefore, $t=|R|$.
Recall that $s(m_{nd}, \calL, \calL_R) = s(m_{nd}, \calL) + (k-2)\max(t,0) + (k-1)(|R| - \max(t,0))$. That is, $s(m_{nd}, \calL, \calL_R) = s(m_{nd}, \calL) + (k-2)|R|$.
In addition, according to Algorithm~\ref{alg:coalition_manip_woman}, $s(m_{nd}, \calL, \calL_{R-2}) = s(m_{nd}, \calL) + (\lceil \frac{(|R| - 2)(k-1) - gap}{|R| - 2} \rceil)(((|R| - 2)(k-1) - gap) \mod (|R| - 2)) + (\lfloor \frac{(|R| - 2)(k-1) - gap}{|R|-2} \rfloor)(|R| - 2 - (((|R| - 2)(k-1) - gap) \mod (|R| - 2)))$.
Clearly, $(\lceil \frac{(|R| - 2)(k-1) - gap}{|R| - 2} \rceil)(((|R| - 2)(k-1) - gap) \mod (|R| - 2)) + (\lfloor \frac{(|R| - 2)(k-1) - gap}{|R|-2} \rfloor)(|R| - 2 - (((|R| - 2)(k-1) - gap) \mod (|R| - 2))) = (|R| - 2)(k-1) - gap$. Therefore, $s(m_{nd}, \calL, \calL_{R-2}) = s(m_{nd}, \calL) + (|R| - 2)(k-1) - gap$.

Therefore, $s(m_{nd}, \calL, \calL_R) - s(m_{nd}, \calL, \calL_{R-2}) = (k-2)|R| - (|R| - 2)(k-1) + gap = -|R| + 2k - 2 + gap$. 
Since $gap = |R|$ or $gap = |R| - 1$, then $s(m_{nd}, \calL, \calL_R) - s(m_{nd}, \calL, \calL_{R-2}) \geq 2k - 3$.

\item $|R|-2 < gap$ and $|R| < gap$.

According to Algorithm~\ref{alg:coalition_manip_woman}, $s(m_{nd}, \calL, \calL_R) = s(m_{nd}, \calL) + (\lceil \frac{(|R|)(k-1) - gap}{|R|} \rceil)((|R| (k-1) - gap) \mod |R|) + (\lfloor \frac{|R| (k-1) - gap}{|R|} \rfloor)(|R| - (|R| (k-1) - gap) \mod |R|))$. Clearly, $(\lceil \frac{(|R|) (k-1) - gap}{|R|} \rceil)((|R| (k-1) - gap) \mod |R|) + (\lfloor \frac{|R| (k-1) - gap}{|R|} \rfloor)(|R| - (|R| (k-1) - gap) \mod |R|)) = (|R|) (k-1) - gap$. Therefore, $s(m_{nd}, \calL, \calL_R) = s(m_{nd}, \calL) + (|R|)(k-1) - gap$.

In addition, according to Algorithm~\ref{alg:coalition_manip_woman}, $s(m_{nd}, \calL, \calL_{R-2}) = s(m_{nd}, \calL) + (\lceil \frac{(|R| - 2) (k-1) - gap}{|R| - 2} \rceil)(((|R| - 2) (k-1) - gap) \mod (|R| - 2)) + (\lfloor \frac{(|R| - 2) (k-1) - gap}{|R|-2} \rfloor)(|R| - 2 - (((|R| - 2) (k-1) - gap) \mod (|R| - 2)))$.
Clearly, $(\lceil \frac{(|R| - 2) (k-1) - gap}{|R| - 2} \rceil)(((|R| - 2) (k-1) - gap) \mod (|R| - 2)) + (\lfloor \frac{(|R| - 2) (k-1) - gap}{|R|-2} \rfloor)(|R| - 2 - (((|R| - 2) (k-1) - gap) \mod (|R| - 2))) = (|R| - 2) (k-1) - gap$. Therefore, $s(m_{nd}, \calL, \calL_{R-2}) = s(m_{nd}, \calL) + (|R| - 2) (k-1) - gap$.

Therefore, $s(m_{nd}, \calL, \calL_R) - s(m_{nd}, \calL, \calL_{R-2}) = s(m_{nd}, \calL) + (|R|) (k-1) - gap - s(m_{nd}, \calL) - (|R| - 2) (k-1) + gap = 2k-2 > 2k-3$

\end{itemize}

Overall, $s(m_{nd}, \calL, \calL_R) - s(m_{nd}, \calL, \calL_{R-2}) \geq 2k-3$.
\end{proof}
Now we can prove the theorem.
\begin{proof} [Proof of Theorem~\ref{thm:coalition_manip_woman}]
Clearly, if Algorithm~\ref{alg:coalition_manip_woman} returns a preference profile $\calL_R$, then it is a successful manipulation that will make $m^*$ the match of $\hat{w}$.
Suppose that a preference profile $\calL_T$ exists that makes $m^*$ the match of $\hat{w}$ with $|R|-2$ manipulators. 
Let $m'_{nd} \in o(\hat{w},\calL_T)$ be the second most preferred man according to $\calF(\calL \cup \calL_T)$.
We divide into two cases:
\begin{enumerate}
    \item $B^{nd} = \emptyset$. According to Corollary~\ref{corrolary:successful_phase_1}, in the iteration in which $m_{nd} = m'_{nd}$, the algorithm successfully finishes phase $1$. Since $B^{nd} = \emptyset$, then according to Corollary~\ref{corollary:m_nd}, in the iteration in which $m_{nd} = m'_{nd}$, the algorithm successfully finishes phase $2$. That is, the algorithm will find a preference profile that will make $w^*$ the match of $\hat{m}$ with $|R| - 2$ manipulators.
    \item $B^{nd} \neq \emptyset$. Clearly, a preference profile $\calL_T'$ exists that makes $m^*$ the match of $\hat{w}$ with $|R|$ manipulators, in which $m'_{nd} \in o(\hat{w},\calL_T')$ and it is the second most preferred man according to $\calF(\calL \cup \calL_T')$.
    %
    %
    According to Corollary~\ref{corrolary:successful_phase_1}, in the iteration in which $m_{nd} = m'_{nd}$, the algorithm successfully finishes phase $1$.
    Clearly, when there are $|R|$ manipulators, in the iteration in which $m_{nd} = m'_{nd}$, the algorithm will still successfully finish phase $1$.
    According to Lemma~\ref{lemma:woman_average_fromMax},
    $\max_{d \in D^{nd}}\set{s(d, \calL, \calL_R)} \leq q(D^{nd}) + \frac{3|D^{nd}|}{2} + 1$.
    %
    According to Lemma~\ref{lemma:woman_times_algorithm_returns_false},
    $
    q(D^{nd}) + \frac{3|D^{nd}|}{2} + 1 \leq s(m_{nd}, \calL, \calL_{R-2}) + \frac{3|D^{nd}|}{2} + 1.
    $
    Since $|D^{nd}| \leq k - 2$, then
    $
    s(m_{nd}, \calL, \calL_{R-2}) + \frac{3|D^{nd}|}{2} + 1 \leq s(m_{nd}, \calL, \calL_{R-2}) + \frac{3k}{2} - 2$.
    According to Lemma~\ref{lemma:additional_manipulator}, $s(m_{nd}, \calL, \calL_{R-2}) + \frac{3k}{2} - 2 \leq s(m_{nd}, \calL, \calL_{R}) + 3 - 2k + \frac{3k}{2} - 2$. Finally, since $B^{nd} \neq \emptyset$, then $D^{nd} \neq \emptyset$, therefore $k \geq 3$.
    Thus,
    $s(m_{nd}, \calL, \calL_{R}) + 3 - 2k + \frac{3k}{2} - 2 < s(m_{nd}, \calL, \calL_{R})$.
    Overall, 
    $\max_{d \in D^{nd}}\set{s(d, \calL, \calL_R)} < s(m_{nd}, \calL, \calL_R)$, and according to  Lemma~\ref{lemma:woman_times_algorithm_succeeds} the algorithm will find a preference profile that will make $w^*$ the match of $\hat{m}$ with $|R|$ manipulators.
    \end{enumerate}
\end{proof}
\end{document}